\newtheorem{thm}{Theorem}
\newtheorem{crl}{Corollary}
\newtheorem{prop}{Proposition}
\newtheorem{defi}{Definition}
\newtheorem{rem}{Remark}
\begin{document}

\title{Multi-asset optimal execution and statistical arbitrage strategies under Ornstein-Uhlenbeck dynamics}

\author{Philippe \textsc{Bergault}\footnote{Université Paris 1 Panthéon-Sorbonne, Centre d'Economie de la Sorbonne, 106 Boulevard de l'Hôpital, 75642 Paris Cedex 13, France, philippe.bergault@etu.univ-paris1.fr.} \and Fayçal \textsc{Drissi}\footnote{Université Paris 1 Panthéon-Sorbonne, Centre d'Economie de la Sorbonne, 106 Boulevard de l'Hôpital, 75642 Paris Cedex 13, France, faycal.drissi@etu.univ-paris1.fr.} \and Olivier \textsc{Guéant}\footnote{Université Paris 1 Panthéon-Sorbonne, Centre d'Economie de la Sorbonne, 106 Boulevard de l'Hôpital, 75642 Paris Cedex 13, France, olivier.gueant@univ-paris1.fr. Member of the Louis Bachelier Fellows. \textit{Corresponding author.}}}
\date{}
\maketitle

\begin{abstract}
In recent years, academics, regulators, and market practitioners have increasingly addressed liquidity issues. Amongst the numerous problems addressed, the optimal execution of large orders is probably the one that has attracted the most research works, mainly in the case of single-asset portfolios. In practice, however, optimal execution problems often involve large portfolios comprising numerous assets, and models should consequently account for risks at the portfolio level. In this paper, we address multi-asset optimal execution in a model where prices have multivariate Ornstein-Uhlenbeck dynamics and where the agent maximizes the expected (exponential) utility of her PnL. We use the tools of stochastic optimal control and simplify the initial multidimensional Hamilton-Jacobi-Bellman equation into a system of ordinary differential equations (ODEs) involving a Matrix Riccati ODE for which classical existence theorems do not apply. By using \textit{a priori} estimates obtained thanks to optimal control tools, we nevertheless prove an existence and uniqueness result for the latter ODE, and then deduce a verification theorem that provides a rigorous solution to the execution problem. Using examples based on data from the foreign exchange and stock markets, we eventually illustrate our results and discuss their implications for both optimal execution and statistical arbitrage.    
\end{abstract}

\vspace{8mm}

\setlength\parindent{0pt}

\textbf{Key words:} Optimal execution, Statistical arbitrage, Stochastic optimal control, Riccati equations.\\

\vspace{5mm}

\section{Introduction}

When executing large blocks of assets, financial agents need to control their overall trading costs by finding the optimal balance between trading rapidly to mitigate market price risk and trading slowly to minimize execution costs and market impact. Building on the first rigorous approaches introduced by Bertsimas and Lo in \cite{bertsimas1998optimal} and Almgren and Chriss in \cite{almgren1999value} and \cite{almgren2001optimal}, many models for the optimal execution of large orders have been proposed in the last two decades. Subsequently, almost all practitioners today slice their large orders into small (child) orders according to optimized trading schedules inspired by the academic literature.\\

The basic Almgren-Chriss model is a discrete-time model where the agent posts market orders (MOs) to maximize a mean-variance objective function. Many extensions of this seminal model have been proposed. Regarding the framework, (Forsyth and Kennedy, \cite{forsyth2012optimal}) examines the use of quadratic variation rather than variance in the objective function, (Schied and Sch{\"o}neborn, \cite{schied2009risk}) uses stochastic optimal control tools to characterize and find optimal strategies for a Von Neumann–Morgenstern investor, and (Guéant, \cite{gueant2015optimal}) provides results for optimal liquidation within a Von Neumann-Morgenstern expected utility framework with general market impact functions and derives subsequent results for block trade pricing. As for the model parameters, (Almgren, \cite{almgren2003optimal}) studies the case of random execution costs, (Almgren, \cite{almgren2009optimal, almgren2012optimal}) addresses stochastic liquidity and volatility, (Lehalle, \cite{lehalle2008rigorous}) discusses how to take into account statistical aspects of the variability of estimators of the main exogenous variables such as volumes or volatilities in the optimization phase, and (Cartea and Jaimungal, \cite{cartea2016incorporating}) provides a closed-form strategy incorporating order flows from all agents. Furthermore, numerous market impact and limit order book (LOB) models have also been studied. For instance, (Lorenz and Schied,~\cite{lorenz2013drift}) investigates the stability of optimal strategies in the presence of transient price impact with exponential decay and general dynamics of a drift in the underlying price process accounting for the price impact of other agents. (Obizhaeva and Wang, \cite{obizhaeva2013optimal}), later generalized in (Alfonsi, Fruth, and Schied, \cite{alfonsi2008constrained}), proposes a single-asset market impact model where price dynamics are derived from a dynamic LOB model with resilience, (Alfonsi and Schied, \cite{alfonsi2010optimal}) derives explicit optimal execution strategies in a discrete-time LOB model with general shape functions and an exponentially decaying price impact, (Gatheral, \cite{gatheral2010no}) uses the no-dynamic-arbitrage principle to address the viability of market impact models, and (Gatheral, Schied, and Slynko, \cite{gatheral2012transient}) obtains explicit optimal strategies with a transient market impact in an expected cost minimization setup. As for order and execution strategy types, the Almgren-Chriss framework focuses on orders of the Implementation Shortfall (IS) type with MOs only. Other execution strategies have been studied in the literature, like Volume-Weighted Average Price (VWAP) orders in (Konishi, \cite{konishi2002optimal}), (Frei and Westray, \cite{frei2015optimal}) and (Gu{\'e}ant and Royer, \cite{gueant2014vwap}), but also Target Close (TC) orders and Percentage of Volume (POV) orders, in (Gu{\'e}ant, \cite{gueant2016financial}). Besides, several models focusing on optimal execution with limit orders have been proposed, as in (Bayraktar and Ludkovski,~\cite{bayraktar2014liquidation}), but also in (Gu{\'e}ant, Lehalle, and Fernandez-Tapia, \cite{gueant2012optimal}) and (Gu{\'e}ant and Lehalle, \cite{gueant2015general}). Regarding the existence of several venues, the case of optimal splitting of orders across different liquidity pools has been addressed in (Laruelle, Lehalle, and Pages, \cite{laruelle2011optimal}), in (Cartea, Jaimungal, and Penalva, \cite{cartea2015algorithmic}), and more recently in (Baldacci and Manziuk, \cite{baldacci2020adaptive}).\\

Another recent and important stream of the optimal execution literature deals with adding predictive signals of future price changes.\footnote{We consider this stream of the literature to be closely related to our topic of multi-asset optimal execution. Indeed, when trading an asset, the dynamics of another asset within or outside the portfolio can be regarded as a predictive signal that can enhance the execution process.} Typical examples of these signals include a drift in asset prices, order book imbalances, forecasts of the future order flow of market participants, and other price-based technical indicators. The usual formalism in the literature with predictive signals is to consider Brownian or Black-Scholes dynamics, along with independent mean-reverting Markov signals. The case of Ornstein-Uhlenbeck-type signals is of special interest as it usually leads to closed-form formulas. For the interested reader, we refer to (Belak and Muhle-Karbe, \cite{belak2018optimal}) where the authors consider optimal execution with general Markov signals and an application to ``target zone models'', and to (Lehalle and Neuman, \cite{lehalle2019incorporating}) and (Neuman and Vo{\ss}, \cite{neuman2020optimal}) in which the authors provide an optimal trading framework incorporating Markov signals and a transient market impact.\\

In practice, operators routinely face the problem of having to execute simultaneously large orders regarding various assets, such as in block trading for funds facing large subscriptions or withdrawals, or when considering multi-asset trades in statistical arbitrage trading strategies. More generally, banks and market makers manage their (il)liquidity and market risk, when it comes to executing trades, in the context of a central risk book; hence the need for multi-asset models. However, in contrast to the single-asset case, the existing literature on the joint execution scheduling of large orders in multiple assets, or a single asset inside a multi-asset portfolio, is rather limited. Besides, most existing papers simply consider correlated Brownian motions when modelling the joint dynamics of prices. The problem of using single-asset models or unrealistic multivariate models for portfolio trading is that the resulting trading curves of individual assets do not balance well execution costs / market impact with price risk at the portfolio or strategy level.\\

The first paper presenting a way to build multi-asset trading curves in an optimized way is (Almgren and Chriss,~\cite{almgren2001optimal}). Almgren and Chriss considered indeed, in an appendix of their seminal paper, a multi-asset extension of their discrete-time model. A few extensions to this model have been proposed since then. (Lehalle, \cite{lehalle2009rigorous}) considers adding an inventory constraint to balance the different portfolio lines during the portfolio execution process. (Schied and Sch{\"o}neborn, \cite{schied2010optimal}) shows that when prices follow Bachelier dynamics, deterministic strategies are optimal for a trader with an exponential utility objective function. In (Cartea, Jaimungal, and Penalva, \cite{cartea2015algorithmic}), the authors use stochastic control tools to derive optimal execution strategies for basic multi-asset trading algorithms such as optimal entry/exit times and cointegration-based statistical arbitrage. (Bismuth, Gu{\'e}ant, and Pu, \cite{bismuth2019portfolio}) addresses optimal portfolio liquidation (along with other problems) by coupling Bayesian learning and stochastic control to derive optimal strategies under uncertainty on model parameters in the Almgren-Chriss framework. Regarding the literature around the addition of predictive signals, (Emschwiller, Petit, and Bouchaud, \cite{emschwiller2021optimal}) extends optimal trading with Markovian predictors to the multi-asset case, with linear trading costs, using a mean-field approach that reduces the problem to a single-asset one. \\

A classical model for the multivariate dynamics of financial variables that goes beyond that of correlated Brownian motions is the multivariate Ornstein-Uhlenbeck (multi-OU) model. It is especially attractive because it is parsimonious, and yet general enough to cover a wide spectrum of multi-dimensional dynamics. Multi-OU dynamics offer indeed a large coverage since particular cases include correlated Brownian motions but also cointegrated dynamics which are heavily used in statistical arbitrage. (Cartea, Gan, and Jaimungal, \cite{cartea2019trading}) is, to our knowledge, the pioneering paper in the use of the multi-OU model for the price dynamics in a multi-asset optimal execution problem. Indeed, the authors proposed an interesting model where the asset prices have multi-OU dynamics and the agent maximizes an objective function given by the expectation of the Profit and Loss (PnL) minus a running penalty related to the instantaneous variance of the portfolio. In their approach, the problem boils down to a system of ODEs involving a Matrix Riccati ODE for which the classical existence theorems related to linear-quadratic control theory apply.\\

In this paper, we propose a model similar to the one in \cite{cartea2019trading}, but where the objective function is of the Von Neumann-Morgenstern type: an expected exponential utility of the PnL.\footnote{Our model accounts therefore for the risk in a different manner than the model presented in \cite{cartea2015algorithmic}. Comparisons are difficult to carry out as risk aversion parameters in the two models have different meanings.} By using classical stochastic optimal control tools we show that the problem boils down to solving a system of ODEs involving a Matrix Riccati ODE. However, unlike what happens in \cite{cartea2019trading}, the use of an expected exponential utility framework to account for the risk leads to a Matrix Riccati ODE for which classical existence theorems do not apply. By using \textit{a priori} estimates obtained thanks to optimal control tools, we nevertheless prove an existence and uniqueness result for the latter ODE and obtain a verification theorem that provides a rigorous solution to the execution problem.\\

The main contribution of this paper is therefore to propose a model for multi-asset portfolio execution under multi-OU price dynamics in an expected utility framework that accounts for the overall risk associated with the execution process. We focus on the problem where an agent is in charge of unwinding a large portfolio, but also illustrate the use of our results for multi-asset statistical arbitrage purposes.\\

The remainder of this paper is organized as follows. In Section 2 we present the optimal execution problem in the form of a stochastic optimal control problem and show that solving the associated Hamilton-Jacobi-Bellman (HJB) equation boils down to solving a system of ODEs involving a Matrix Riccati ODE. We then prove a global existence result for that ODE and eventually provide a solution to the initial stochastic optimal control problem thanks to a verification argument. In Section 3, we then illustrate our results on both real data, from the foreign exchange and stock markets, and simulated data. Our examples focus on optimal liquidation but we also illustrate and discuss the use of our results for building statistical arbitrage strategies. The core of the paper is followed by two appendices: one dedicated to the special case where the multi-OU dynamics reduce to simple correlated Brownian dynamics and another dedicated to some form of limit case where execution costs and terminal penalty are ignored -- that limit case being useful to obtain \textit{a priori} estimates for our general problem.\\

\section{The optimal liquidation problem}

\subsection{Modelling framework and notations}
In this paper, we consider a filtered probability space $\left(\Omega, \mathcal F, \mathbb P; \mathbb F = (\mathcal F_t)_{t \in [0,T]} \right)$ satisfying the usual conditions. We assume this probability space to be large enough to support all the processes we introduce.\\

We consider a market with $d \in \mathbb N^*$ assets,\footnote{We denote by $\mathbb N^*$ the set $\mathbb N^* := \mathbb N \backslash \{0\}$ of positive integers.} and a trader wishing to liquidate her portfolio over a period of time $[0,T]$, with $T>0$. Her inventory process\footnote{The superscript $^\intercal$ designates the transpose operator. It transforms here a line vector into a column vector.} $(q_t)_{t \in [0,T]} = \left(q^1_t, \ldots, q^d_t \right)^\intercal _{t \in [0,T]}$ evolves as
\begin{align}
\label{qProcess}
    dq_t = v_t dt,
\end{align}
with $q_0 \in \mathbb R^d$ given, where $(v_t)_{t \in [0,T]}= (v^1_t, \ldots, v^d_t)^\intercal _{t \in [0,T]}$ represents the trading rate of the trader for each asset.\\

The fundamental prices of the $d$ assets are modelled as a $d$-dimensional Ornstein-Uhlenbeck process $(S_t)_{t\in [0,T]} = \left(S^1_t, \ldots, S^d_t \right)^\intercal_{t\in [0,T]}$:
\begin{align}
\label{PriceProces}
    dS_t &  = R(\overline S - S_t) dt + V dW_t, 
\end{align}
and we introduce the market price process 
$(\tilde S_t)_{t\in [0,T]} = \left(\tilde S^1_t, \ldots, \tilde S^d_t \right)^\intercal_{t\in [0,T]}$ with dynamics:
\begin{align}
\label{MPriceProces}
    d\tilde S_t &  = dS_t + Kv_t dt, 
\end{align}
with  $S_0= \tilde S_0 \in {\mathbb R}^d $ given, where $\overline S \in {\mathbb R}^d $, $R  \in \mathcal M_d (\mathbb R)$,  $V \in \mathcal M_{d,k}(\mathbb R)$, $K \in \mathcal S_d(\mathbb R)$,\footnote{We denote by $\mathcal M_{d,k}(\mathbb R)$ the set of $d\times k$ real matrices and by $\mathcal M_d (\mathbb R) := \mathcal M_{d,d}(\mathbb R)$ the set of $d \times d$ real square matrices. The set of real symmetric $d \times d$ matrices is denoted by $\mathcal S_d (\mathbb R)$.} and $(W_t)_{t \in [0,T]} = \left(W^1_t, \ldots, W^k_t \right)^\intercal_{t \in [0,T]}$ is a $k$-dimensional standard Brownian motion (with independent coordinates) for some $k \in \mathbb{N}^*$. In these dynamics, the matrix $R$ steers the deterministic part of the process, $\overline S$ represents the unconditional long-term expectation of $(S_t)_{t\in [0,T]}$, and $V$ drives the dispersion (for what follows, we introduce $\Sigma = VV^\intercal$ the covariation matrix of the process). The matrix $K$ represents the linear permanent impact the agent has on the prices.\footnote{It is assumed symmetric to avoid price manipulation.}  More precisely, since $$d\tilde S_t  = dS_t + Kv_t dt = R(\overline S - S_t) dt + Kv_t dt + V dW_t = R(\overline S + K(q_t-q_0) - \tilde S_t) dt + Kv_t dt + V dW_t,$$ trading impacts both current market prices and long-term expectations.\\  

Ornstein-Uhlenbeck processes are well suited when prices exhibit mean reversion and/or when there exist one or several linear combinations of asset prices that are stationary. In the latter case, we say that the assets involved in the linear combinations are cointegrated (a situation often encountered in statistical arbitrage). For more details on cointegration in continuous time, we refer to (Comte \cite{comte1999discrete}).\\

Finally, the process $(\tilde X_t)_{t \in [0,T] }$ modelling the trader's cash account has the dynamics
\begin{align}
\label{cashProcess1}
    d\tilde X_t = - v_t^\intercal \tilde  S_t dt - L(v_t)dt,
\end{align}
with $\tilde X_0 \in \mathbb R$ given, where $L: \mathbb R^d \rightarrow \mathbb R_+$ is a function representing the temporary market impact of trades and/or the execution costs incurred by the trader (see (Gu{\'e}ant, \cite{gueant2016financial}) for an introduction to this type of models). In this paper, we only consider the case where $L$ is a positive-definite quadratic form, i.e.\footnote{The subset of positive-definite and positive semi-definite matrices of $\mathcal S_d (\mathbb R)$ are respectively denoted by $\mathcal S^{++}_d (\mathbb R)$ and $\mathcal S^{+}_d (\mathbb R)$.} $$L(v) = v^\intercal \eta v \quad \text{with } \quad \eta \in \mathcal S^{++}_d (\mathbb R).$$

The trader aims at maximizing the expected utility of her wealth at the end of the trading window $[0, T]$. This wealth is the sum of the amount $\tilde X_T$ on the cash account at time $T$ and the value of the remaining inventory evaluated here at $q_T^\intercal \tilde S_T - \tilde \ell(q_T)$, where the discount term $\tilde \ell(q_T)$ applied to the Mark-to-Market (MtM) value of the remaining assets ($q_T^\intercal \tilde S_T$) penalizes
any non-zero terminal position.\footnote{This penalization term relaxes the hard constraint that imposes $q_T = 0$ in some liquidation problems. This relaxation enables to use classical tools of optimal control while the problem with hard constraint (not addressed in this paper) features a singular boundary condition that makes the problem more difficult to address mathematically.} In this paper, we only consider the case where $\tilde \ell$ is a positive-definite quadratic form, i.e. $\tilde \ell(q) = q^\intercal \tilde \Gamma q$ with $\tilde \Gamma \in \mathcal S_d^{++} (\mathbb R)$ (see below for a stronger assumption on $\tilde \Gamma$).\\

To define the set of admissible controls $\mathcal A$, we first introduce a notion of ``linear growth'' relevant in our context.
\begin{defi}
\label{defiLinearGrowth}
Let $t \in [0,T].$ An $\mathbb R^d$-valued, $\mathbb F$-adapted process $(\zeta_s)_{s \in [t,T]}$ is said to satisfy a linear growth condition on $[t,T]$ with respect to $(S_s)_{s \in [t,T]}$ if there exists a constant $C_{t,T}>0$ such that for all $s \in [t,T]$,
$$\| \zeta_s \| \le C_{t,T} \left(1 + \underset{\tau \in [t,s]}{\sup} \| S_\tau \| \right)\vspace{-0.2cm}$$
almost surely.\footnote{In all this paper, $\|.\|$ denotes a fixed norm on $\mathbb R^d$ (for instance, the Euclidean norm).}
\end{defi}
We then define for all $t \in [0,T]$:
\begin{align}
\label{AdmissibleSet_t}
\mathcal A_t = \left\{ (v_s)_{s \in [t,T]},\ \mathbb R^d\textrm{-valued},\ \mathbb F\textrm{-adapted, satisfying a linear growth condition with respect to } (S_s)_{s \in [t,T]}  \right\},    
\end{align}
and take $\mathcal A := \mathcal A_0.$\footnote{We restrict our analysis to linear growth strategies for mathematical convenience, but we expect the candidate control to be optimal among a larger class of processes.}\\

Mathematically, the trader therefore wants to solve the dynamic optimization problem
\begin{align}
\label{problem1}
\underset{v \in \mathcal A}{\sup}\ \mathbb E \left[ - e^{-\gamma \left( \tilde X_T + q_T^\intercal \tilde S_T - \tilde \ell(q_T) \right)} \right],
\end{align}
where $\gamma>0$ is the absolute risk aversion parameter of the trader.\\

Notice that
\begin{align*}
\tilde X_T + q_T^\intercal \tilde S_T - \tilde \ell(q_T) &= \tilde X_0 + q_0^\intercal \tilde S_0 + \int_0^T q_t^\intercal d\tilde S_t  - \int_0^T L(v_t)dt- \tilde \ell(q_T)\\
&= X_0 + q_0^\intercal S_0 + \int_0^T q_t^\intercal d S_t + \int_0^T q_t^\intercal   Kv_t dt - \int_0^T L(v_t)dt - \tilde \ell(q_T) \\
&= X_T + q_T ^\intercal S_T - \tilde \ell(q_T) + \frac 12 q_T^\intercal Kq_T - \frac 12 q_0^\intercal Kq_0,
\end{align*}
where $X_0 = \tilde X_0$ and the process $(X_t)_{t \in [0,T] }$ has dynamics
\begin{align}
\label{cashProcess2}
    d X_t = - v_t^\intercal  S_t dt - L(v_t)dt.
\end{align}

Let us now define the penalty function $ \ell:\mathbb R^d \rightarrow \mathbb R$ by:
$$\ell(q) = \tilde \ell (q) - \frac 12 q^\intercal K q = q^\intercal \tilde \Gamma q - \frac 12 q^\intercal K q  \quad \forall q \in \mathbb R^d.$$
In what follows, we assume that $\ell$ is a positive semi-definite quadratic form, i.e. $\Gamma = \tilde \Gamma  - \frac 12 K \in \mathcal S_d^{+}(\mathbb R)$.

\begin{rem}
The assumption on $\tilde \Gamma$ is not restrictive as in practice, $\tilde \Gamma$ (and therefore $\Gamma$) is chosen arbitrarily large to enforce liquidation.\\
\end{rem}

It is then straightforward to see that Problem \eqref{problem1} is equivalent to the following problem:
\begin{align}
\label{problem}
\underset{v \in \mathcal A}{\sup}\ \mathbb E \left[ - e^{-\gamma \left(  X_T + q_T^\intercal  S_T -  \ell(q_T) \right)} \right].
\end{align}
It is natural to use the tools of stochastic optimal control to solve the above dynamic optimization problem. Let us define the value function of the problem $u:[0,T] \times \mathbb R \times \mathbb R^d \times \mathbb R^d \rightarrow \mathbb R$ as
\begin{align}
\label{vfdef}
u(t,x,q,S) = \underset{v \in \mathcal A_t}{\sup} \mathbb E \left[ - e^{-\gamma \left( X^{t,x,S,v}_T + (q^{t,q,v}_T)^\intercal S^{t,S}_T - \ell(q^{t,q,v}_T) \right)} \right],
\end{align}
where, for $(t,x,q,S) \in [0,T] \times \mathbb R \times \mathbb R^d \times \mathbb R^d$ and $v \in \mathcal A_t$, the processes $(q^{t,q,v}_s)_{s \in [t,T]}$, $(S^{t,S}_s)_{s \in [t,T]}$, and $(X^{t,x,S,v}_s)_{s \in [t,T]}$ have respective dynamics

$$\begin{array}{lcl}
dq^{t,q,v}_s &=& v_s ds, \\
 dS^{t,S}_s &=&  R(\overline{S} - S^{t,S}_s) ds + V dW_s,\\
  dX^{t,x,S,v}_s &=& - v_s^\intercal S^{t,S}_s ds - L(v_s)ds,
\end{array}
$$
with $S^{t,S}_t = S$, $q^{t,q,v}_t = q$, and $X^{t,x,S,v}_t = x$.

\subsection{Hamilton-Jacobi-Bellman equation}

The HJB equation associated with Problem \eqref{problem} is given by\footnote{$u$ will be solution to that equation, but as we do not know it yet, we write the equation with an unknown function $w$.}
\begin{align}
0 =\ &\partial_t w(t,x,q,S) + \underset{v \in \mathbb R^d}{\sup} \left( - (v^\intercal S+L(v))\partial_x w(t,x,q,S) + v^\intercal \nabla_q w(t,x,q,S) \right)\nonumber\\
& + (\overline S-S)^\intercal R^\intercal  \nabla_S w(t,x,q,S) + \frac 12 \textrm{Tr} \left( \Sigma D^2_{SS} w(t,x,q,S ) \right),
\label{HJBu}
\end{align}
for all $(t,x,q,S) \in [0,T) \times \mathbb R \times \mathbb R^d \times \mathbb R^d$ with the terminal condition 
\begin{align}
\label{termcondu}
w(T,x,q,S) = -e^{-\gamma \left(x+q^\intercal S - \ell(q) \right)} \quad \forall (x,q,S) \in \mathbb R \times \mathbb R^d \times \mathbb R^d.
\end{align}
In order to study \eqref{HJBu}, we are going to use the following ansatz:
\begin{align}
\label{ansatz1}
w(t,x,q,S) = -e^{-\gamma \left(x+q^\intercal S + \theta(t,q,S) \right)} \quad \forall (t,x,q,S) \in [0,T] \times \mathbb R \times \mathbb R^d \times \mathbb R^d.
\end{align}

The interest of this ansatz is based on the following proposition:

\begin{prop}
Let $\tau<T$. If there exists $\theta \in C^{1,1,2}([\tau ,T] \times \mathbb R^d \times \mathbb R^d, \mathbb R)$ solution to
\begin{align}
\label{HJBtheta0}
0 =\ & \partial_t \theta(t,q,S) + \underset{v \in \mathbb R^d}{\sup} \left(  v^\intercal \nabla_q \theta(t,q,S) -L(v) \right) +\frac 12 \textrm{Tr} \left(  \Sigma D^2_{SS}\theta(t,q,S) \right)\\
& - \frac{\gamma}{2}  (q + \nabla_S \theta(t,q,S))^\intercal \Sigma (q + \nabla_S \theta(t,q,S)) +  (\overline S-S)^\intercal R^\intercal  (\nabla_S \theta(t,q,S)+q) \nonumber
\end{align}
on $[\tau ,T) \times \mathbb R^d \times \mathbb R^d$, with terminal condition
\begin{align}
\label{termcondtheta0}
\theta(T,q,S) = - \ell(q) \quad \forall (q,S) \in \mathbb R^d \times \mathbb R^d,
\end{align}
then the function $w:[\tau,T] \times \mathbb R \times \mathbb R^d \times \mathbb R^d \rightarrow \mathbb R$ defined by 
$$w(t,x,q,S) = -e^{-\gamma \left(x+q^\intercal S + \theta(t,q,S) \right)} \quad \forall (t,x,q,S) \in [\tau,T] \times \mathbb R \times \mathbb R^d \times \mathbb R^d$$
is a solution to \eqref{HJBu} on $[\tau,T)\times \mathbb R \times \mathbb R^d \times \mathbb R^d$ with terminal condition \eqref{termcondu}.
\end{prop}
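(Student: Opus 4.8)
The plan is to verify the proposition by direct substitution: plug the ansatz $w(t,x,q,S) = -e^{-\gamma(x + q^\intercal S + \theta(t,q,S))}$ into the HJB equation \eqref{HJBu} and check that the equation reduces exactly to \eqref{HJBtheta0}, while the terminal condition \eqref{termcondtheta0} produces \eqref{termcondu} through the ansatz. The first task is to compute the partial derivatives of $w$ in terms of $\theta$. Writing $w = -e^{-\gamma \phi}$ with $\phi = x + q^\intercal S + \theta$, we have $\partial_x w = -\gamma w$, $\nabla_q w = -\gamma w\,(S + \nabla_q \theta)$, $\nabla_S w = -\gamma w\,(q + \nabla_S \theta)$, and for the second-order term $D^2_{SS} w = -\gamma w\left(D^2_{SS}\theta - \gamma (q + \nabla_S \theta)(q + \nabla_S \theta)^\intercal\right)$, so that $\tfrac12 \Tr(\Sigma D^2_{SS} w) = -\gamma w\left(\tfrac12 \Tr(\Sigma D^2_{SS}\theta) - \tfrac{\gamma}{2}(q+\nabla_S\theta)^\intercal \Sigma (q + \nabla_S\theta)\right)$. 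Finally $\partial_t w = -\gamma w\,\partial_t \theta$.

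Next I would substitute these into \eqref{HJBu}. The key algebraic simplification is that every term carries a common factor of $-\gamma w$, which is strictly positive (since $w < 0$), so it can be divided out without changing the equation. The most delicate term is the supremum: in \eqref{HJBu} it reads $\sup_{v}\left(-(v^\intercal S + L(v))\partial_x w + v^\intercal \nabla_q w\right)$. Substituting the derivatives gives $\sup_v\left(-\gamma w\right)\left(-(v^\intercal S + L(v))(-1) + v^\intercal(S + \nabla_q \theta)\right)$; wait --- one must be careful with signs here. Since $\partial_x w = -\gamma w$ and $\nabla_q w = -\gamma w(S + \nabla_q \theta)$, the bracket becomes $-(v^\intercal S + L(v))(-\gamma w) + v^\intercal(-\gamma w)(S + \nabla_q\theta) = (-\gamma w)\left(-(v^\intercal S + L(v)) + v^\intercal(S + \nabla_q\theta)\right) = (-\gamma w)\left(v^\intercal \nabla_q \theta - L(v)\right)$. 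The crucial observation is that the $v^\intercal S$ terms cancel, leaving precisely the supremand $v^\intercal \nabla_q\theta - L(v)$ appearing in \eqref{HJBtheta0}; and because $-\gamma w > 0$, the supremum over $v$ passes through the positive factor, so $\sup_v(-\gamma w)(\cdots) = (-\gamma w)\sup_v(\cdots)$.

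The remaining drift and diffusion terms are handled similarly. The term $(\overline S - S)^\intercal R^\intercal \nabla_S w = (-\gamma w)(\overline S - S)^\intercal R^\intercal(q + \nabla_S\theta)$ matches the corresponding OU-drift term in \eqref{HJBtheta0}, and the second-order term contributes both the $\tfrac12\Tr(\Sigma D^2_{SS}\theta)$ piece and the quadratic $-\tfrac{\gamma}{2}(q+\nabla_S\theta)^\intercal\Sigma(q+\nabla_S\theta)$ penalty, exactly as required. Collecting everything and factoring out $-\gamma w$, the HJB equation \eqref{HJBu} becomes $(-\gamma w)$ times the right-hand side of \eqref{HJBtheta0}; since $-\gamma w \neq 0$, the two equations are equivalent, so $\theta$ solving \eqref{HJBtheta0} on $[\tau,T)\times\mathbb R^d\times\mathbb R^d$ implies $w$ solves \eqref{HJBu} there. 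The terminal condition is immediate: evaluating the ansatz at $t=T$ and using $\theta(T,q,S) = -\ell(q)$ gives $w(T,x,q,S) = -e^{-\gamma(x + q^\intercal S - \ell(q))}$, which is \eqref{termcondu}.

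I do not expect a genuine obstacle here, as the proposition is essentially a computational reduction enabled by the exponential structure of CARA utility. The only points demanding care are bookkeeping the signs in the supremum term (to confirm the $v^\intercal S$ cancellation that removes the $S$-dependence from the optimization) and justifying that the strictly positive factor $-\gamma w$ may be pulled out of the supremum and then divided away. The required regularity $\theta \in C^{1,1,2}$ guarantees that all the derivatives of $w$ above exist and are continuous, so the substitution is legitimate.
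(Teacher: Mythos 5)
Your proof is correct and follows essentially the same route as the paper: direct substitution of the exponential ansatz into \eqref{HJBu}, factoring out the strictly positive quantity $-\gamma w$ (including pulling it through the supremum, where the cancellation of the $v^\intercal S$ terms reduces the supremand to $v^\intercal \nabla_q\theta - L(v)$), and checking the terminal condition. The paper's proof is the same computation, written out term by term; your explicit justification that the positive factor commutes with the supremum is a point the paper leaves implicit.
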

\begin{proof}
Let $\theta \in C^{1,1,2}([\tau ,T] \times \mathbb R^d \times \mathbb R^d, \mathbb R)$ be a solution to \eqref{HJBtheta0} on $[\tau, T)\times \mathbb R^d \times \mathbb R^d$ with terminal condition \eqref{termcondtheta0}, then we have for all $(t,x,q,S) \in [\tau,T) \times \mathbb R \times \mathbb R^d \times \mathbb R^d$:
\begin{align*}
& \partial_t w(t,x,q,S) + \underset{v \in \mathbb R^d}{\sup} \left( - (v^\intercal S+L(v))\partial_x w(t,x,q,S) + v^\intercal \nabla_q w(t,x,q,S) \right)\nonumber\\
& + (\overline S-S)^\intercal R^\intercal  \nabla_S w(t,x,q,S) + \frac 12 \textrm{Tr} \left( \Sigma D^2_{SS} w(t,x,q,S ) \right)  \\  
=\ & -\gamma \partial_t \theta(t,q,S)w(t,x,q,S) + \underset{v \in \mathbb R^d}{\sup} \left( \gamma (v^\intercal S+L(v)) w(t,x,q,S) -\gamma v^\intercal (\nabla_q \theta(t,q,S)+S) w(t,x,q,S) \right)\nonumber\\
& + \frac{\gamma^2}{2} \textrm{Tr}\left(\Sigma (q + \nabla_S \theta(t,q,S)) (q + \nabla_S \theta(t,q,S))^\intercal w(t,x,q,S)\right)\\
& - \gamma (\overline S-S)^\intercal R^\intercal  (\nabla_S \theta(t,q,S)+q) w(t,x,q,S) - \frac 12 \textrm{Tr} \left(\gamma  \Sigma D^2_{SS}\theta(t,q,S) w(t,x,q,S ) \right)  \\
=\ & -\gamma w(t,x,q,S) \Bigg(\partial_t \theta(t,q,S) + \underset{v \in \mathbb R^d}{\sup} \left(  v^\intercal \nabla_q \theta(t,q,S) -L(v) \right) +\frac 12 \textrm{Tr} \left(  \Sigma D^2_{SS}\theta(t,q,S) \right)\\
& - \frac{\gamma}{2}  (q + \nabla_S \theta(t,q,S))^\intercal \Sigma (q + \nabla_S \theta(t,q,S)) +  (\overline S-S)^\intercal R^\intercal  (\nabla_S \theta(t,q,S)+q)\Bigg)\\
=\ &0.
\end{align*}
As it is straightforward to verify that $w$ satisfies the terminal condition \eqref{termcondu}, the result is proved.
\end{proof}

The above result does not rely on the quadratic assumptions for $L$ and $\ell$. In the quadratic case we consider in this paper, $\theta$ can be found in almost closed form. To prove this point, the first thing to notice is that the Legendre-Fenchel transform of $L$ writes
\begin{align}
    H: p \in \mathbb R^d \mapsto \underset{v \in \mathbb R^d}{\sup} \ v^\intercal p - L(v) = \underset{v \in \mathbb R^d}{\sup} \ v^\intercal p - v^\intercal \eta v = \frac 14 p^\intercal \eta^{-1} p,
    \label{LegendreFenchel}
\end{align}
as the supremum is reached at  $v^* = \frac 12 \eta^{-1} p$.\\

Consequently, we get the following HJB equation for $\theta$:
\begin{align}
\label{HJBtheta2}
0 & = \partial_t \theta(t,q,S) + \frac 14 \nabla_q \theta(t,q,S)^\intercal \eta^{-1}\nabla_q \theta(t,q,S) +\frac 12 \textrm{Tr} \left(  \Sigma D^2_{SS}\theta(t,q,S) \right)\\
& - \frac{\gamma}{2}  (q + \nabla_S \theta(t,q,S))^\intercal \Sigma (q + \nabla_S \theta(t,q,S)) +  (\overline S-S)^\intercal R^\intercal  (\nabla_S \theta(t,q,S)+q), \nonumber
\end{align}
with terminal condition
\begin{align}
\label{termcondtheta1}
\theta(T,q,S) = - q^\intercal \Gamma q \quad \forall (q,S) \in \mathbb R^d \times \mathbb R^d.
\end{align}

To further study \eqref{HJBtheta2}, we introduce a second ansatz and look for a solution $\theta$ of the following form:
\begin{align}
\label{ansatz2}
\theta(t,q,S) = q^\intercal A(t)q + q^\intercal B(t) S + S^\intercal C(t)S + D(t)^\intercal q + E(t)^\intercal S + F(t) \quad \forall (t,q,S) \in [0,T] \times \mathbb R^d \times \mathbb R^d
\end{align}
or equivalently
$$ \theta(t,q,S) = \begin{pmatrix} q \\ S\end{pmatrix}^\intercal P(t) \begin{pmatrix} q \\ S\end{pmatrix} + \begin{pmatrix} D(t) \\ E(t)\end{pmatrix}^\intercal\begin{pmatrix} q \\ S\end{pmatrix} + F(t) \quad \forall (t,q,S) \in [0,T] \times \mathbb R^d \times \mathbb R^d,$$
where $P : [0,T] \rightarrow \mathcal S_{2d}(\mathbb R)$ is defined as
\begin{align}
\label{Pmatdefinition}
P(t) = \begin{pmatrix} A(t) & \frac 12 B(t) \\ \frac 12 B(t)^\intercal & C(t) \end{pmatrix}. 
\end{align}

The interest of this ansatz is stated in the following proposition:
\begin{prop} \label{prop1}
Let $\tau <T$. Assume there exist $A \in C^1 \left([\tau,T], \mathcal S_d(\mathbb R) \right)$, $B \in C^1 \left([\tau,T], \mathcal M_d(\mathbb R) \right)$, $C \in C^1 \left([\tau,T], \mathcal S_d(\mathbb R) \right)$, $D \in C^1 \left([\tau,T], \mathbb R^d \right)$, $E \in C^1 \left([\tau,T], \mathbb R^d \right)$, $F \in C^1 \left([\tau,T], \mathbb R \right)$ satisfying the system of ODEs 
\begin{align}
\label{ODEsystem}
\begin{cases}
A'(t) = \frac{\gamma}{2} (B(t) +I_d) \Sigma (B(t)^\intercal +I_d) - A(t) \eta^{-1} A(t) \\
B'(t) = (B(t)+I_d)R + 2\gamma  (B(t) +I_d) \Sigma C(t) - A(t) \eta^{-1 } B(t)\\
C'(t) = R^\intercal C(t) + C(t)R + 2\gamma C(t)\Sigma C(t) - \frac 1{4} B(t)^\intercal \eta^{-1} B(t) \\
D'(t) = -(B(t)+I_d) R\overline S  + \gamma  (B(t) +I_d)\Sigma E(t) - A(t)\eta^{-1}D(t)\\
E'(t) = -2 C(t)R\overline S + R^\intercal E(t) +2\gamma C(t)\Sigma E(t) - \frac 1{2} B(t)^\intercal \eta^{-1} D(t)\\
F'(t) = -\overline S^\intercal R^\intercal  E(t) - Tr(\Sigma C(t)) +\frac{\gamma}2 E(t)^\intercal \Sigma E(t) - \frac 1{4}D(t)^\intercal \eta^{-1} D(t),
\end{cases}
\end{align}

where $I_d$ denotes the identity matrix in $\mathcal M_d (\mathbb R)$, with terminal conditions
\begin{align}
\label{termcondABCDEF}
A(T) = - \Gamma, \quad B(T) = C(T) = D(T) = E(T) = F(T)=0.
\end{align}

Then the function $\theta$ defined by \eqref{ansatz2} satisfies \eqref{HJBtheta2} on $[\tau,T) \times \mathbb R^d \times \mathbb R^d$ with terminal condition \eqref{termcondtheta1}.
\end{prop}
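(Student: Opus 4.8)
The plan is to prove the proposition by direct substitution, exploiting the fact that the ansatz \eqref{ansatz2} is a quadratic polynomial in $(q,S)$ with time-dependent coefficients: injecting it into \eqref{HJBtheta2} produces again a quadratic polynomial in $(q,S)$, and requiring this polynomial to vanish identically on $\mathbb R^d \times \mathbb R^d$ is equivalent to cancelling each of its coefficients separately. First I would record the derivatives of $\theta$ obtained from \eqref{ansatz2} using the symmetry of $A(t)$ and $C(t)$, namely $\partial_t \theta = q^\intercal A'(t) q + q^\intercal B'(t) S + S^\intercal C'(t) S + D'(t)^\intercal q + E'(t)^\intercal S + F'(t)$, together with $\nabla_q \theta = 2A(t)q + B(t)S + D(t)$, $\nabla_S \theta = B(t)^\intercal q + 2C(t)S + E(t)$, and $D^2_{SS}\theta = 2C(t)$.

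Next I would substitute these expressions into \eqref{HJBtheta2}, where the supremum over $v$ has already been resolved through the Legendre--Fenchel transform \eqref{LegendreFenchel} into the term $\frac14 \nabla_q\theta^\intercal \eta^{-1}\nabla_q\theta$. The computation then reduces to expanding three quadratic blocks: $\frac14(2Aq+BS+D)^\intercal\eta^{-1}(2Aq+BS+D)$, the risk term $-\frac{\gamma}{2}(q+\nabla_S\theta)^\intercal\Sigma(q+\nabla_S\theta)$ in which $q+\nabla_S\theta = (I_d+B^\intercal)q + 2CS + E$, and the drift term $(\overline S - S)^\intercal R^\intercal((I_d+B^\intercal)q + 2CS + E)$. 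Using repeatedly that $A$, $C$, $\Sigma$, and $\eta$ are symmetric to merge transposed cross-terms, I would sort the resulting scalar into the six homogeneous components of a quadratic form in $(q,S)$: the $q^\intercal(\cdot)q$ part, the bilinear $q^\intercal(\cdot)S$ part, the $S^\intercal(\cdot)S$ part, the linear $q$ and $S$ parts, and the constant (the latter collecting $\frac12\textrm{Tr}(\Sigma D^2_{SS}\theta) = \textrm{Tr}(\Sigma C)$).

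Setting each component to zero then yields exactly the system \eqref{ODEsystem}. The only step deserving real care is that a quadratic form determines only the symmetric part of its matrix, so the genuinely asymmetric contributions must be symmetrised before identification. This matters in the $S^\intercal(\cdot)S$ component, where the drift contributes $-2S^\intercal R^\intercal C S = -S^\intercal(R^\intercal C + CR)S$ since $C^\intercal = C$; it is precisely this symmetrisation that converts a bare $R^\intercal C$ into the combination $R^\intercal C + CR$ appearing in the equation for $C'(t)$. By contrast, the bilinear block $q^\intercal(\cdot)S$ is matched on the full, non-symmetrised matrix, because distinct matrices induce distinct bilinear forms; this is what produces the equation for $B'(t)$. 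The linear and constant components are matched as vectors and as a scalar, giving the equations for $D'(t)$, $E'(t)$, and $F'(t)$.

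Finally I would verify the terminal condition: evaluating \eqref{ansatz2} at $t=T$ and substituting \eqref{termcondABCDEF} gives $\theta(T,q,S) = -q^\intercal\Gamma q$, which is exactly \eqref{termcondtheta1}. Since the argument is purely algebraic, I expect no genuine obstacle beyond careful bookkeeping; the main pitfall to guard against is the systematic symmetrisation of the $q$--$q$ and $S$--$S$ coefficient matrices described above, which is what makes the matched coefficients coincide with the stated right-hand sides of \eqref{ODEsystem}.
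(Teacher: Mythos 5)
Your proposal is correct and follows essentially the same route as the paper's proof: substitute the quadratic ansatz \eqref{ansatz2} into \eqref{HJBtheta2}, expand the three quadratic blocks, group by the homogeneous components in $(q,S)$, and observe that each coefficient vanishes precisely because of \eqref{ODEsystem}, with the terminal condition checked by evaluation at $t=T$. Your remarks on symmetrisation (e.g.\ that $-2S^\intercal R^\intercal C S$ must be read as $-S^\intercal(R^\intercal C + CR)S$, while the bilinear $q$--$S$ block determines its matrix uniquely) correctly identify the only delicate bookkeeping in the computation, which the paper handles implicitly in its grouping of terms.
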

\begin{proof}
\allowdisplaybreaks

Let us consider $A \in C^1 \left([\tau,T], \mathcal S_d(\mathbb R) \right)$, $B \in C^1 \left([\tau,T], \mathcal M_d(\mathbb R) \right)$, $C \in C^1 \left([\tau,T], \mathcal S_d(\mathbb R) \right)$, $D \in C^1 \left([\tau,T], \mathbb R^d \right)$, $E \in C^1 \left([\tau,T], \mathbb R^d \right)$, $F \in C^1 \left([\tau,T], \mathbb R \right)$ verifying \eqref{ODEsystem} on $[\tau,T)$ with terminal condition \eqref{termcondABCDEF}. Let us consider $\theta: [\tau,T] \times \mathbb R^d \times \mathbb R^d \rightarrow \mathbb R$ defined by \eqref{ansatz2}. Then we obtain for all $(t,q,S) \in [\tau,T) \times \mathbb R^d \times \mathbb R^d$:
\begin{align*}
& \partial_t \theta(t,q,S) + \frac 14 \nabla_q \theta(t,q,S)^\intercal \eta^{-1}\nabla_q \theta(t,q,S) +\frac 12 \textrm{Tr} \left(  \Sigma D^2_{SS}\theta(t,q,S) \right)\\
& - \frac{\gamma}{2}  (q + \nabla_S \theta(t,q,S))^\intercal \Sigma (q + \nabla_S \theta(t,q,S)) +  (\overline S-S)^\intercal R^\intercal  (\nabla_S \theta(t,q,S)+q), \nonumber \\
=\quad & q^\intercal A'(t) q + q^\intercal B'(t) S + S^\intercal C'(t)S + D'(t)^\intercal q + E'(t)^\intercal S + F'(t)\nonumber\\
& + q^\intercal A(t) \eta^{-1} A(t) q  + q^\intercal A(t)  \eta^{-1} B(t) S + \frac 14 S^\intercal B(t)^\intercal \eta^{-1} B(t) S \\
& + D(t)^\intercal \eta^{-1} A(t) q + \frac 12 \left(D(t) \right)^\intercal \eta^{-1} B(t) S + \frac 14 D(t)^\intercal \eta^{-1} D(t)\\
& + Tr(\Sigma C(t)) - \frac{\gamma}{2}  \left(q + B(t)^\intercal q + 2C(t)S + E(t) \right)^\intercal \Sigma \left(q + B(t)^\intercal q + 2C(t)S + E(t) \right)  \\
& + \overline S^\intercal R^\intercal q + \overline S ^\intercal R^\intercal \left(B(t)^\intercal q + 2C(t)S + E(t) \right) - S^\intercal R^\intercal q - S^\intercal R^\intercal \left(B(t)^\intercal q + 2C(t)S + E(t) \right)\\
=\quad & q^\intercal \left(A'(t) - \frac{\gamma}2 (B(t) + I_d) \Sigma (B(t)^\intercal + I_d) + \frac 14 \left(2A(t) \right)\eta^{-1} \left(2A(t) \right)\right)q\\
& + q^\intercal \left(B'(t) - (I_d + B(t))R - 2\gamma (B(t) + I_d) \Sigma C(t) +  A(t) \eta^{-1} B(t) \right) S \\
& + S^\intercal \left(C'(t) - R^\intercal C(t) - C(t)R - 2\gamma C(t)\Sigma C(t) + \frac 1{4} B(t)^\intercal \eta^{-1} B(t) \right) S\\
& + \left(D'(t) +(B(t)+I_d) R\overline S  - \gamma  (B(t) +I_d)\Sigma E(t) + A(t)\eta^{-1}D(t) \right)^\intercal q\\
& + \left(E'(t) +2 C(t)R\overline S - R^\intercal E(t) -2\gamma C(t)\Sigma E(t) + \frac 1{2} B(t)^\intercal \eta^{-1} D(t) \right) ^\intercal S\\
& + \left(F'(t) +\overline S^\intercal R^\intercal  E(t) + Tr(\Sigma C(t)) -\frac{\gamma}2 E(t)^\intercal \Sigma E(t) + \frac 1{4} D(t)^\intercal \eta^{-1}D(t) \right)\\
= \quad & 0.
\end{align*}
As it is straightforward to verify that $\theta$ satisfies the terminal condition \eqref{termcondtheta1}, the result is proved.
\end{proof}


\begin{rem} \label{remark1} Two remarks can be made on the system of ODEs \eqref{ODEsystem}:
\begin{itemize}
    \item This system of ODEs can clearly be decomposed into three groups of equations: the first three ODEs for $A,$ $B$ and $C$ are independent of the others and can be solved as a first step; once we know $A,B,$ and $C$ we can solve the linear ODEs for $D$ and $E$, and finally $F$ can be obtained with a simple integration;
    \item When $R=0$ (i.e. in the case where the prices $S$ of the $d$ assets are correlated arithmetic Brownian motions), there is a trivial solution to the last five equations which is $B=C=D=E=F=0.$ The function $A$ can then be found using classical techniques (as shown in Appendix \ref{annex1}).
\end{itemize}
\end{rem}

It is noteworthy that the first system, i.e.
\begin{align}
\label{ODEsystemABC}
\begin{cases}
A'(t) = \frac{\gamma}{2} (B(t) +I_d) \Sigma (B(t)^\intercal +I_d) - A(t) \eta^{-1} A(t) \\
B'(t) = (B(t)+I_d)R + 2\gamma  (B(t) +I_d) \Sigma C(t) - A(t) \eta^{-1 } B(t)\\
C'(t) = R^\intercal C(t) + C(t)R + 2\gamma C(t)\Sigma C(t) - \frac 1{4} B(t)^\intercal \eta^{-1} B(t)
\end{cases}
\end{align}
boils down to the following Matrix Riccati ODE in $P = \begin{pmatrix} A & \frac 12 B \\ \frac 12 B & C \end{pmatrix} $:
\begin{align}
\label{PODE}
P'(t) = Q + Y^\intercal P(t) +  P(t) Y +  P(t)UP(t),
\end{align}
where 
\begin{align*}
Q =\ & \frac 12 \begin{pmatrix} \gamma \Sigma & R\\ R^\intercal & 0 \end{pmatrix} \in \mathcal S_{2d}(\mathbb R), & 
Y =\ & \begin{pmatrix} 0 & 0\\ \gamma \Sigma & R\end{pmatrix} \in \mathcal M_{2d}(\mathbb R), & 
U =\ & \begin{pmatrix} -\eta^{-1} & 0 \\ 0 & 2\gamma \Sigma \end{pmatrix} \in \mathcal S_{2d}(\mathbb R),
\end{align*}
and the terminal condition writes\vspace{-0.2cm}
\begin{align}
\label{termcondP}
P(T) = \begin{pmatrix} - \Gamma & 0 \\ 0 & 0 \end{pmatrix} \in \mathcal S_{2d}(\mathbb R).
\end{align}

When compared to the Matrix Riccati ODEs arising in the linear-quadratic optimal control literature, the distinctive aspect of our equation is that the matrix $U$ characterizing the quadratic term in the Riccati equation has both positive and negative eigenvalues. In particular, we cannot rely on existing results (see for instance Theorem 3.5 of \cite{freiling2002survey}) to prove that there exists a solution to \eqref{PODE} with terminal condition \eqref{termcondP}. In this paper, we address the existence of a solution by using \textit{a priori} estimates for the value function.\footnote{Surprisingly maybe, the inequalities we derive do not seem to derive in a direct manner from a purely analytic argument such as the classical comparison principle for matrix Riccati equations -- see Theorem 3.4 of \cite{freiling2002survey}.} \\

Regarding the set of equations \eqref{ODEsystemABC}, there exists a unique local solution by Cauchy-Lipschitz theorem. In the following section, we therefore first state a verification theorem that solves the problem on an interval $[\tau, T]$, and use that very result to address global existence and uniqueness of a solution on $[0, T]$.

\subsection{Main mathematical results}

\begin{thm} \label{verification}
Let $\tau<T$. Let $A \in C^1 \left([\tau,T], \mathcal S_d(\mathbb R) \right)$, $B \in C^1 \left([\tau,T], \mathcal M_d(\mathbb R) \right)$, $C \in C^1 \left([\tau,T], \mathcal S_d(\mathbb R) \right)$, $D \in C^1 \left([\tau,T], \mathbb R^d \right)$, $E \in C^1 \left([\tau,T], \mathbb R^d \right)$, $F \in C^1 \left([\tau,T], \mathbb R \right)$ be a solution to the system \eqref{ODEsystem} on $[\tau, T)$ with terminal condition \eqref{termcondABCDEF}, and consider the function $\theta$ defined by \eqref{ansatz2} and the associated function $w$ defined by \eqref{ansatz1}.\\

Then for all $(t,x,q,S) \in [\tau,T] \times \mathbb R \times \mathbb R^d \times \mathbb R^d$ and $v = (v_s)_{s \in [t,T]} \in \mathcal A_t$, we have 
\begin{align}
\label{subopt}
\mathbb E \left[ -e^{-\gamma \left(X^{t,x,S,v}_T + \left( q^{t,q,v}_T\right)^\intercal S^{t,S}_T - \ell(q^{t,q,v}_T) \right)} \right] \le w(t,x,q,S).
\end{align}
Moreover, equality is obtained in \eqref{subopt} by taking the optimal control $(v^*_s)_{s \in [t,T]} \in \mathcal A_t$ given by the closed-loop feedback formula
\begin{align}
\label{optcontrol}
v^*_s = \frac 12 \eta^{-1} \left( 2A(s)q^{t,q,v}_s + B(s)S^{t,S}_s + D(s)  \right).
\end{align}
In particular, $w=u$ on $[\tau,T] \times \mathbb R \times \mathbb R^d \times \mathbb R^d.$
\end{thm}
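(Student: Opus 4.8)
The plan is to run a classical verification argument: exhibit $w$ as a supermartingale along every admissible control and as a martingale along the candidate feedback $v^*$, then read off both the inequality \eqref{subopt} and the equality from the terminal condition. Fix $(t,x,q,S)\in[\tau,T]\times\mathbb R\times\mathbb R^d\times\mathbb R^d$ and $v\in\mathcal A_t$, and apply Itô's formula to $M_s:=w\big(s,X^{t,x,S,v}_s,q^{t,q,v}_s,S^{t,S}_s\big)$ on $[t,T]$; this is legitimate because $w$ inherits the $C^{1,1,2}$ regularity of $\theta$ up to time $T$. The drift of $M_s$ is exactly the expression obtained by plugging the dynamics into \eqref{HJBu}, namely $\partial_t w$ plus the generator evaluated at the realized $v_s$. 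Since $w$ solves \eqref{HJBu} (because $\theta$ solves \eqref{HJBtheta2}), this drift is $\le 0$ for every $v$, and vanishes exactly when $v_s$ is the pointwise maximiser $\tfrac12\eta^{-1}\nabla_q\theta(s,q_s,S_s)$, i.e. the feedback $v^*_s$ of \eqref{optcontrol}. The diffusion part of $M_s$ is the stochastic integral of $-\gamma w\,(q_s+\nabla_S\theta)^\intercal V$ against $dW$. Hence $M$ is a local supermartingale in general, and a local martingale under $v^*$.

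Because the martingale part is only local, I would introduce a localizing sequence of stopping times $(\rho_n)$ and take expectations of the stopped process to get $\mathbb E[M_{T\wedge\rho_n}]\le M_t=w(t,x,q,S)$, with equality when $v=v^*$. The whole difficulty is then to pass to the limit $n\to\infty$: we have $M_{T\wedge\rho_n}\to M_T$ almost surely, and $M_T=-e^{-\gamma(X_T+q_T^\intercal S_T-\ell(q_T))}$ by \eqref{termcondu}, so it suffices to justify $\mathbb E[M_{T\wedge\rho_n}]\to\mathbb E[M_T]$. I would obtain this by proving that the family $\{M_{T\wedge\rho_n}\}_n$ is uniformly integrable.

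This uniform integrability estimate is, I expect, the main obstacle. Writing $|M_s|=\exp\!\big(-\gamma(X_s+q_s^\intercal S_s+\theta(s,q_s,S_s))\big)$, I would use the linear growth condition of Definition \ref{defiLinearGrowth}, which gives $\|q_s\|\le C(1+\sup_{\tau\le u\le s}\|S_u\|)$; combined with the cash dynamics \eqref{cashProcess2} and the fact that $L$, $\ell$ and $\theta$ are quadratic forms whose coefficients are continuous, hence bounded, on the compact interval $[\tau,T]$, this yields a bound of the form $|M_s|\le\exp\!\big(C'(1+\sup_{u\in[t,T]}\|S_u\|^2)\big)$. I would then invoke the Gaussian nature of the Ornstein-Uhlenbeck process: $\sup_{u\in[t,T]}\|S_u\|$ is the supremum of a Gaussian process over a finite horizon, so by a Fernique-type estimate it possesses exponential-square moments, which gives $L^{1+\epsilon}$-boundedness of $\{M_{T\wedge\rho_n}\}_n$ and therefore uniform integrability. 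The delicacy here is intrinsic to the exponential-utility setting, where one must integrate the exponential of a quadratic functional of a Gaussian process; the linear growth restriction defining $\mathcal A_t$ is precisely what keeps the inventory sub-linear and the cash sub-quadratic in $\sup_u\|S_u\|$, so that the estimate closes on the fixed interval $[\tau,T]$. Once the limit is justified, the inequality $\mathbb E[M_T]\le w(t,x,q,S)$ is \eqref{subopt}, and the analogous equality holds along $v^*$.

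It remains to check that $v^*$ is admissible. Under the feedback \eqref{optcontrol}, the pair $(q_s,S_s)$ solves a linear stochastic differential equation whose coefficients are continuous, hence bounded, on $[\tau,T]$, so it admits a unique strong solution; Gronwall's inequality then gives $\|q_s\|\le C(1+\sup_{u}\|S_u\|)$, and since $v^*_s$ is affine in $(q_s,S_s)$ it satisfies the required linear growth condition, so $v^*\in\mathcal A_t$. Putting the pieces together, \eqref{subopt} holds for every $v\in\mathcal A_t$, so taking the supremum gives $u\le w$; and the equality realized by the admissible control $v^*$ gives $u\ge w$. Hence $u=w$ on $[\tau,T]\times\mathbb R\times\mathbb R^d\times\mathbb R^d$, as claimed.
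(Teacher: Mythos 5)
Your overall skeleton (Itô's formula, nonpositive drift from the HJB equation \eqref{HJBu}, equality along the feedback control, admissibility of $v^*$ via Gronwall) matches the paper's, and the admissibility step is fine. The genuine gap is exactly where you locate the main difficulty: the uniform integrability claim is unjustified and, in fact, false in general. Fernique's theorem yields $\mathbb E\bigl[\exp\bigl(\alpha \sup_{u\in[t,T]}\|S_u\|^2\bigr)\bigr]<\infty$ only for $\alpha$ \emph{below a threshold} determined by the covariance of the Gaussian process; it does not provide exponential-square moments for an arbitrary constant. Your bound $|M_s|\le \exp\bigl(C'(1+\sup_{u}\|S_u\|^2)\bigr)$ involves a constant $C'$ proportional to, among other things, $\gamma$, $\|\eta\|$, $\|\Gamma\|$ and the \emph{square} of the linear growth constant $C_{t,T}$ of the admissible control (through $\gamma\int_t^T L(v_s)\,ds$, which enters $-\gamma X_T$ with a positive sign), and this constant can be arbitrarily large. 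Concretely, in the one-asset case the control $v_s=\lambda \sup_{u\in[t,s]}|S_u|$ is admissible for every $\lambda$, and for $\lambda$ large enough $\mathbb E[|M_T|]=+\infty$, so the family $\{M_{T\wedge\rho_n}\}_n$ cannot be uniformly integrable. The inequality \eqref{subopt} still holds trivially for such a control, but your route to it collapses; nor can you fall back on ``a nonpositive local supermartingale is a supermartingale,'' which is false (take the negative of an inverse three-dimensional Bessel process: the desired inequality fails for it).

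The paper avoids this obstruction by a different, essentially pathwise argument (the Bene\v{s} trick). Writing $dM_s=\mathcal L^{v}M_s\,ds+M_s\kappa_s^\intercal V\,dW_s$ with $\kappa_s=-\gamma\bigl(q_s+\nabla_S\theta(s,q_s,S_s)\bigr)$, one divides $M$ by the stochastic exponential $\xi_{t,s}$ of $\int\kappa^\intercal V\,dW$: the process $M_s\,\xi_{t,s}^{-1}$ has finite variation and is pathwise nonincreasing (constant along $v^*$), so no localization and no uniform integrability of $M$ is required; one obtains the pathwise inequality $M_T\le w(t,x,q,S)\,\xi_{t,T}$, and the whole problem reduces to proving $\mathbb E[\xi_{t,T}]=1$, i.e.\ that $\xi$ is a true martingale. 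This is where the Gaussian threshold issue is genuinely bypassed: $\kappa$ involves only $q$ and $S$ (no cost terms), and Novikov's condition is checked on subintervals of length $\epsilon$, so the relevant exponent is $\tfrac{\epsilon}{2}C\bigl(1+\sup_u\|W_u-W_t\|^2\bigr)$, whose constant can be pushed below the threshold by shrinking $\epsilon$; martingality on all of $[t,T]$ then follows by pasting. To repair your proof you would need to replace the Fernique/UI step by this factorization (or an equivalent device); as written, the assertion ``which gives $L^{1+\epsilon}$-boundedness'' cannot be justified.
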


\begin{proof}
Let $t \in [\tau, T)$, we first prove that $(v^*_s)_{s \in [t,T]} \in \mathcal A_t$ (i.e., $(v^*_s)_{s \in [t,T]}$ is well-defined and admissible).
Let us consider the Cauchy initial value problem
\begin{align*}
	\forall s \in [t,T],\quad \frac {d\tilde{q}_s}{ds}= \frac 12 \eta^{-1} \left( 2 A(s) \tilde{q}_s + B(s)S^{t,S}_s + D(s) \right), \qquad \tilde{q}_t=q. \notag
\end{align*}
The unique solution of that Cauchy problem writes
\begin{align*}
    \tilde{q}_s= \exp \left( \int_{t}^{s} \phi(\varrho) d\varrho{} \right) \left( q + \int_{t}^{s} \psi\left( \varrho, S^{t,S}_\varrho \right)\exp \left( - \int_{t}^{\varrho} \phi(\varsigma) d\varsigma{} \right)d\varrho \right),
\end{align*}
where $\phi$ and $\psi$ are defined by
\begin{align*}
    \phi& : s \in[t,T] \mapsto  \eta^{-1} A(s), \notag \\
   \psi& :  (s, S)\in [t,T] \times \mathbb R^d \mapsto \frac 12 \eta^{-1} \left( B(s) S + D(s)  \right). \notag
\end{align*}
Then $v^*$ can be written as
\begin{align*}
    v^{*}_s =  \frac {d\tilde{q}_s}{ds} = \phi(s) \exp \left( \int_{t}^{s} \phi(\varrho) d\varrho \right) \left( q + \int_{t}^{s} \psi\left( \varrho, S^{t,S}_\varrho \right)\exp \left( - \int_{t}^{\varrho} \phi(\varsigma) d\varsigma{} \right)d\varrho \right) + \psi\left(s,S^{t,S}_s\right).
\end{align*}
We see from the definition of $\phi$ and the affine form of $\psi$ in $S$ that $v^*$ satisfies a linear growth condition, and is therefore in $\mathcal A_t$.\\

Let us consider $(t,x,q,S) \in [\tau,T] \times \mathbb R \times \mathbb R^d \times \mathbb R^d$ and $v = (v_s)_{s \in [t,T]} \in \mathcal A_t$. We now prove that 
\begin{align*}
    \mathbb E \left[w\left(T,X^{t,x,S,v}_T,q^{t,q,v}_T,S^{t,S}_T\right) \right] \leqslant w(t,x,q,S).
\end{align*}
We use the following notations for readability:
\begin{align*}
    \forall s \in [t,T], & \hspace{0.5cm} w(s,X^{t,x,S,v}_s,q^{t,q,v}_s,S^{t,S}_s)=w^{t,x,q,S,v}_s, \notag \\
    \forall s \in [t,T], & \hspace{0.5cm} \theta (s,q^{t,q,v}_s,S^{t,S}_s)=\theta^{t,q,S,v}_s. \notag
\end{align*}

By Itô's formula, we have $\forall s \in [\tau,T]$
\begin{align*}
    dw^{t,x,q,S,v}_s = \mathcal{L}^{v}w^{t,x,q,S,v}_s ds + \left(\nabla_{S}w^{t,x,q,S,v}_s\right)^{\intercal} VdW_s,
\end{align*}
where
\begin{align*}
    \mathcal{L}^{v}w^{t,x,q,S,v}_s=\ &\partial_t w^{t,x,q,S,v}_s - (v^\intercal S+v^\intercal \eta v)\partial_x w^{t,x,q,S,v}_s + v^\intercal \nabla_q w^{t,x,q,S,v}_s \nonumber\\
    & + (\overline S-S)^\intercal R^\intercal  \nabla_S w^{t,x,q,S,v}_s + \frac 12 \textrm{Tr}\left( \Sigma D^2_{SS} w^{t,x,q,S,v}_s \right).
\end{align*}
From \eqref{ansatz1} and \eqref{ansatz2} we have
\begin{align*}
    \nabla_S w^{t,x,q,S,v}_s&=-\gamma w^{t,x,q,S,v}_s \left( q^{t,q,v}_s + \nabla_S \theta ^{t,q,S,v}_s \right) \nonumber\\
    & = -\gamma w^{t,x,q,S,v}_s \left(q^{t,q,v}_s + B(s)^\intercal q^{t,q,v}_s + 2 C(s)S^{t,S}_s + E(s) \right).
\end{align*}
We define $\forall s \in [t, T]$,
\begin{align*}
    \kappa^{q,S,v}_s &= -\gamma \left(q^{t,q,v}_s + B(s)^\intercal q^{t,q,v}_s + 2 C(s)S^{t,S}_s + E(s) \right), \\
    \xi^{q,S,v}_{t,s} &= \exp \left( \int_{t}^{s} {\kappa^{q,S,v}_\varrho}^\intercal VdW_\varrho - \frac12 \int_{t}^{s} {\kappa^{q,S,v}_\varrho}^\intercal \Sigma \kappa^{q,S,v}_\varrho d\varrho \right).
\end{align*}
We then have 
\begin{align*}
    d \left( w^{t,x,q,S,v}_s \left( \xi^{q,S,v}_{t,s} \right)^{-1} \right) = \left( \xi^{q,S,v}_{t,s} \right)^{-1} \mathcal{L}^{v}w^{t,x,q,S,v}_s ds.
\end{align*}
By definition of $w$, $\mathcal{L}^{v}w^{t,x,q,S,v}_s \leqslant 0$. Moreover, equality holds for the control reaching the supremum in \eqref{LegendreFenchel}. This supremum is reached for the unique value
\begin{align*}
    v_s &= \frac12 \eta^{-1} \nabla_q \theta ^{t,q,S,v}_s    \\
    &=\frac 12 \eta^{-1} \left( 2A(s) q^{t,q,v}_s + B(s) S^{t,S}_s + D(s)  \right),
\end{align*}
which corresponds to the case $(v_s)_{s \in [t,T]} = (v^*_s)_{s \in [t,T]}$.\\
\\ \\
As a consequence, $\left( w^{t,x,q,S,v}_s \left( \xi^{q,S,v}_{t,s} \right)^{-1} \right)_{s \in [t,T]}$ is nonincreasing and therefore
\begin{align*}
    w\left(T,X^{t,x,S,v}_T,q^{t,q,v}_T,S^{t,S}_T\right) \leqslant w(t,x,q,S) \xi^{q,S,v}_{t,T},
\end{align*}
with equality when $(v_s)_{s \in [t,T]} = (v^*_s)_{s \in [t,T]}$.
\\
\\
Taking expectations we get 
\begin{align*}
    \mathbb E \left[w\left(T,X^{t,x,S,v}_T,q^{t,q,v}_T,S^{t,S}_T\right) \right] \leqslant w(t,x,q,S)  \mathbb E \left[\xi^{q,S,v}_{t,T} \right].
\end{align*}
We proceed to prove that $\mathbb E \left[\xi^{q,S,v}_{t,T} \right]$ is equal to 1. To do so, we use that $\xi^{q,S,v}_{t,t} = 1$ and prove that $(\xi^{q,S,v}_{t,s})_{s \in [t,T]}$ is a martingale under $\left(\mathbb P; \mathbb F = (\mathcal F_s)_{s \in [t,T]} \right)$. \\

We know that $(q^{t,q,v}_s)_{s \in [t,T]}$ satisfies a linear growth condition with respect to $(S^{t,S}_s)_{s \in [t,T]}$ since $v$ is an admissible control. Given the form of $\kappa$, there exists a constant $C$ such that, almost surely,
\begin{align*}
    \underset{s \in [t,T]}{\sup} {\parallel \kappa^{q,S,v}_s \parallel}^2 \leqslant C \left(1 + \underset{s \in [t,T]}{\sup} {\parallel W_s - W_t \parallel}^2  \right).
\end{align*}
By using classical properties of the Brownian motion, we prove that
\begin{align*}
    \exists \epsilon >0, \forall s \in [t,T], \hspace{0.3cm} \mathbb E \left[ \exp \left( \frac12 \int_{s}^{ \left(s+\epsilon \right) \wedge T}  \left(\kappa^{q,S,v}_\varrho \right)^\intercal \Sigma \kappa^{q,S,v}_\varrho d\varrho\right) \right] < +\infty.
\end{align*}
Using a classical trick due to Bene\v{s} (see \cite{karatzas2014brownian}, Chapter 5), we see that $(\xi^{q,S,v}_{t,s})_{s \in [t,T]}$ is a martingale under $\left(\mathbb P; \mathbb F = (\mathcal F_s)_{s \in [t,T]} \right)$. 
\\
\\
We obtain 
\begin{align*}
    \mathbb E \left[w\left(T,X^{t,x,S,v}_T,q^{t,q,v}_T,S^{t,S}_T\right) \right] \leqslant w(t,x,q,S),
\end{align*}
with equality when $(v_s)_{s \in [t,T]} = (v^*_s)_{s \in [t,T]}$.
\\
\\
We conclude that 
\begin{align*}
    u\left( t,x,q,S\right) &= \underset{(v_s)_{s \in [t,T]} \in \mathcal A_t}{\sup} \mathbb E \left[- \exp \left(-\gamma \left(X^{t,x,S,v}_T + \left( q^{t,q,v}_T\right)^\intercal S^{t,S}_T - \ell\left(q^{t,q,v}_T\right) \right) \right) \right] \\
    &= \mathbb E \left[- \exp \left(-\gamma \left(X^{t,x,S,v^{*}}_T + \left( q^{t,q,v^{*}}_T\right)^\intercal S^{t,S}_T - \ell\left(q^{t,q,v^{*}}_T\right) \right) \right) \right] \\
    &=w \left(t,x,q,S \right).
\end{align*}

\end{proof}

We will next proceed to prove existence and uniqueness of a solution to the system of ODEs \eqref{ODEsystem} on $[0,T]$ with terminal condition \eqref{termcondABCDEF}, or equivalently to \eqref{PODE} with terminal condition \eqref{termcondP}.\footnote{The result in fact holds on $(-\infty, T]$ as the initial time plays no role.}

\begin{thm} \label{existence}
There exists a unique solution $A \in C^1 \left([0,T], \mathcal S_d(\mathbb R) \right)$, $B \in C^1 \left([0,T], \mathcal M_d(\mathbb R) \right)$, $C \in C^1 \left([0,T], \mathcal S_d(\mathbb R) \right)$, $D \in C^1 \left([0,T], \mathbb R^d \right)$, $E \in C^1 \left([0,T], \mathbb R^d \right)$, $F \in C^1 \left([0,T], \mathbb R \right)$ to the system of ODEs \eqref{ODEsystem} on $[0,T]$ with terminal condition~\eqref{termcondABCDEF}.
\end{thm}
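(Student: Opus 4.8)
The plan is to reduce everything to the closed subsystem \eqref{ODEsystemABC} for $(A,B,C)$, equivalently the Matrix Riccati equation \eqref{PODE}--\eqref{termcondP} for $P$, and to prove that its maximal solution never explodes. Indeed, as noted in Remark \ref{remark1}, once $A,B,C$ are known on all of $[0,T]$, the equations for $D$ and $E$ are \emph{linear} with continuous coefficients and therefore admit unique solutions on the whole interval, while $F$ is recovered by a single integration; the only genuine difficulty is thus the global existence of $P$. Since the right-hand side of \eqref{PODE} is a polynomial (hence locally Lipschitz) map of $P$, the Cauchy--Lipschitz theorem provides a unique maximal solution on some interval $(\tau^\ast, T]$ with $\tau^\ast \in [-\infty, T)$, and uniqueness on $[0,T]$ will follow at once from this local uniqueness. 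I would therefore aim to show $\tau^\ast = -\infty$ (in particular $\tau^\ast < 0$).

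First I would invoke the standard escape-of-compacts criterion for ODEs: if $\tau^\ast$ were finite, then necessarily $\|P(t)\| \to +\infty$ as $t \downarrow \tau^\ast$. Hence it suffices to establish an \emph{a priori} bound on $\|P(t)\|$ that is uniform over the maximal interval. My strategy for this is not to argue analytically on the Riccati equation itself --- the indefiniteness of $U$ (block $-\eta^{-1}$ against $2\gamma\Sigma$) precludes the classical comparison results --- but to exploit the probabilistic representation furnished by the verification theorem.

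The key step is the following. For every $\tau \in (\tau^\ast, T)$ the restriction of $(A,B,C,D,E,F)$ to $[\tau,T]$ is a $C^1$ solution of \eqref{ODEsystem}, so Proposition \ref{prop1} and Theorem \ref{verification} give $w = u$ on $[\tau,T]$, where $u$ is the \emph{intrinsic} value function \eqref{vfdef} of the control problem. Consequently the ansatz \eqref{ansatz1} forces
\[
\theta(t,q,S) = -\tfrac{1}{\gamma}\log\bigl(-u(t,x,q,S)\bigr) - x - q^\intercal S \qquad \text{on } (\tau^\ast,T],
\]
where the $x$-dependence cancels, so $\theta$ --- and hence $P$, which is its $(q,S)$-Hessian up to a factor $\tfrac12$ --- is entirely determined by $u$ and is independent of $\tau$. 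I would then bound $\theta$ from both sides by quadratic forms with constants independent of $t$. For the lower bound I would plug the admissible control $v \equiv 0$ into \eqref{vfdef} and use the Gaussian Laplace transform of the Ornstein--Uhlenbeck marginal $S_T^{t,S}$ to obtain an explicit quadratic lower bound $\theta(t,q,S) \ge -c_1\bigl(1 + \|q\|^2 + \|S\|^2\bigr)$. For the upper bound I would compare $u$ with the value function $u^0$ of the cost-free limit problem treated in the appendix: removing the nonnegative execution cost $L$ and terminal penalty $\ell$ can only raise the attainable utility, so $u \le u^0$, and the explicit quadratic form of the corresponding $\theta^0$ yields $\theta(t,q,S) \le c_2\bigl(1 + \|q\|^2 + \|S\|^2\bigr)$. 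Writing $(q,S) = \lambda z$ and letting $\lambda \to \infty$ extracts from these two inequalities the matrix bounds $-c_1 I_{2d} \preceq P(t) \preceq c_2 I_{2d}$, uniformly on $(\tau^\ast,T]$, which contradicts the blow-up above.

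This forces $\tau^\ast = -\infty$, giving $A,B,C \in C^1([0,T])$. As explained above, the linear ODEs for $D$ and $E$ then have unique global solutions on $[0,T]$, $F$ follows by integration, and local uniqueness propagates to uniqueness on the whole interval. I expect the main obstacle to be precisely the upper \emph{a priori} estimate: it is the direction controlling the indefinite quadratic term, and it seems to require the control-theoretic comparison $u \le u^0$ rather than any direct monotonicity argument on \eqref{PODE}.
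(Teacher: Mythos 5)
Your proposal is correct and follows essentially the same route as the paper: reduction to the Matrix Riccati equation for $P$, Cauchy--Lipschitz for a maximal solution, then \emph{a priori} bounds on $\theta$ obtained from the verification theorem --- the lower bound via the sub-optimal control $v \equiv 0$ and the Gaussian computation for the Ornstein--Uhlenbeck marginal, the upper bound via comparison with the cost-free Merton problem of Appendix \ref{annexMerton} --- converted into matrix bounds on $P$ that preclude finite-time blow-up. The only cosmetic difference is that you spell out the scaling argument $(q,S) = \lambda z$, $\lambda \to \infty$, which the paper leaves implicit when passing from the quadratic-polynomial inequalities to the order on symmetric matrices.
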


\begin{proof}
To prove Theorem \ref{existence}, it is enough, as explained in Remark \ref{remark1}, to show existence and uniqueness for $A \in C^1 \left([0,T], \mathcal S_d(\mathbb R) \right)$, $B \in C^1 \left([0,T], \mathcal M_d(\mathbb R) \right)$, and $C \in C^1 \left([0,T], \mathcal S_d(\mathbb R) \right)$, or equivalently, existence and uniqueness on $[0,T]$ of a solution $P \in C^1 \left([0,T], \mathcal S_{2d}(\mathbb R) \right) $ to \eqref{PODE} with terminal condition \eqref{termcondP}.\\

By Cauchy-Lipschitz theorem, there exists a unique maximal solution\footnote{The fact that $A$ and $C$ are symmetric is itself a consequence of Cauchy-Lipschitz theorem since $(A,B,C)$ and $(A^\intercal, B, C^\intercal)$ are solution of the same Cauchy problem.} $(A,B,C)$ to the system of ODEs \eqref{ODEsystemABC} with terminal condition \eqref{termcondABCDEF} defined on an open interval $(t_{\text{min}}, t_{\text{max}}) \ni T$, and by Theorem \ref{verification}, the associated function $w$ defined by \eqref{ansatz1} corresponds to the value function of the problem restricted to $[\tau, T]$ for all $\tau \in (t_{\text{min}}, T)$.\\

To prove our result, we need to show that $t_{\text{min}} = -\infty$. For that purpose, our strategy consists in proving that the matrix $P(t) = \begin{pmatrix} A(t) & \frac 12 B(t) \\ \frac 12 B(t)^\intercal & C(t) \end{pmatrix}$ cannot blow up in finite time. This is proved by, first, finding (thanks to the control problem) lower and upper bounds for the function $\theta$ which are, like $\theta$, polynomials of degree at most $2$ in $(q,S)$, and then converting these bounds into bounds for $P(t)$ in the sense of the natural order on symmetric matrices.\footnote{For $\underline M,\overline M  \in \mathcal S_d(\mathbb R)$, $\underline M \leq \overline M $ if and only if $\overline M - \underline M  \in \mathcal S^{+}_d(\mathbb R)$.}\\ 

By contradiction, let us assume that $t_{\text{min}} \in (-\infty, T)$ and let $\tau \in (t_{\text{min}}, T)$.\\ 
 
Let $(t,x,q,S) \in [\tau,T] \times \mathbb R \times \mathbb R^d \times \mathbb R^d$ and let us consider the sub-optimal strategy $v = (0)_{s \in [t,T]} \in \mathcal A_t$ for which $\forall s \in [t,T], q^{t,q,v}_s=q$ and
\begin{align}
\label{avantLaplace}
     \mathbb E \left[ - \exp \left(-\gamma \left( X_T^{t,x,S,v} + (q^{t,q,v}_T)^\intercal S^{t,S}_T - \ell(q^{t,q,v}_T) \right)\right) \right] = \mathbb E\left[ - \exp \left(-\gamma \left(x + q^\intercal S + q^\intercal \left(S^{t,S}_T-S\right) - q^\intercal \Gamma q  \right) \right) \right].
\end{align}

Since $(S^{t,S}_s)_{s \in [t,T]}$ follows multivariate Ornstein-Uhlenbeck dynamics, we know that \vspace{-0.2cm}
\begin{align}
\label{MOUint}
    S^{t,S}_T - S = \left(I - e^{-R\left(T-t\right)} \right) \left(\overline{S} - S\right) + \int_{t}^T e^{-R\left(T-s\right)}VdW_s.
\end{align}\vspace{-0.18cm}

Then $S^{t,S}_T - S \sim \mathcal{N} \left(\left(I - e^{-R\left(T-t\right)} \right) \left(\overline{S} - S\right),\Sigma_t\right)$, where the covariance matrix is defined by\vspace{-0.2cm} $$\Sigma_{t} = \int_{t}^T e^{-R\left(T-s\right)} \Sigma e^{-R^\intercal\left(T-s \right)} ds.\vspace{-0.2cm}$$

Then,\vspace{-0.2cm}
\begin{align*}
&\mathbb E \left[ - \exp \left(-\gamma \left( X_T^{t,x,S,v} + (q^{t,q,v}_T)^\intercal S^{t,S}_T - \ell(q^{t,q,v}_T) \right)\right) \right]\nonumber\\
=\quad& -\exp \left(-\gamma(x + q^\intercal S)\right) \exp \left( -\gamma \left(q^\intercal \left(I - e^{-R\left(T-t\right)} \right) \left(\overline{S} - S\right) - q^\intercal \Gamma q - \frac{1}{2} \gamma q^\intercal \Sigma_t q \right) \right).
\end{align*}

Since the strategy is sub-optimal, if we consider $\theta$ defined as in \eqref{ansatz2}, we have by Theorem \ref{verification}
\begin{align}
\label{lowerBound}
    -\exp \left(-\gamma \left(x+q^\intercal S + \theta(t,q,S) \right) \right) & \geq -\exp \left(-\gamma(x + q^\intercal S)\right) \exp \left( -\gamma \left(q^\intercal \left(I - e^{-R\left(T-t\right)} \right) \left(\overline{S} - S\right) - q^\intercal \Gamma q - \frac{1}{2} \gamma q^\intercal \Sigma_t q \right) \right).
\end{align}

We conclude that for all $(t,q,S) \in [\tau,T] \times \mathbb R^d \times \mathbb R^d$,
\begin{align*}
&\theta(t,q,S) = \begin{pmatrix} q \\ S \end{pmatrix}^\intercal P(t) \begin{pmatrix} q \\ S \end{pmatrix} + \begin{pmatrix} D(t) \\ E(t)\end{pmatrix}^\intercal\begin{pmatrix} q \\ S\end{pmatrix} + F(t)\\
\geq\quad & \begin{pmatrix} q \\ S \end{pmatrix}^\intercal \begin{pmatrix} - \frac{\gamma}{2} \Sigma_t - \Gamma & - \frac12  \left(I - e^{-R\left(T-t\right)} \right) \\ - \frac12  \left(I - e^{-R^\intercal \left(T-t\right)} \right) & 0 \end{pmatrix}\begin{pmatrix} q \\ S \end{pmatrix} + \overline{S}^\intercal \left(I - e^{-R^\intercal\left(T-t\right)} \right) q.\\
\end{align*}

We therefore necessarily have, for the natural order on symmetric matrices,
$$\forall t \in [\tau,T], \quad P(t) \geq \begin{pmatrix} - \frac{\gamma}{2} \Sigma_t - \Gamma & - \frac12  \left(I - e^{-R\left(T-t\right)} \right) \\ - \frac12  \left(I - e^{-R^\intercal \left(T-t\right)} \right) & 0 \end{pmatrix}.$$

Now, for $(t,x,q,S) \in [\tau,T] \times \mathbb R \times \mathbb R^d \times \mathbb R^d$, we have
\begin{align*}
   & \underset{v \in \mathcal A_t}{\sup} \mathbb E \left[ - \exp \left(-\gamma \left( X^{t,x,S,v}_T + (q^{t,q,v}_T)^\intercal S^{t,S}_T - (q^{t,q,v}_T)^\intercal \Gamma q^{t,q,v}_T \right) \right) \right]\\
   =  &  \underset{v \in \mathcal A_t}{\sup} \mathbb E \left[ - \exp \left(-\gamma \left(x + q^{\intercal} S + \int_t^{T} (q^{t,q,v}_s)^\intercal dS_s - \int_t^{T} L(v_s)ds - (q^{t,q,v}_T)^\intercal \Gamma q^{t,q,v}_T \right) \right) \right] \nonumber \\
    \leq & \exp \left(-\gamma \left( x + q^{\intercal} S\right) \right) \underset{v \in \mathcal A_t}{\sup} \mathbb E \left[ - \exp \left(-\gamma \left(\int_t^{T} (q^{t,q,v}_s)^\intercal dS_s \right) \right) \right],
\end{align*}

If $(v_s)_{s\in [t,T]} \in \mathcal A_t$, it is straightforward to see that the process $(q^{t,q,v}_s)_{s \in [t,T]}$ is in the space of admissible controls $\mathcal A_t^{Merton}$ defined by \eqref{Amerton} in Appendix \ref{annexMerton} (in which we study a Merton problem that can be regarded as a limit case of ours when the execution costs and terminal costs vanish). Therefore,
\begin{align}
    \label{ineqUpperBound}
    &\underset{v \in \mathcal A_t}{\sup} \mathbb E \left[ - \exp \left(-\gamma \left(  X^{t,x,S,v}_T + (q^{t,q,v}_T)^\intercal S^{t,S}_T - (q^{t,q,v}_T)^\intercal \Gamma q^{t,q,v}_T  \right) \right) \right] \nonumber\\
    \leq & \exp \left(-\gamma \left( x + q^{\intercal} S\right) \right) \underset{(q_s)_{s \in [t,T]} \in \mathcal A_t^{Merton}}{\sup} \mathbb E \left[ - \exp \left(-\gamma \left(\int_t^{T} q^{\intercal}_s dS_s \right) \right) \right].
\end{align}

As shown in Appendix \ref{annexMerton}, inequality \eqref{ineqUpperBound} writes
\begin{align*}
    -\exp \left(-\gamma \left(x+q^\intercal S + \theta(t,q,S) \right) \right) \leq -\exp \left(-\gamma \left(x+q^\intercal S + \hat{\theta}(t,S) \right) \right),
\end{align*}
where $\hat \theta(t,S) = S^\intercal \hat C(t) S + \hat E(t)^\intercal S + \hat F(t)$ with $\hat C \in C^1 \left([\tau,T], \mathcal S_d(\mathbb R) \right)$, $\hat E \in C^1 \left([\tau,T], \mathbb R^d \right)$, $\hat F \in C^1 \left([\tau,T], \mathbb R \right)$ defined by
$$
\begin{cases}
\hat C(t) = \frac 1{2\gamma} (T-t)  R^\intercal \Sigma^{-1} R,\\
\hat E(t) = -\frac 1{\gamma} (T-t)  R^\intercal \Sigma^{-1} R \overline S,\\
\hat F(t) = \frac 1{4\gamma} (T-t)^2 \text{Tr}\left( R^\intercal \Sigma^{-1} R \Sigma \right) + \frac{1}{2\gamma} (T-t)  \overline S^\intercal R^\intercal \Sigma^{-1} R.
\end{cases}
$$

We conclude that for all $(t,q,S) \in [\tau,T] \times \mathbb R^d \times \mathbb R^d$,
\begin{align*}
&\theta(t,q,S) = \begin{pmatrix} q \\ S \end{pmatrix}^\intercal P(t) \begin{pmatrix} q \\ S \end{pmatrix} + \begin{pmatrix} D(t) \\ E(t)\end{pmatrix}^\intercal\begin{pmatrix} q \\ S\end{pmatrix} + F(t)\\
\leq\quad & \begin{pmatrix} q \\ S \end{pmatrix}^\intercal \begin{pmatrix} 0 & 0 \\0 & \hat C(t)\end{pmatrix}\begin{pmatrix} q \\ S \end{pmatrix} + \begin{pmatrix} 0 \\ \hat E(t)\end{pmatrix}^\intercal\begin{pmatrix} q \\ S\end{pmatrix} + \hat F(t).\\
\end{align*}
Therefore,
$$\forall t \in [\tau, T], \quad P\left(t\right) \leq \begin{pmatrix} 0 & 0 \\0 & \hat C(t) \end{pmatrix} = \begin{pmatrix} 0 & 0 \\0 & \frac 1{2\gamma}(T-t)  R^\intercal \Sigma^{-1} R \end{pmatrix}.$$

We have therefore $\forall \tau \in (t_\text{min}, T)$, $\forall t \in [\tau,T]$:
$$\begin{pmatrix} - \frac{\gamma}{2} \Sigma_t - \Gamma & - \frac12  \left(I - e^{-R\left(T-t\right)} \right) \\ - \frac12  \left(I - e^{-R^\intercal \left(T-t\right)} \right) & 0 \end{pmatrix} \leq P\left(t\right) \leq  \begin{pmatrix} 0 & 0 \\0 & \frac 1{2\gamma} (T-t)  R^\intercal \Sigma^{-1} R \end{pmatrix}.$$

As $t_\text{min}$ is supposed to be finite, there exists $\underline M,\overline M \in \mathcal S_d(\mathbb R)$ with $\underline M \le \overline M$ such that $\forall t \in [t_\text{min},T]$, $P(t)$ stays in the compact set 
$ \{ M \in \mathcal S_d(\mathbb R)\,  \left|  \underline M\leq M \leq  \overline M \right. \}$. This contradicts the maximality of the solution, hence $t_\text{min} = -\infty$.\\ 
\end{proof}

Theorem \ref{existence} implies that Theorem \ref{verification} can be applied with $\tau = 0$. In particular, our optimal execution problem is solved and the optimal strategy is given by the closed-loop feedback control   \eqref{optcontrol}. In the next section, we illustrate our results with simulations of prices and numerical approximations of the optimal strategies.

\section{Numerical results}

In this section, we present several applications of our results. We first exemplify the use of the optimal strategy derived in the above section by a trader wishing to unwind a single-asset portfolio. In particular, we show that the optimal liquidation strategy in our model with mean reversion is really different from that derived in the Almgren-Chriss model. We also demonstrate the usefulness of our results for statistical arbitrage purposes in the one-asset case. The one-asset examples are based on data from the foreign exchange (FX) market. We then illustrate our results in the multi-asset case by considering a pair of two cointegrated French stocks. We start with a two-asset portfolio liquidation problem and compare the optimal liquidation strategy in our model with that obtained in a multi-asset Almgren-Chriss model. We then illustrate the use of our results for statistical arbitrage purposes (a pair trading strategy in our case).\footnote{Throughout this section, optimal trading strategies are computed by approximating the solution of the Riccati ODEs using implicit Euler schemes.}\\

In the Almgren-Chriss model used for carrying out comparisons, the price dynamics is of the form $dS_t = V_{\textrm{AC}} dW_t$, where $V_{\textrm{AC}} \in \mathcal M_{d,k}(\mathbb R)$, i.e. a simple Bachelier dynamics (with correlations). This dynamics differs from that of the OU model we use throughout the paper (i.e. $dS_t = R(\overline S - S_t) dt + V dW_t$) when $R \neq 0$. In particular, if prices exhibit mean reversion or a cointegrated behavior, as is the case in our examples, the classical Almgren-Chriss model shall not properly take the true multivariate dynamics of prices into account, with sometimes important consequences in terms of risk management.

\subsection{Single-asset case}

In order to illustrate the use of the optimal strategies we derived in the above section, let us start with a single-asset case. For that purpose, we use data from the FX market, in which asset prices often exhibit mean reversion. More precisely, we consider a FX futures contract (hereafter CDU1) on the currency pair Canadian Dollar (CAD) / US~Dollar~(USD) that is exchanged on the Chicago Mercantile Exchange. The contract specifications are given in Table \ref{table:CDU1}. 

\begin{table}[H]
\begin{center}
\begin{tabular}{c  c} 
 \hline
Underlying asset & Canadian Dollar \\ [0.5ex] 
Quotation currency & US Dollar \\ [0.5ex] 
Contract size & CAD 100000 \\ [0.5ex]
Expiry date & September 14, 2021 \\ 
\hline 
\end{tabular}
\end{center}
\caption {CDU1 contract specifications}
\label{table:CDU1}
\end{table}

We plot in Figure \ref{Asset_prices_CAD} the mid-price of CDU1,\footnote{CDU1 is usually quoted in USD cents per CAD. However, to use our model, the price must take account of the contract size and be the contract value in USD.} sampled every 60 seconds during the regular trading hours ($02$:$00$-$16$:$00$~Central Time),\footnote{Although CDU1 is quoted continuously with a 60-minute break each day beginning at $16$:$00$ CT, we only consider the trading hours between $02$:$00$ and $16$:$00$ CT because the contract is only liquid during these hours corresponding to European and American market activity.} over the three following trading days: August 11, August 12, and August 13, 2021. \\

\begin{figure}[!h]\centering
\includegraphics[width=0.88\textwidth]{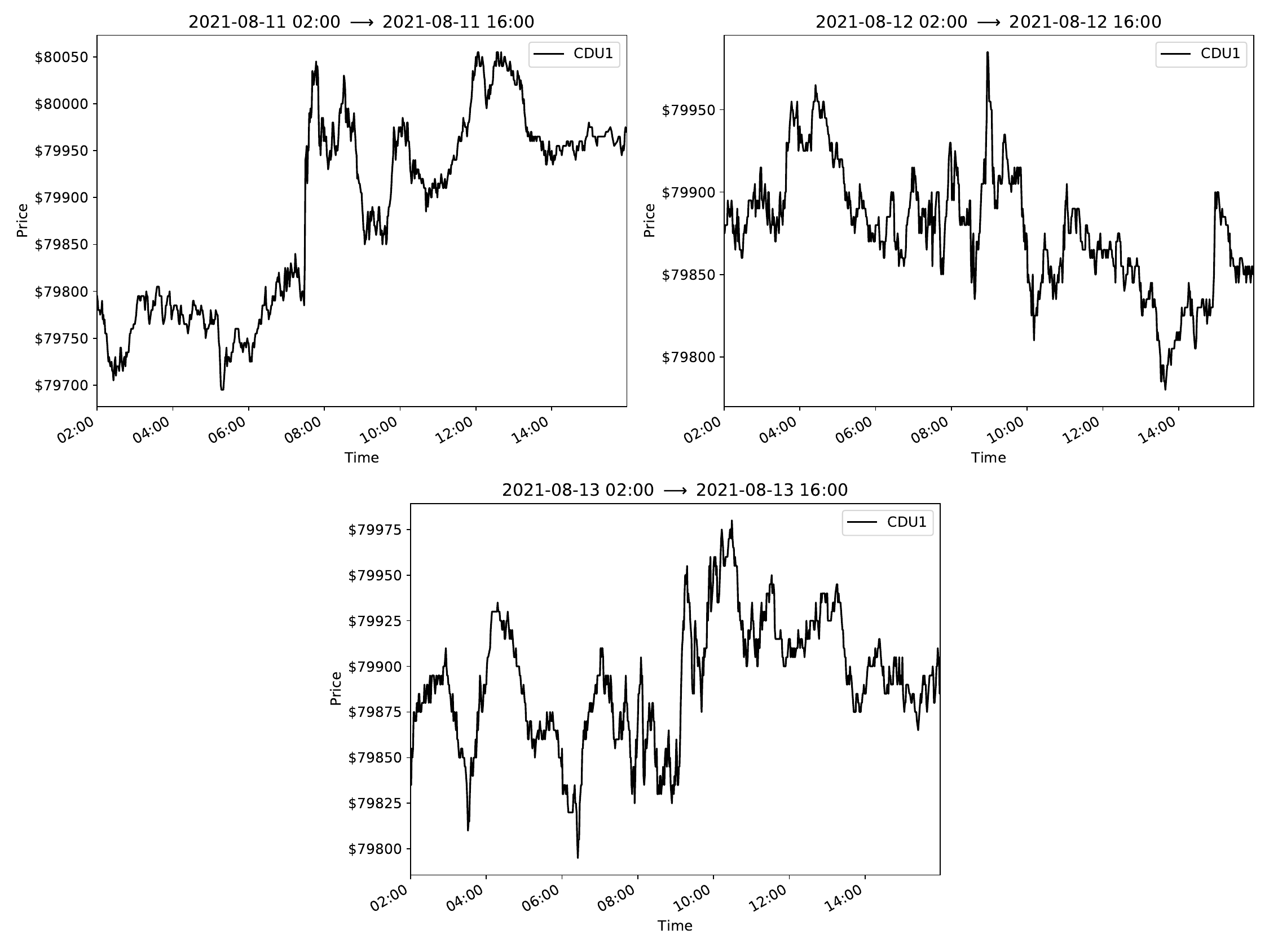}\\
\caption{Mid-price of CDU1 sampled every 60 seconds during the regular trading hours ($02$:$00$-$16$:$00$ CT). Top left: August~11, 2021. Top right: August 12, 2021. Bottom: August 13, 2021.}\label{Asset_prices_CAD}
\end{figure}

\subsubsection{Liquidation problem}

We consider the case of a trader wishing to unwind a long position in 2250 contracts\footnote{This represents roughly $5\%$ of the average daily traded volume over the period considered in this example.} during the third day, i.e. on August 13, 2021.\\

To exemplify the use of our strategies, we first estimate Ornstein-Uhlenbeck parameters using prices from the two preceding trading days: August 11 and August 12, 2021. Coefficients are classically estimated using least squares regression.\footnote{A time discretization of an Ornstein-Uhlenbeck model gives rise to an Auto-Regressive model of order $1$, or $\textrm{AR}(1)$. The parameters of an $\textrm{AR}(1)$ model are classically estimated by using least squares regression. Conversion of $\textrm{AR}(1)$ coefficients into their continuous-time counterparts is straightforward.} In order to set the value of the execution cost / temporary market impact parameter $\eta$, we use a similar argument as in \cite{almgren2001optimal}: we suppose that the additional cost incurred per contract when trading a given volume is proportional to the participation rate to the market. More precisely, for each percent of participation rate (in practice we consider a flat volume curve that matches the average daily volume), a cost corresponding to half the bid-ask spread\footnote{The average bid-ask spread is close to the tick value equal to $\$5$ per contract.} is incurred. Rounding values, this results in setting $\eta = 5 \cdot 10^{-3} \ \$ \cdot \textrm{day}$. For the terminal penalty parameter $\Gamma$, we set a high value to enforce complete liquidation by the end of the trading day. For the risk aversion parameter $\gamma$, we choose an intermediate value that does not neutralize any of the financial effects our model could illustrate. Choosing a too high value of the risk aversion parameter would force the trader to liquidate quickly with no illustration of the impact of mean reversion. On the contrary, choosing a too low value of the risk aversion parameter would result in the trader accumulating unrealistically large positions to benefit from mean reversion at the expense of the original liquidation problem.\\

The resulting values used to run our algorithms are given in Table \ref{table:1DParams}.\footnote{In the one-asset case, $\Sigma = VV^\intercal$ is a scalar. We classically write it as $\sigma^2$ and document the value of $\sigma$.}\\ 

\begin{table}[H]
\begin{center}
\begin{tabular}{c  c} 
 \hline 
Parameter & Value \\ [0.5ex] 
 \hline
 $T$ & $1\ \textrm{day}$ \\ [0.5ex] 
 $q_0$ & $2250$ \\ [0.5ex]
 $S_0$ & $\$  79835 $ \\ [0.5ex] 
$R$ & $5.1\ \textrm{day}^{-1}$ \\ [0.5ex] 
$\overline S$ & $\$ 79887$ \\ [0.5ex] 
$\sigma$ & $243.67\ \$ \cdot \textrm{day}^{- \frac 12}$ \\ [0.5ex] 
 $\eta$ & $ 5 \cdot 10^{-3} \ \$ \cdot \textrm{day}$ \\ [0.5ex] 
  $\Gamma$ & $\$ 100\ $ \\ [0.5ex] 
$\gamma$ & $2 \cdot 10^{-5}\ \textrm{\$} ^{-1}$ \\
 \hline 
\end{tabular}
\end{center}
\caption {Value of the parameters.}
\label{table:1DParams}
\end{table}

We plot in Figure \ref{Inventory_1d_CAD} the asset price trajectory $(S_t)_{t \in [0,T]}$ on August 13, 2021 and the inventory process $(q_t)_{t \in [0,T]}$ corresponding to the use of the optimal liquidation strategy derived in the previous section.\footnote{In what follows, this strategy is often referred to as ACOU (Almgren-Chriss under Ornstein-Uhlenbeck dynamics) strategy.} For comparison purposes, we also plot the inventory process when using a classical Almgren-Chriss (AC) strategy.\footnote{To compute the AC strategy, we estimate the parameter $V_\textrm{AC}$ of the Bachelier dynamics. This parameter is a scalar in our one-asset case and we denote it by $\sigma_\textrm{AC}$ instead of $V_\textrm{AC}$. A simple estimation  based on price increments leads to $\sigma_\textrm{AC} = 244.02 \ \$  \cdot \textrm{day}^{- \frac 12}$ which slightly differs from $\sigma$ because the drift term in the OU model captures part of the variance.}\\

The results shown in Figure \ref{Inventory_1d_CAD} deserve several remarks. First, the optimal liquidation strategy in our model with mean reversion is different from that derived in the Almgren-Chriss model. In particular, the liquidation process is significantly faster in the latter case because the process of unwinding the portfolio appears far riskier to a trader who believes that the price evolves as a Brownian motion than to another one who believes in a mean-reverting Ornstein-Uhlenbeck dynamics. Second, in the case of the ACOU strategy, the trader progressively unwinds her long position over the trading day but also takes advantage of mean reversion. When the price is below $\overline S$, the trader tends to reduce the pace of her selling process or even buys some contracts. Symmetrically, when the price is above $\overline S$, the trader tends to sell at a faster pace. One exception to the above should nevertheless be noticed, close to time $T$. Indeed, because of the high value of the final penalty, close to time $T$ the trader focuses more on liquidating her portfolio and cares less about price oscillations. In particular, she buys back some contracts to end up flat because she previously went short to bet on the reversion of the price towards~$\overline S$.

\begin{figure}[!h]\centering
\includegraphics[width=0.97\textwidth]{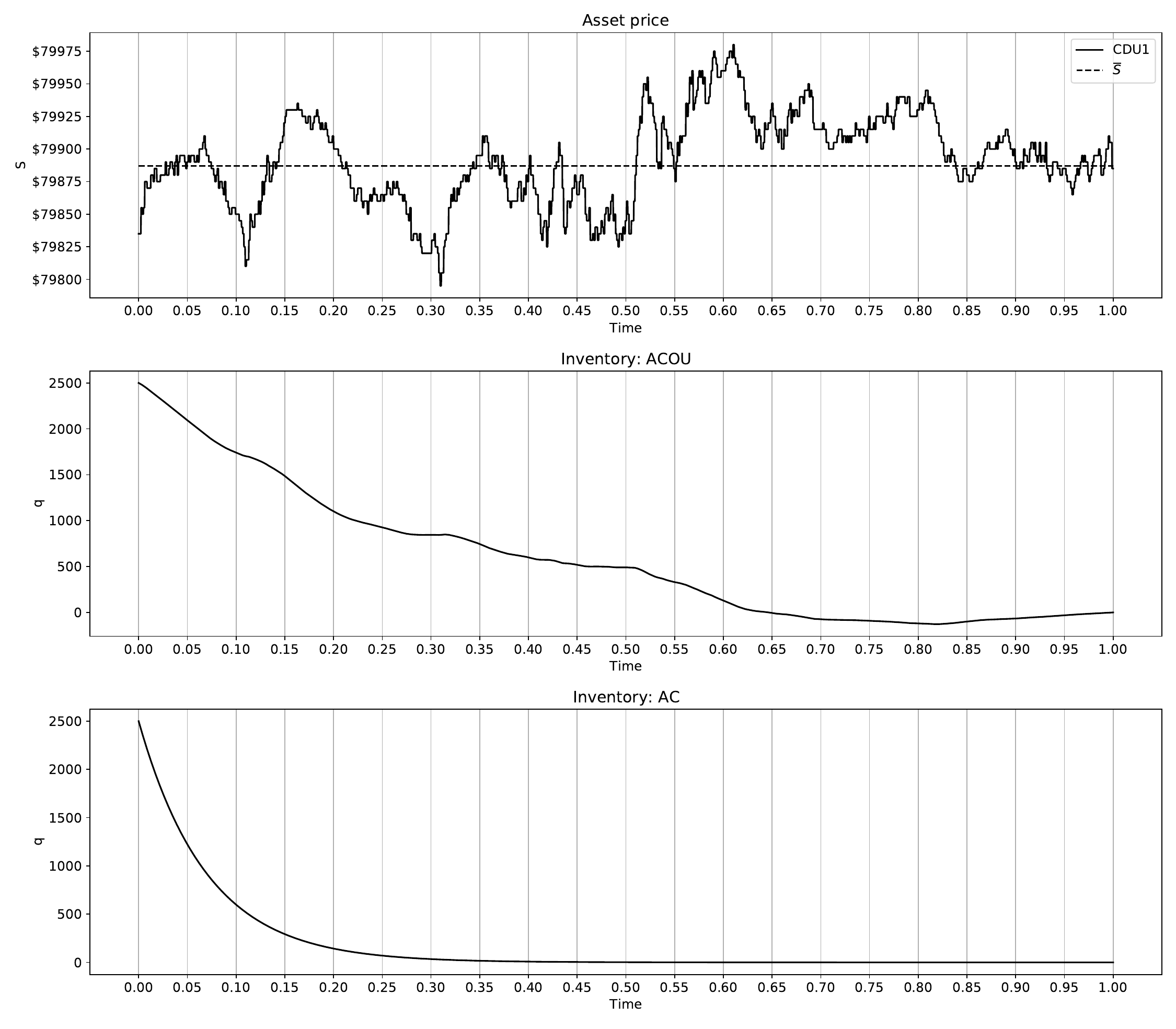}\\
\caption{Top: CDU1 price trajectory on August 13, 2021 -- $(S_t)_{t \in [0,T]}$. Middle: Trajectory of the inventory when using the optimal strategy corresponding to the estimated Ornstein-Uhlenbeck process -- $(q_t)_{t \in [0,T]}$. Bottom: Trajectory of the inventory when using the optimal strategy corresponding to a Brownian motion (Bachelier) model for the price (classical Almgren-Chriss strategy).}\label{Inventory_1d_CAD}
\end{figure}

\subsubsection{Impact of mean reversion}

We have just seen that mean reversion plays a key role in the characteristics of the strategy. In order to further study the impact of the mean-reversion parameter $R$, we consider the same parameters as in Table \ref{table:1DParams} except that we force the mean-reversion parameter $R$ to have the following values: $R = 0\ \textrm{day}^{-1}$, $R =3\ \textrm{day}^{-1}$, and $R = 10\ \textrm{day}^{-1}$. Moreover, unlike what we did previously, we consider now a price trajectory in line with the choice of the parameter $R$. For that purpose, we simulate price trajectories by using the same path of the Brownian motion but different values of $R$.\\

In Figure \ref{S_1d_exec}, we plot an instance of the (simulated) trajectories of the price process $(S_t)_{t\in [0,T]}$ for the different values of~$R$ and the corresponding inventory processes $(q_t)_{t \in [0,T]}$ -- along with the associated trading volume (or trading speed) processes $(v_t)_{t \in [0,T]}$ -- when using the optimal execution strategy.\\

\begin{figure}[!h]\centering
\includegraphics[width=\textwidth]{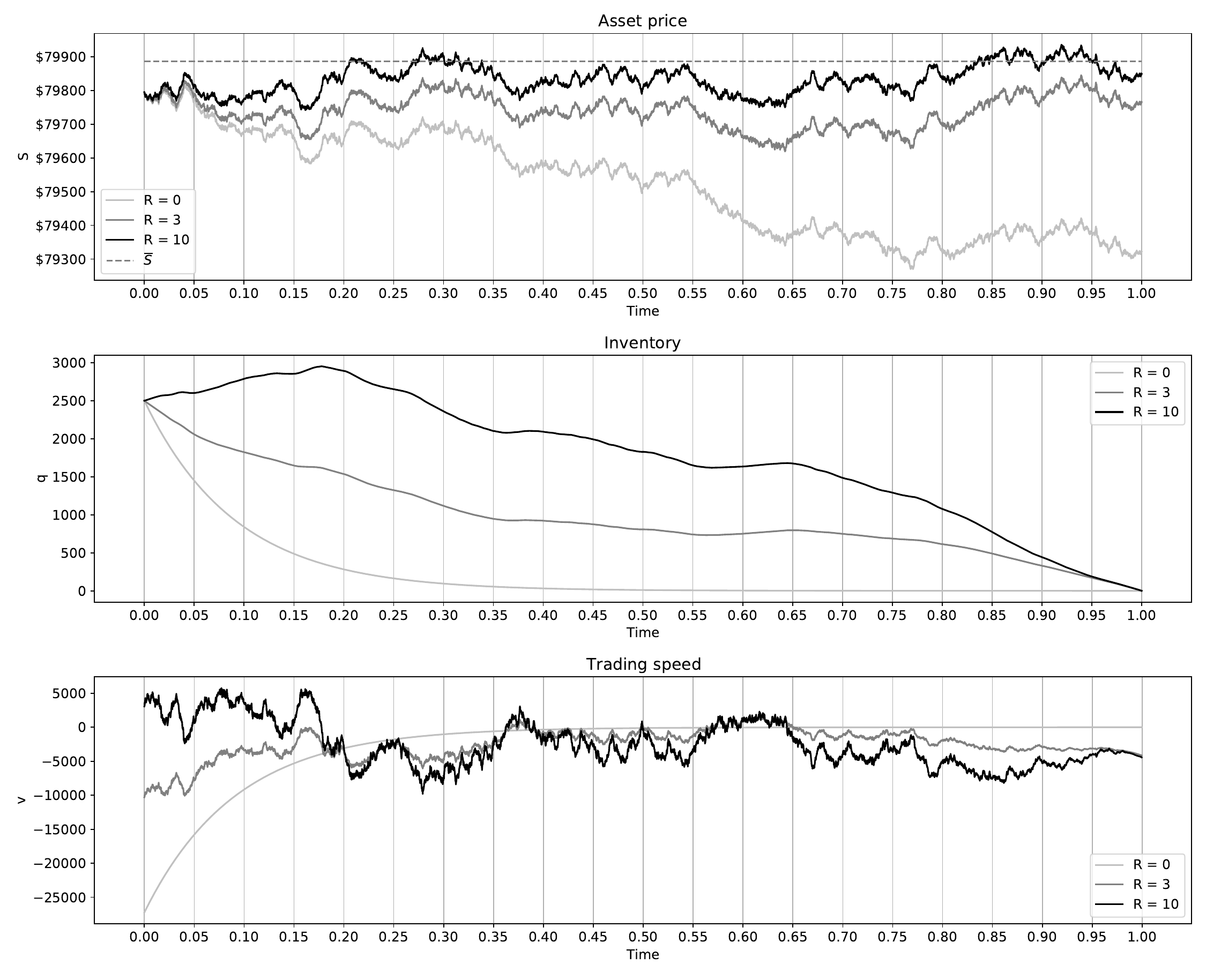}\\
\caption{Top: Simulated trajectories of the asset price for the different values of $R$ -- $(S_t)_{t \in [0,T]}$. Middle: Corresponding inventory processes when using the optimal execution strategy -- $(q_t)_{t \in [0,T]}$. Bottom: Trajectories of the corresponding trading volume (or trading speed) processes -- $(v_t)_{t \in [0,T]}$.}\label{S_1d_exec}
\end{figure}

The results of Figure \ref{S_1d_exec} confirm that the way the optimal strategy handles risk depends strongly on the mean-reversion parameter $R$. In particular, when $R$ is large, the trader acts almost\footnote{Close to time $T$, because of the high value of $\Gamma$, the trader focuses more on unwinding her portfolio and cares less about price oscillations, as can clearly be seen in Figure \ref{S_1d_exec}.} as if she was performing a VWAP/TWAP\footnote{VWAP and TWAP respectively mean Volume-Weighted Average Price and Time-Weighted Average Price. VWAP and TWAP strategies are commonly used by traders to execute orders at a price as close as possible to the average price of all transactions over a given period.} execution strategy plus a mean-reverting statistical arbitrage strategy. In particular, in the case where $R = 10\ \textrm{day}^{-1}$, the process $(v_t)_{t \in [0,T]}$ oscillates around its average (which is of course fixed by the total number of contracts to sell). These oscillations are highly correlated with those of $(S_t)_{t \in  [0,T]}$: the trader sells faster when the price is above $\overline S$ and slower (she even buys sometimes) when it is below~$\overline S$.

\subsubsection{Statistical arbitrage}

Given the observations of the previous subsection, it is natural to illustrate how our model can be used to build a statistical arbitrage strategy. For that purpose, we consider a trader with no initial inventory who starts trading the futures contract CDU1 on August 13, 2021 and wants to maximize the expected utility of her PnL at the end of the day (with no final penalty).\\

To run our algorithm we use the same parameters as in Table \ref{table:1DParams} except that $\Gamma = 0$ and $q_0 = 0$. The results are plotted in Figure \ref{Inventory_1d_StartArb_CAD}: the price process  $(S_t)_{t \in [0,T]}$, the inventory process $(q_t)_{t \in [0,T]}$ when using the optimal strategy, and the associated trajectory of the PnL, i.e. the process $(X_t + q_t S_t)_{t \in [0,T]}$.\\ 

\begin{figure}[!h]\centering
\includegraphics[width=\textwidth]{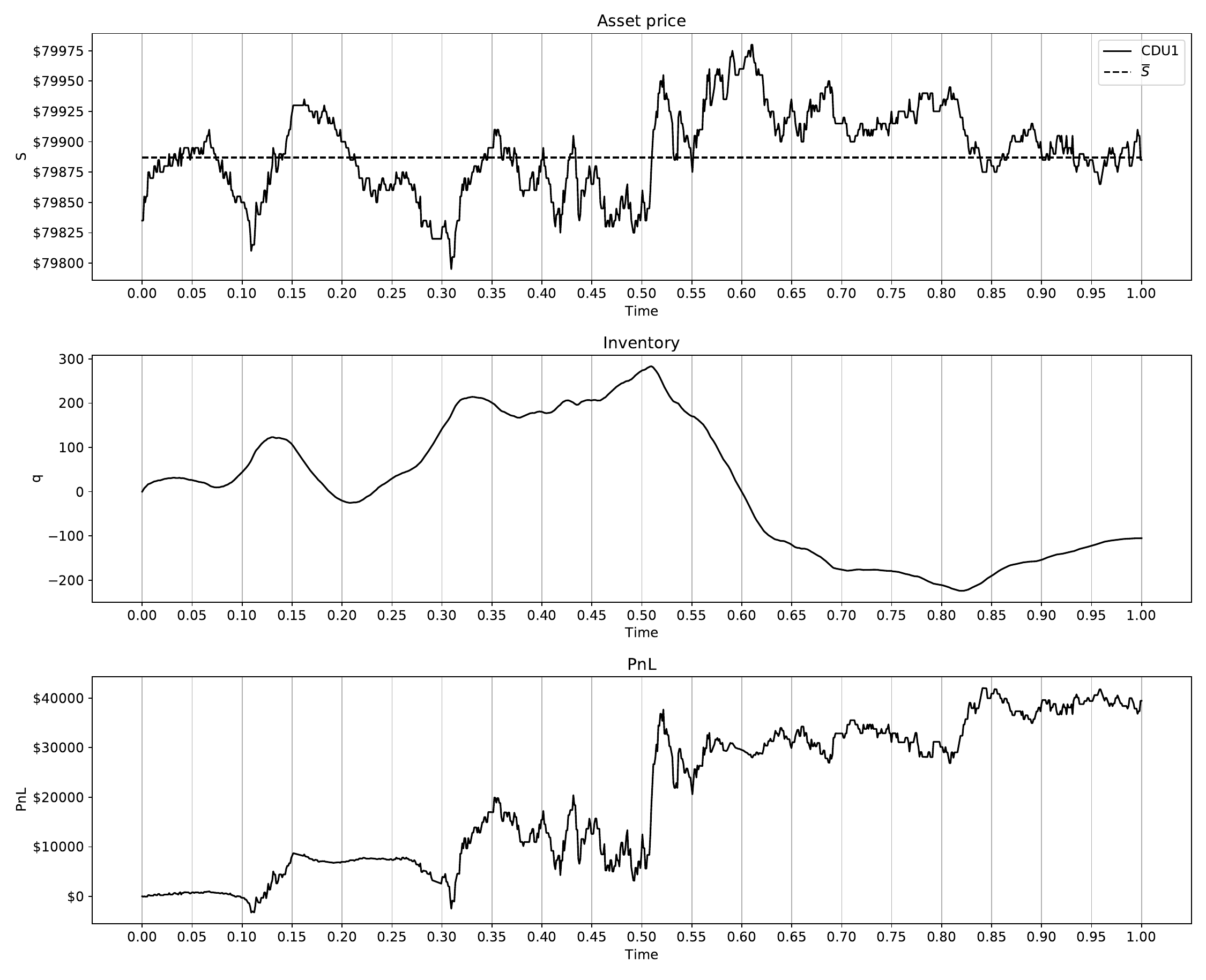}\\
\caption{Top: CDU1 price trajectory on August 13, 2021 -- $(S_t)_{t \in [0,T]}$. Middle: Trajectory of the inventory when using the optimal strategy starting from $q_0 = 0$ -- $(q_t)_{t \in [0,T]}$. Bottom: Corresponding trajectory of the PnL -- $(X_t + q_t S_t)_{t \in [0,T]}$.}\label{Inventory_1d_StartArb_CAD}
\end{figure}
\newpage

Because the trader believes that the price mean reverts around $\overline{S}$, her optimal strategy, in the absence of execution costs, would consist in having a long position when the price is below $\overline{S}$ and a short position when the price is above~$\overline{S}$. The optimal strategy when execution costs are taken into account is however more complex because the control of execution costs introduces inertia in the position of the trader. It consists instead in trading progressively to target a long position when the price is below $\overline{S}$ and a short position when the price is above $\overline{S}$. However, because of inertia, it happens that the position remains long while the price is still far above $\overline{S}$, as exemplified in Figure \ref{Inventory_1d_StartArb_CAD}. We see nevertheless that the strategy would have been profitable to the trader.\\

Another interesting experiment for assessing the performance of the strategy, when used for pure statistical arbitrage, consists in testing it on simulated price trajectories (using the same Ornstein-Uhlenbeck parameters as above for both simulating prices and computing the optimal strategies). We plot in Figure \ref{MtMval_StatArb_1d}, the distribution of the final PnL after $1500$ simulations of the price process. We see that our strategy allows to make money by taking advantage of the mean reversion: we get a positive final profit of $\$107698$ on average, with a standard deviation of $\$57791$, and the distribution looks skewed to the right (towards profits rather than losses).\\

\begin{figure}[!h]\centering
\includegraphics[width=\textwidth]{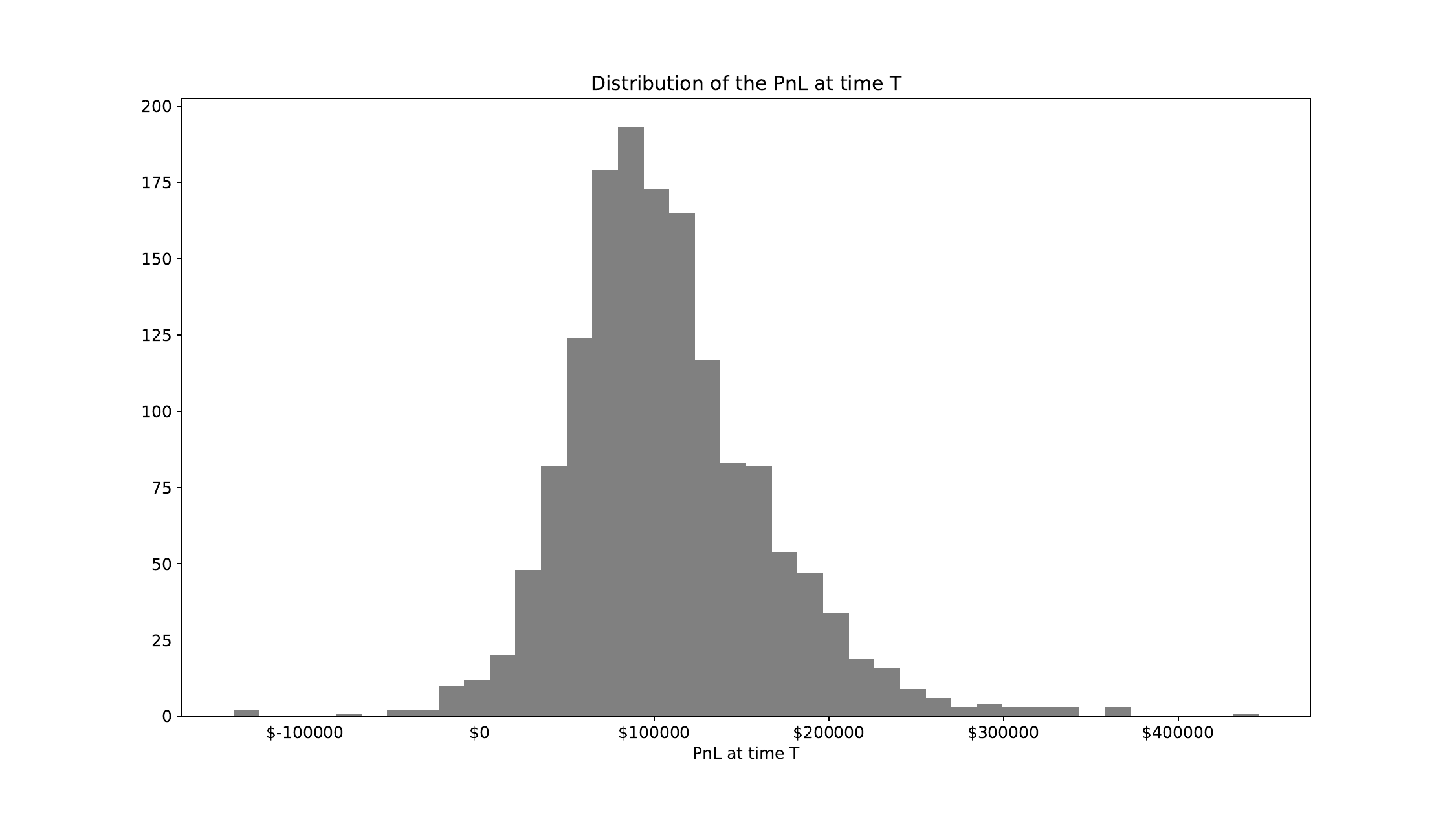}\\
\caption{Distribution of the final PnL for the statistical arbitrage strategy on CDU1 (for $1500$ simulations).}\label{MtMval_StatArb_1d}
\end{figure}

\subsection{Multi-asset case}

We now come to the use of our optimal strategies in the multi-asset case when asset prices exhibit a cointegrated behaviour that can be modeled by a multi-OU process. For that purpose, we use data from two French stocks within the banking sector: BNP Paribas (hereafter BNP) and Société Générale (hereafter GLE).\\

We plot in Figure \ref{S_BNP_SG} the mid-prices of BNP and GLE sampled every 60 seconds during the regular trading hours ($09$:$00$-$17$:$30$) over the week August 09-August 13, 2021. We clearly see that the stock prices of the two companies are driven by the same factors and should be cointegrated.\footnote{The existence of a cointegration vector is confirmed by a Johansen's cointegration test (see below).}\\

\begin{figure}[!h]\centering
\includegraphics[width=\textwidth]{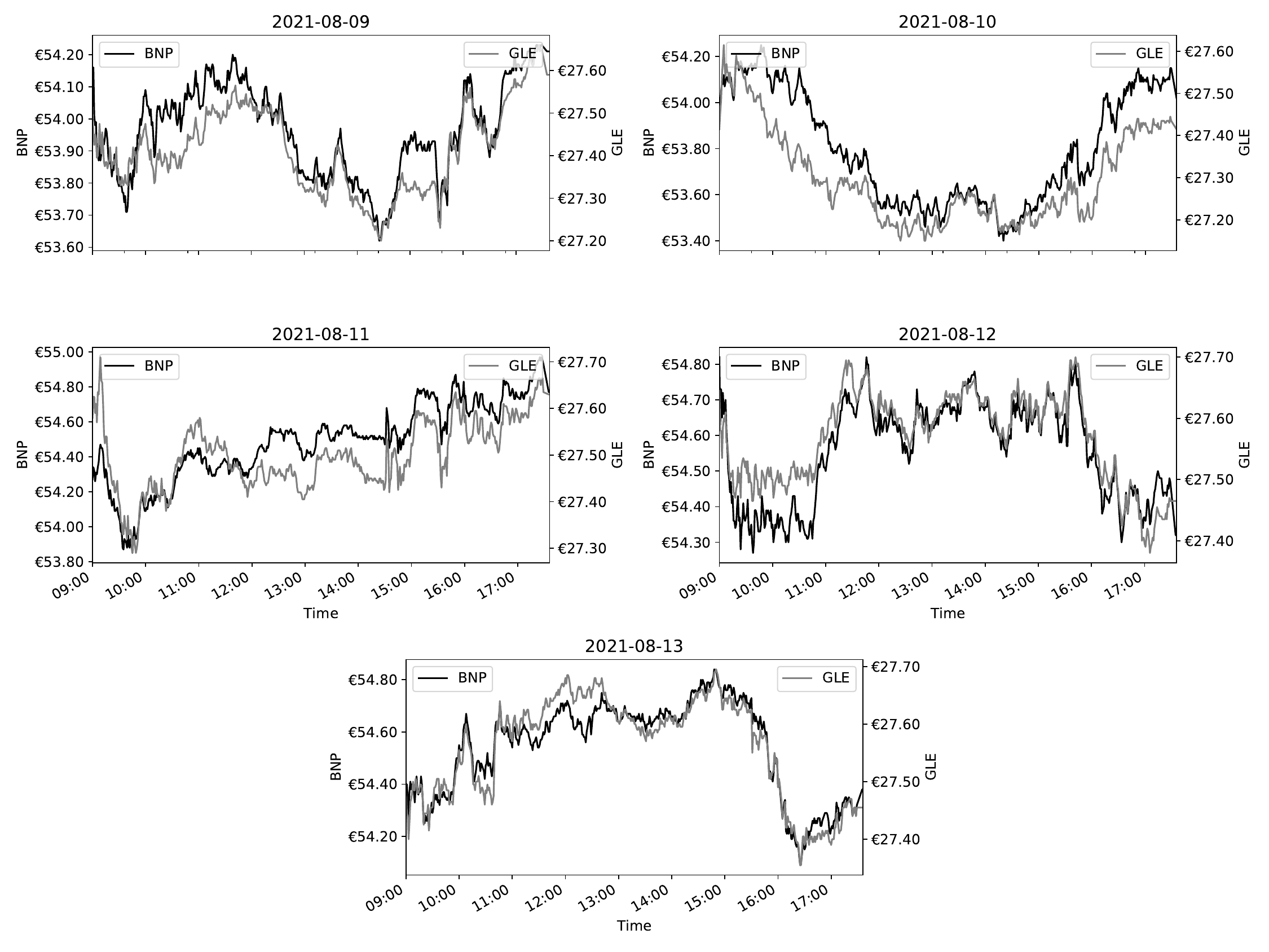}\\
\caption{Mid-prices of BNP (left axis) and GLE (right axis) sampled every 60 seconds during the regular trading hours ($09$:$00$-$17$:$30$) over the week August 09-August 13, 2021.}\label{S_BNP_SG}
\end{figure}
\vspace{5mm}
\subsubsection{Portfolio liquidation in the presence of cointegration}

We consider the case of a trader wishing to unwind a portfolio with $75000$ shares of BNP and $75000$ shares of GLE on August 13, 2021.\footnote{This represents roughly $5\%$ of the average daily traded volume over the period considered in this example.}\\

Similarly to our one-asset example, we consider that the trader estimates the parameters of a multi-OU model using prices from the four preceding trading days, here August 9, 10, 11, and 12, 2021. These parameters are estimated using classical linear regression techniques.\footnote{A time discretization of a multi-OU model gives rise to a Vector Auto-Regressive model of order $1$, or $\textrm{VAR}(1)$. The parameters of a $\textrm{VAR}(1)$ model are classically estimated by using least squares regression. Conversion of $\textrm{VAR}(1)$ coefficients into their continuous-time multi-OU counterparts is straightforward.} Referring to BNP and GLE by respectively using the superscripts $1$ and $2$, the estimated values of the parameters are given in Table \ref{table:MOUParams}.

\begin{table}[H]
\begin{center}
\begin{tabular}{c  c} 
 \hline 
Parameter & Estimate \\ [0.5ex] 
 \hline
 $R$ & $\begin{pmatrix} 0.33 & 3.95 \\ -2.52 & 10.23 \end{pmatrix} \ \textrm{day}^{-1}$ \\ [0.5ex] 
$\overline S$ & $\left(\overline S^1, \overline S^2 \right) = \left(\textrm{\euro} 54.23, \textrm{\euro} 27.45 \right) $ \\  [0.5ex] 
 $\Sigma$ & $\begin{pmatrix} 0.47 & 0.20 \\ 0.20 & 0.14 \end{pmatrix} \textrm{\euro}^2 \cdot \textrm{day}^{-1}$ \\
 \hline 
\end{tabular}
\end{center}
\caption {Multi-OU estimated parameters for the pair (BNP, GLE).}
\label{table:MOUParams}
\end{table}

The use of a Johansen's cointegration test\footnote{We use the Trace test and not the Maximum Eigenvalue test in what follows.} rejects the hypothesis of no cointegration but does not reject a cointegration rank $r = 1$ for the pair (BNP, GLE).\footnote{Johansen's approach is based on the VAR($1$) formulation $\Delta S_t = a + \Pi S_{t-1} + \epsilon_t$, where $a \in \mathbb R^d$, $\Pi \in \mathcal M_d (\mathbb R)$ and $\epsilon$ is normally distributed. In a nutshell, it iteratively tests the null hypothesis $\textrm{rank}(\Pi) \le k$ (corresponding to the existence of at most $k$ linearly independent cointegration vectors) for different values of $k$ using likelihood ratio statistics that follow tabulated distributions (see \cite{johansen1991estimation} for more details). In practice, the cointegration rank retained is the first value of $k$ for which the null hypothesis is not rejected.} Furthermore, the estimated value of the matrix $R$ suggests that the space of cointegration vectors is spanned by $(1, -3.46)$.\\

We give in Table \ref{table:johansenTestResultsTable} the detailed results for the Johansen's cointegration Trace test.\\

\begin{table}[H]
\begin{center}
\begin{tabular}{c c c c} 
 \hline
 Null Hypothesis &  Trace statistics & Critical Value & Conclusion\\ [0.5ex] 
 \hline
$r \leq 0$ & 16.77 & 15.49 & Rejected \\ [0.5ex]  
$r \leq 1$ & 2.293 & 3.841 & Not rejected \\ [1ex] 
 \hline
\end{tabular}
\end{center}
\caption {Johansen's cointegration Trace test results (critical values are given for a significance level of $95\%$).}
\label{table:johansenTestResultsTable}
\end{table}
\vspace{4mm}
To exemplify the use of our strategies and illustrate the different effects in the multi-asset case, we run our algorithms for two different values of the risk aversion parameter~$\gamma$. More precisely, we consider the parameters stated in Table \ref{table:MDParams} (values of the multi-OU parameters are not recalled, see Table \ref{table:MOUParams}).\footnote{The same logic as in the one-asset case has been applied for the choice of the parameters.}\\

\begin{table}[H]
\begin{center}
\begin{tabular}{c  c} 
 \hline 
Parameter & Value \\ [0.5ex] 
 \hline
 $T$ & $1\ \textrm{day}$ \\ [0.5ex]
 $q_0$ & $(75000, 75000)$ \\ [0.5ex]
 $S_0$ & $\left(S_0^1, S_0^2 \right) = \left(\textrm{\euro} 54.4, \textrm{\euro} 27.48 \right) $ \\ [0.5ex] 
 $\eta$ & $\begin{pmatrix} 4 \cdot 10^{-7}  & 0 \\ 0 & 2 \cdot 10^{-7} \end{pmatrix} \ \textrm{\euro} \cdot \textrm{day}$  \\ [0.5ex] 
  $\Gamma$ & $\textrm{\euro} 100 \times I_2$ \\ [0.5ex]
  $\gamma$ & $2 \cdot 10^{-5}\ \textrm{\euro} ^{-1}$ or $2 \cdot 10^{-3}\ \textrm{\euro} ^{-1}$ \\   
 \hline 
\end{tabular}
\end{center}
\caption {Value of the parameters.}
\label{table:MDParams}
\end{table}
\vspace{4mm}
We plot in Figures \ref{ACOU_OU_COMP_BNPGLE_SmallGamma} and \ref{ACOU_OU_COMP_BNPGLE_BigGamma} the trajectory on August 13, 2021 of the price process $(S_t)_{t \in [0,T]}$ and the spread process $((S^1_t - \overline{S}^1) - 3.46 (S^2_t - \overline{S}^2))_{t \in [0,T]}$ corresponding to the cointegration vector, along with the inventory process $(q_t)_{t \in [0,T]}$ corresponding to the use of the optimal liquidation strategy with $\gamma = 2 \cdot 10^{-5}\ \textrm{\euro} ^{-1}$ and $\gamma = 2 \cdot 10^{-3}\ \textrm{\euro} ^{-1}$ respectively.\footnote{As above, this strategy is referred to as ACOU (Almgren-Chriss under multivariate Ornstein-Uhlenbeck dynamics) strategy.} For comparison purposes, we also plot the inventory process when using a classical Almgren-Chriss (AC) strategy.\footnote{To compute the AC strategy, we estimate the parameter $V_\textrm{AC}$ of the Bachelier dynamics. A simple estimation based on price increments leads to $\Sigma_\textrm{AC} = V_\textrm{AC}{V_\textrm{AC}}^\intercal = \begin{pmatrix} 0.48 & 0.19 \\ 0.19 & 0.13 \end{pmatrix}\ \textrm{\euro}^2 \cdot \textrm{day}^{-1}$ which slightly differs from $\Sigma$ because the drift term in the OU model captures part of the variance.}
\\

\begin{figure}[!h]\centering
\includegraphics[width=0.95\textwidth]{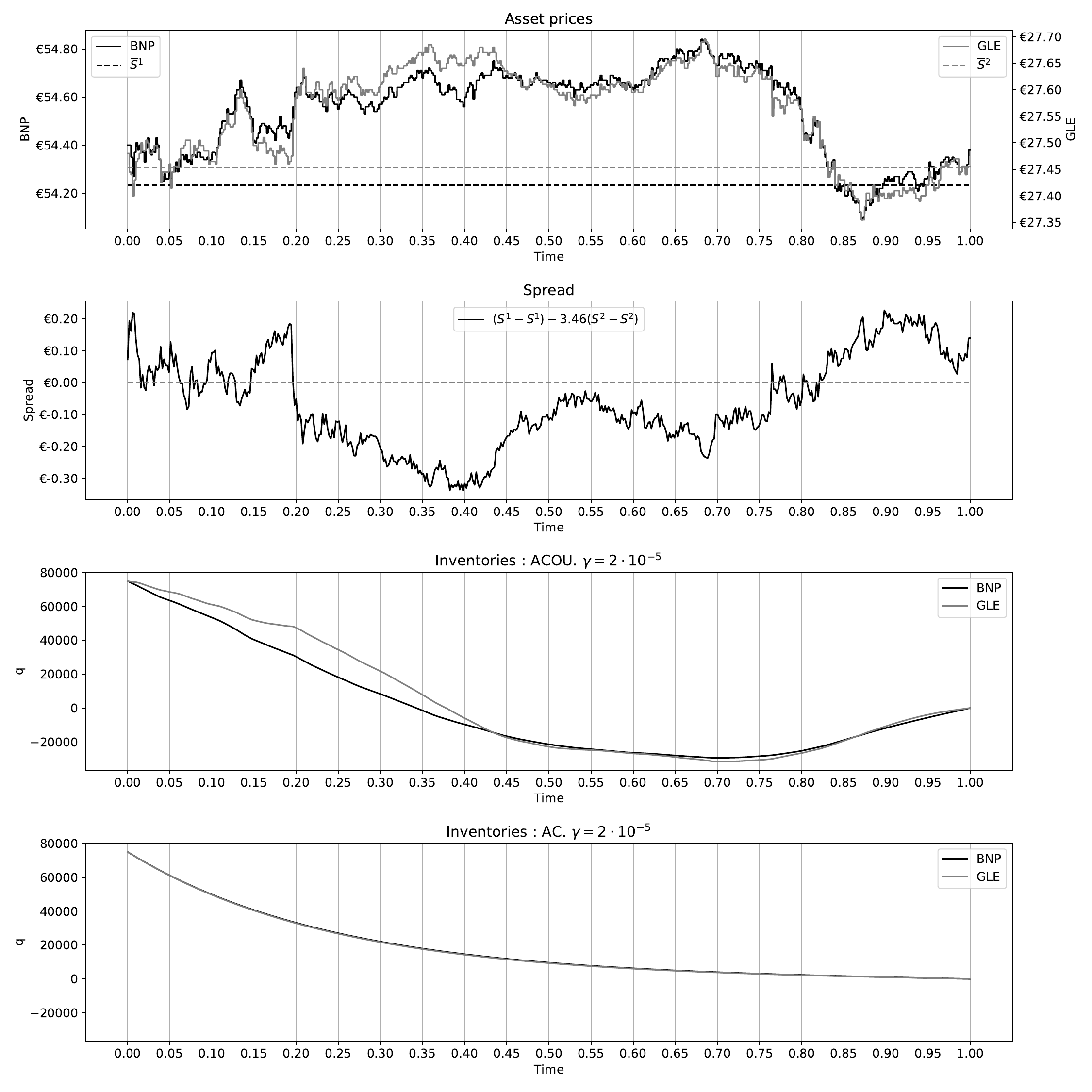}
\caption{Top two plots: BNP and GLE price trajectories on August 13, 2021 -- $(S_t)_{t \in [0,T]}$ -- and trajectory of the spread corresponding to the cointegration vector on August 13, 2021 -- $((S^1_t - \overline{S}^1) - 3.46 (S^2_t - \overline{S}^2))_{t \in [0,T]}$. Bottom two plots: Trajectory of the inventories when using the optimal strategy corresponding to the estimated multi-OU process with $\gamma = 2 \cdot 10^{-5}\ \textrm{\euro} ^{-1}$ -- $(q_t)_{t \in [0,T]}$ -- and when using the optimal strategy corresponding to correlated Brownian motions for the stock prices (classical Almgren-Chriss strategy with $\gamma = 2 \cdot 10^{-5}\ \textrm{\euro} ^{-1}$).}\label{ACOU_OU_COMP_BNPGLE_SmallGamma}
\end{figure}

\begin{figure}[!h]\centering
\includegraphics[width=0.92\textwidth]{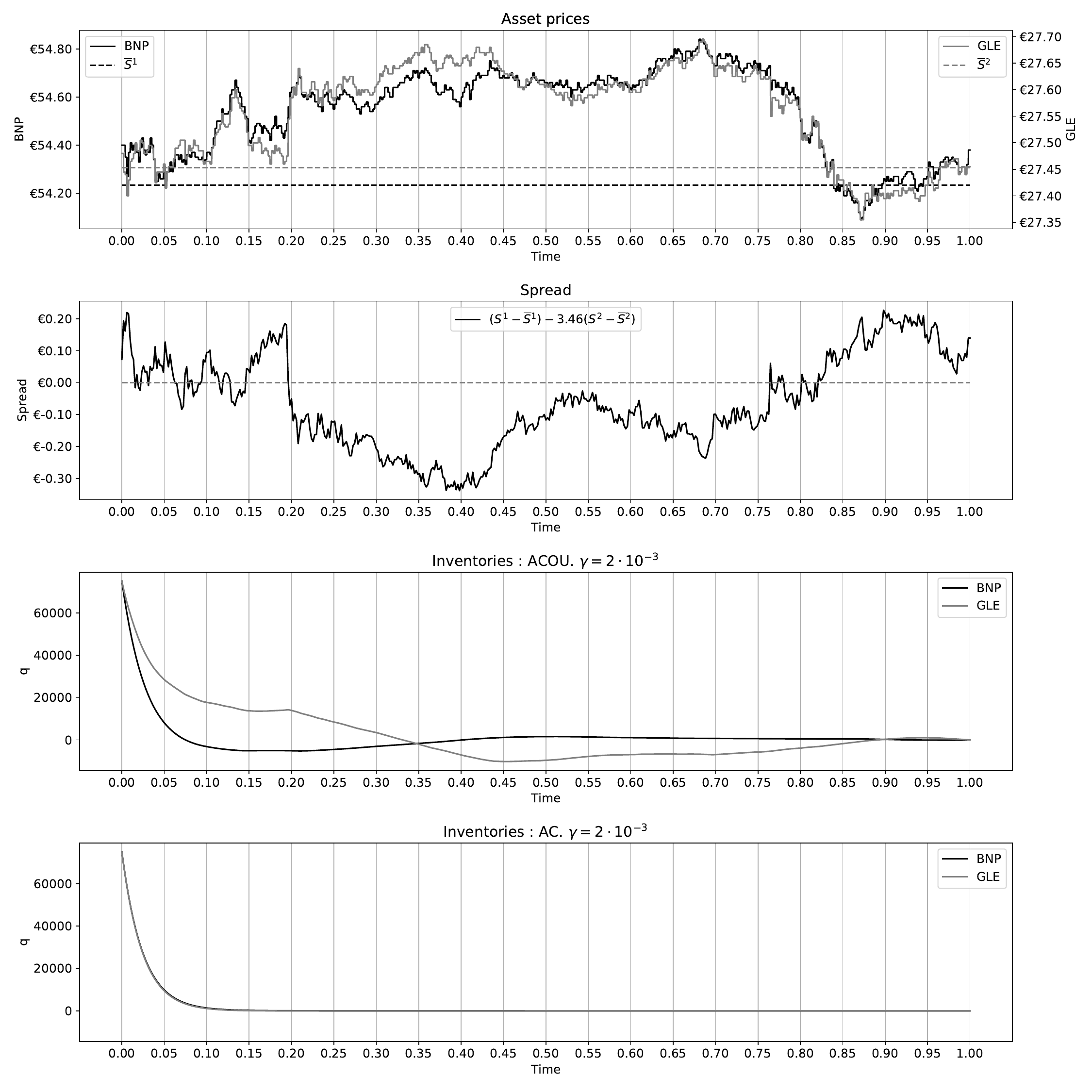}
\caption{Top two plots: BNP and GLE price trajectories on August 13, 2021 -- $(S_t)_{t \in [0,T]}$ --, and trajectory of the spread corresponding to the cointegration vector on August 13, 2021 -- $((S^1_t - \overline{S}^1) - 3.46 (S^2_t - \overline{S}^2))_{t \in [0,T]}$. Bottom two plots: Trajectory of the inventories when using the optimal strategy corresponding to the estimated multi-OU process with $\gamma = 2 \cdot 10^{-3}\ \textrm{\euro} ^{-1}$ -- $(q_t)_{t \in [0,T]}$ --, and when using the optimal strategy corresponding to correlated Brownian motions for the stock prices (classical Almgren-Chriss strategy with $\gamma = 2 \cdot 10^{-3}\ \textrm{\euro} ^{-1}$).}\label{ACOU_OU_COMP_BNPGLE_BigGamma}
\end{figure}

We clearly see, for both values of $\gamma$ that the ACOU strategy is different from the AC one\footnote{It is noteworthy that the trading curves for BNP and GLE are almost the same in the AC case. Because of the value of $S_0$ and $\eta$ this is unsurprising.} although they both succeed in unwinding the portfolio. This is easily understandable: because of the presence of the matrix $R$ in the multi-OU model, there is a drift in the dynamics of the prices that can be exploited to make money (while still controlling market risk).\\

It is interesting to understand the difference between what happens when $\gamma$ is small versus what happens when $\gamma$ is large because the trading curves exhibit very different properties. When $\gamma$ is small, we observe that the trader oversells the two stocks (before buying back close to time $T$ to unwind the portfolio) and speculates therefore on the reversion of the two stock prices towards $\overline{S}^1$ and $\overline{S}^2$ respectively -- because prices are above these values. On the contrary, when $\gamma$ is large, shorting the two stocks simultaneously appears too risky and, once the portfolio has been partially liquidated, the trader uses instead a long/short strategy. In that case, the trader seems in fact to apply a statistical arbitrage strategy related to the spread process $((S^1_t - \overline{S}^1) - 3.46 (S^2_t - \overline{S}^2))_{t \in [0,T]}$: for $t \ge 0.35$, when the spread process is below~$0$, the trader is long BNP and short GLE -- she buys the spread -- while it becomes (slightly) long GLE and (slightly) short BNP -- she shorts the spread -- at the very end of the period when the spread becomes positive. The fact that the trader ``trades the spread'' will appear even more clearly in what follows as we focus on statistical arbitrage strategies.\\ 
\vspace{-4mm}
\subsubsection{Statistical arbitrage}

Given the previous remarks, it is natural to illustrate the use of our model in the context of pure statistical arbitrage. For that purpose, we consider a trader with no initial inventory who starts trading on August 13, 2021 and wants to maximize the expected utility of her PnL at the end of the day (with no final penalty).\\

\begin{figure}[!h]\centering
\includegraphics[width=0.85\textwidth]{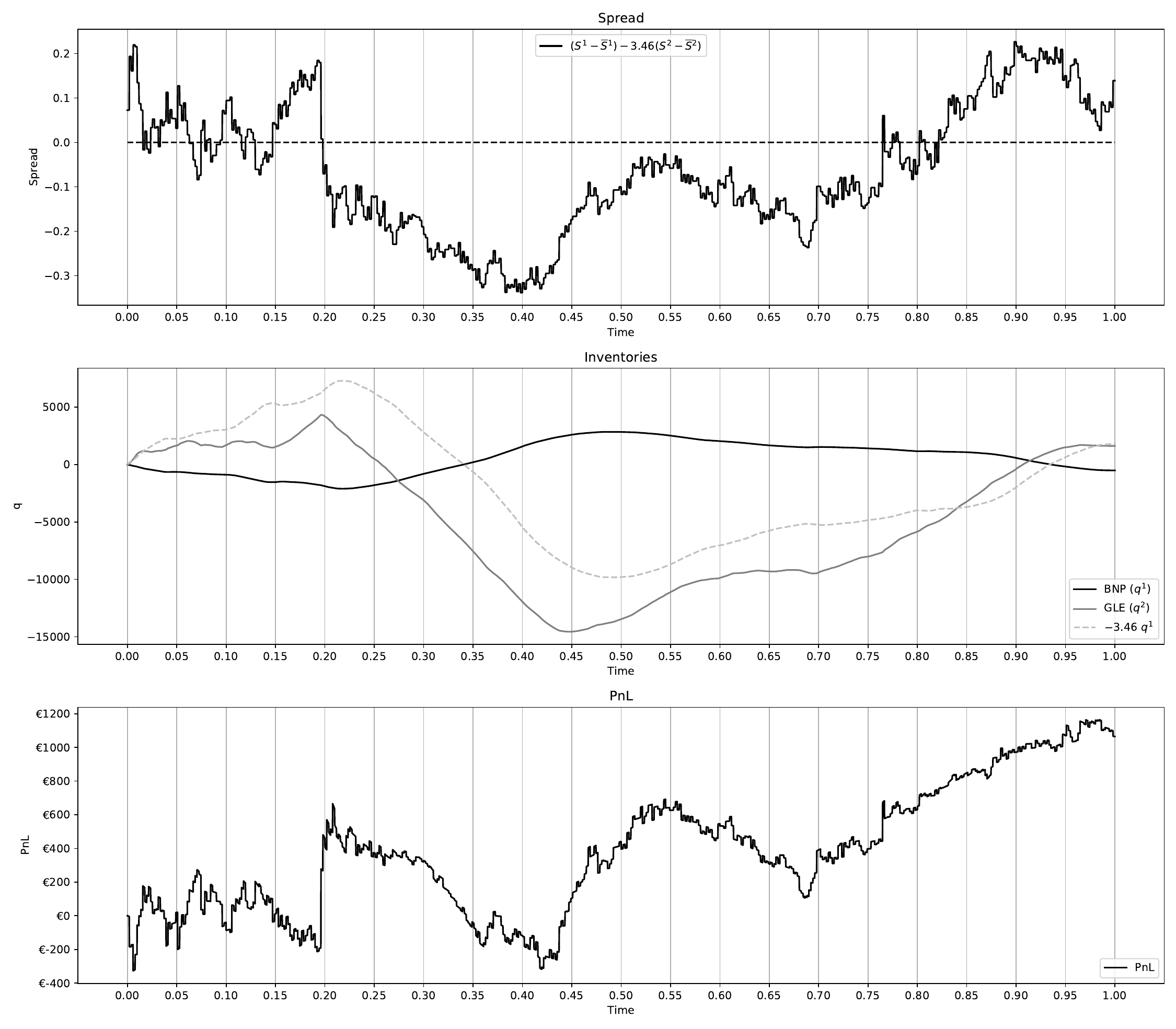}\\
\caption{Top: Trajectory of the spread on August 13, 2021 -- $((S^1_t - \overline{S}^1) - 3.46 (S^2_t - \overline{S}^2))_{t \in [0,T]}$. Middle: Trajectory of the inventories when using the optimal strategy corresponding to the estimated multi-OU process with $\gamma = 2 \cdot 10^{-3}\ \textrm{\euro} ^{-1}$ -- $(q_t)_{t \in [0,T]}$. Bottom: Trajectory of the PnL -- $(X_t + q^\intercal_t S_t)_{t \in [0,T]}$.} \label{ACOU_OU_COMP_BNPGLE_SYNTH}
\end{figure}

To run our algorithm we use the same parameters as in Tables \ref{table:MOUParams} and \ref{table:MDParams} with $\gamma = 2 \cdot 10^{-3}\ \textrm{\euro} ^{-1}$ except that $\Gamma = 0$ and $q_0 = (0,0)$. The results are plotted in Figure
 \ref{ACOU_OU_COMP_BNPGLE_SYNTH}: the spread process $((S^1_t - \overline{S}^1) - 3.46 (S^2_t - \overline{S}^2))_{t \in [0,T]}$, the inventory process $(q_t)_{t \in [0,T]}$ when using the optimal strategy, and the associated trajectory of the PnL, i.e. the process $(X_t + q^\intercal_t S_t)_{t \in [0,T]}$.\\
 
We clearly see that for $\gamma = 2 \cdot 10^{-3}\ \textrm{\euro} ^{-1}$ the optimal strategy is a long/short strategy. As can be seen in Figure
 \ref{ACOU_OU_COMP_BNPGLE_SYNTH} (middle plot), the process $(-3.46 q^1_t)_{t \in [0,T]}$ appears to be in line with $(q^2_t)_{t \in [0,T]}$. This confirms that the strategy consists mainly in ``buying or selling the spread'' depending on the sign of the spread process $((S^1_t - \overline{S}^1) - 3.46 (S^2_t - \overline{S}^2))_{t \in [0,T]}$.\\ 

Finally, as in the previous subsection, we test our optimal strategy on simulated price trajectories (using the same multi-OU parameters as above for both simulating prices and computing the optimal strategies). We plot in Figure~\ref{MtMval_StatArb_md}, the distribution of the final PnL after $1500$ simulations of the price process when using the optimal strategy with the parameters of Table \ref{table:MDParams} (with $\gamma = 2 \cdot 10^{-3}\ \textrm{\euro} ^{-1}$) except that $\Gamma = 0$ and $q_0 = (0,0)$ (because we focus on statistical arbitrage). We see that our strategy allows to make money by taking advantage of the price dynamics: we get a positive final profit of $\textrm{\euro} 2230$ on average, with a standard deviation of $\textrm{\euro} 1145$, and the distribution is, as above, skewed towards profits rather than losses.\footnote{The PnLs are smaller in absolute value than in the one-asset example because the value of $\gamma$ is higher here.}\\

\begin{figure}[!h]\centering
\includegraphics[width=0.97\textwidth]{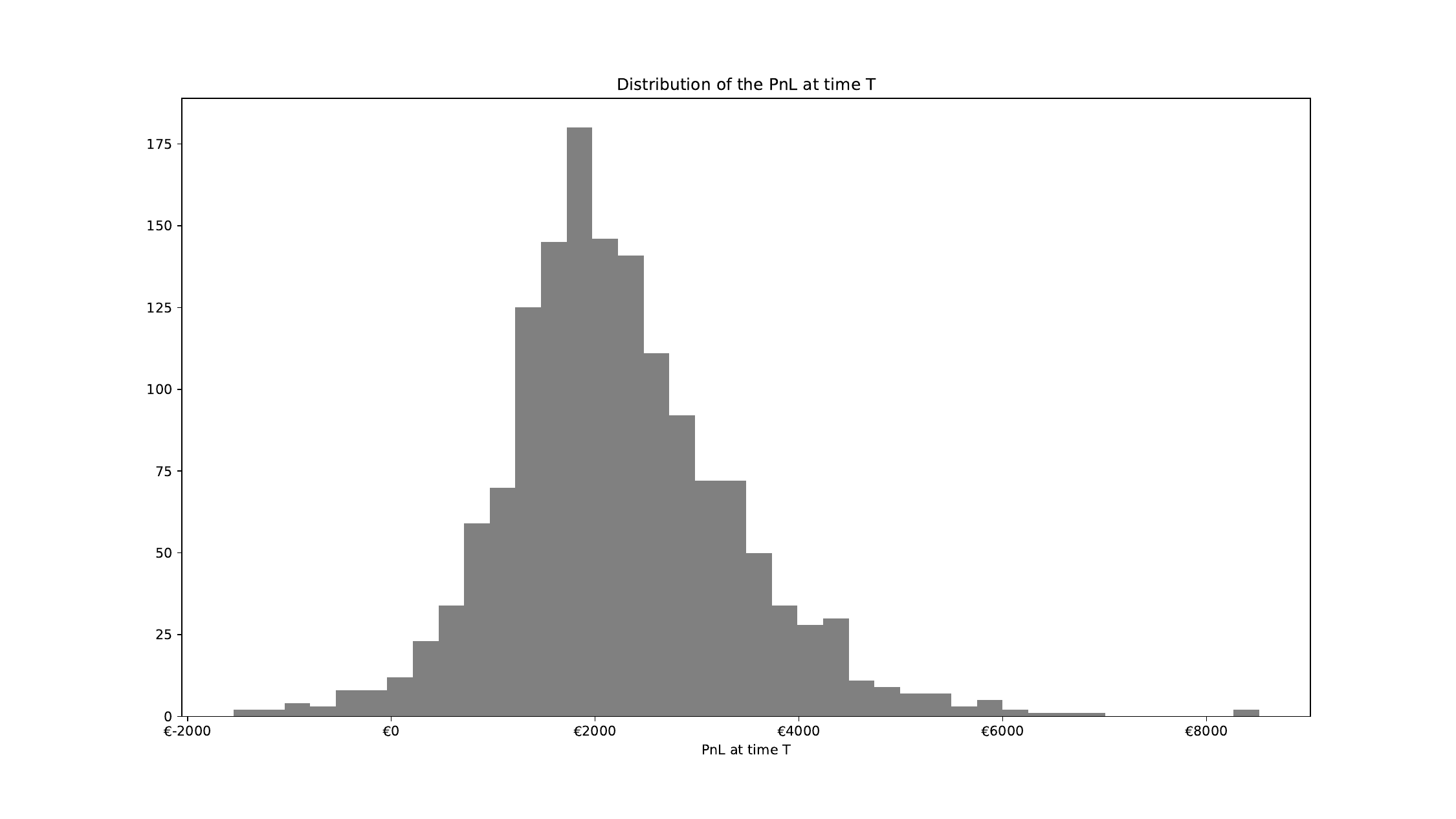}\\
\caption{Distribution of the final PnL for the statistical arbitrage strategy on BNP and GLE (for $1500$ simulations).}\label{MtMval_StatArb_md}
\end{figure}
\vspace{-2mm}
\section*{Conclusion}

In this paper, we have shown how to account for cross-asset co-movements when executing trades in multiple assets. In our model, the agent has an exponential utility and the prices have multivariate Ornstein-Uhlenbeck dynamics, capturing the complex cross-asset dynamics of prices better than correlated Brownian motions only. The advantage of our approach is twofold: (i) it better accounts for risk at the portfolio level, and (ii) it is versatile and can be used for basket execution and statistical arbitrage.\\

The advantages for practitioners are numerous. Considering asset execution within a portfolio allows to manage risk across a wider basket of assets rather than considering only the risk of a single trade. Agents can hold securities on their balance sheets for longer, reducing market impact and execution costs. Moreover, from a regulation point of view, multivariate optimal execution models that naturally offset risks in a portfolio are of great interest. In fact, the new FRTB (Fundamental Review of the Trading Book) regulation will lead practitioners to assess liquidity risks within a centralized risk book for capital requirements. In this context, our model can reduce the liquidity risk of the execution process by taking into account the joint dynamics of the assets.\\

\clearpage
\appendix

\section{Appendix -- Multi-asset optimal execution with correlated Brownian motions and execution costs} \label{annex1}

We consider in this appendix the problem of multi-asset optimal execution in the case where prices are correlated arithmetic Brownian motions. This problem is a special case of that presented in this paper, corresponding to $R = 0$ in the dynamics \eqref{PriceProces} of the asset prices. Therefore, the results presented in the paper apply. However, when $R=0$, as mentioned in Remark \ref{remark1}, the system of ODEs \eqref{ODEsystem} simplifies since a trivial solution to the last five equations is $B=C=D=E=F=0$. Therefore, the problem boils down to finding $A \in C^1 \left([0,T], \mathcal S_d(\mathbb R) \right)$ solution of the following terminal value problem:

\begin{align}
\label{ODEsAnnexA}
\begin{cases}
A'(t) & =\frac{\gamma}{2}\Sigma-A(t)\eta^{-1}A(t) \\
A(T) & =-\Gamma. 
\end{cases}
\end{align}

In this appendix we show that, when $\Sigma \in S^{++}_d(\mathbb R)$, $A$ can be found in closed form.\\

For that purpose, we introduce the change of variables 
$$a(t)=\eta^{-\frac{1}{2}}A(t)\eta^{-\frac{1}{2}} \quad \forall t \in [0,T]$$ and notice that \eqref{ODEsAnnexA} is equivalent to the terminal value problem
\begin{align}
\begin{cases}
\label{ODEsPetitAAnnexA}
a'(t)= & \hat{A}^2-a(t)^{2}  \\
a(T)= & - C, 
\end{cases}
\end{align}

where $\hat{A}=\sqrt{\frac{\gamma}{2}} \left(\eta^{-\frac{1}{2}}\Sigma\eta^{-\frac{1}{2}}\right)^{\frac 12} \in \mathcal S^{++}_d(\mathbb R)$ and $C = \eta^{-\frac{1}{2}}\Gamma\eta^{-\frac{1}{2}} \in \mathcal S^{+}_d(\mathbb R)$. \\

To solve \eqref{ODEsPetitAAnnexA} we use a classical trick for Riccati equations in the following Proposition:
\begin{prop}
\label{propAnnexA}
Let $\xi : [0, T] \rightarrow \mathcal S_d( \mathbb R)$ defined as
\begin{align}
\label{defXiAnnexA}
    \xi\left(t\right) = -\frac{\hat A^{-1}}{2} \left(I - e^{-2 \hat{A} \left(T-t\right)} \right) -  e^{-\hat{A} \left(T-t\right)} \left(C + \hat A \right)^{-1} e^{- \hat{A} \left(T-t\right)}
\end{align}

be the unique solution of the linear ODE 

\begin{align}
\label{OdeXIAnnexA}
    \begin{cases}
    \xi'(t) = \hat A \xi(t) + \xi(t) \hat A + I_d \\
    \xi(T) = -\left(C + \hat A\right) ^{-1}.
    \end{cases}
\end{align}

Then $\forall t \in [0, T]$, $\xi(t)$ is invertible and $a: t \in [0,T] \rightarrow \hat A + \xi(t)^{-1} \in \mathcal S_d(\mathbb R)$  is the unique solution of \eqref{ODEsPetitAAnnexA}.\\

\end{prop}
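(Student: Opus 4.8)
The plan is to use the classical linearisation of a matrix Riccati equation, namely the substitution $a(t) = \hat A + \xi(t)^{-1}$ that turns the quadratic equation \eqref{ODEsPetitAAnnexA} into the linear ODE \eqref{OdeXIAnnexA}. I would begin by checking that the closed form \eqref{defXiAnnexA} does solve \eqref{OdeXIAnnexA}; since the latter is linear with constant coefficient $\hat A$, Cauchy-Lipschitz already gives a unique solution, so it is enough to recover \eqref{defXiAnnexA} by conjugation. Setting $Z(t) = e^{\hat A(T-t)}\xi(t)e^{\hat A(T-t)}$ and using that $\hat A$ commutes with $e^{\hat A(T-t)}$, the homogeneous part cancels and one is left with $Z'(t) = e^{2\hat A(T-t)}$, which integrates explicitly from the terminal value $\xi(T) = -(C+\hat A)^{-1}$ and, after conjugating back, yields exactly \eqref{defXiAnnexA}.

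The crux of the argument is the invertibility of $\xi(t)$ for every $t \in [0,T]$, and here the sign structure of the data is decisive. I would show that $-\xi(t) \in \mathcal S_d^{++}(\mathbb R)$ by treating the two summands of \eqref{defXiAnnexA} separately. The first contribution to $-\xi(t)$, namely $\frac12 \hat A^{-1}(I_d - e^{-2\hat A(T-t)})$, is a product of two commuting positive semi-definite functions of $\hat A$ (indeed $I_d - e^{-2\hat A(T-t)}$ has eigenvalues in $[0,1)$ because $\hat A \in \mathcal S_d^{++}(\mathbb R)$), hence it is positive semi-definite. The second contribution, $e^{-\hat A(T-t)}(C+\hat A)^{-1}e^{-\hat A(T-t)}$, is a congruence of the positive-definite matrix $(C+\hat A)^{-1}$ (positive definite since $C \in \mathcal S_d^{+}(\mathbb R)$ and $\hat A \in \mathcal S_d^{++}(\mathbb R)$) by the invertible matrix $e^{-\hat A(T-t)}$, hence positive definite. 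The sum is therefore positive definite, so $\xi(t)$ has no zero eigenvalue and is invertible.

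Finally, I would verify the Riccati equation itself. Differentiating $a(t) = \hat A + \xi(t)^{-1}$ via $\frac{d}{dt}\xi(t)^{-1} = -\xi(t)^{-1}\xi'(t)\xi(t)^{-1}$ and substituting $\xi'(t) = \hat A \xi(t) + \xi(t)\hat A + I_d$ gives $a'(t) = -\hat A \xi(t)^{-1} - \xi(t)^{-1}\hat A - (\xi(t)^{-1})^2$, which coincides with $\hat A^2 - a(t)^2$ after expanding $a(t)^2 = \hat A^2 + \hat A \xi(t)^{-1} + \xi(t)^{-1}\hat A + (\xi(t)^{-1})^2$; the terminal condition follows from $\xi(T)^{-1} = -(C+\hat A)$, so that $a(T) = \hat A - (C+\hat A) = -C$, and symmetry of $a$ is inherited from that of $\xi$. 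Uniqueness is then immediate, since the map $a \mapsto \hat A^2 - a^2$ is locally Lipschitz, so Cauchy-Lipschitz provides a unique maximal solution while the candidate just constructed is of class $C^1$ on all of $[0,T]$. I expect the invertibility step to be the only genuine obstacle; the remaining steps are a routine integration and a direct appeal to Cauchy-Lipschitz.
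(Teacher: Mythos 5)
Your proposal is correct and follows essentially the same route as the paper's proof: the same decomposition of $-\xi(t)$ into the positive semi-definite part $\tfrac{1}{2}\hat A^{-1}\left(I - e^{-2\hat A(T-t)}\right)$ and the positive definite part $e^{-\hat A(T-t)}\left(C+\hat A\right)^{-1}e^{-\hat A(T-t)}$ to get invertibility, followed by the same differentiation of $a(t) = \hat A + \xi(t)^{-1}$ via $\frac{d}{dt}\xi(t)^{-1} = -\xi(t)^{-1}\xi'(t)\xi(t)^{-1}$ to recover the Riccati equation and the terminal condition $a(T) = -C$. Your explicit conjugation $Z(t) = e^{\hat A(T-t)}\xi(t)e^{\hat A(T-t)}$ to derive the closed form, and your explicit appeal to Cauchy-Lipschitz for uniqueness, merely spell out steps the paper treats as routine verifications.
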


\begin{proof}

First, we easily verify that $\xi$, defined in \eqref{defXiAnnexA}, is solution of the linear ODE \eqref{OdeXIAnnexA}. We see that, for all $t \in [0,T]$, $-\xi(t)$ is the sum of $\frac{\hat A^{-1}}{2} \left(I - e^{-2 \hat{A} \left(T-t\right)} \right) \in \mathcal S^{+}_d(\mathbb R)$ and $e^{-\hat{A} \left(T-t\right)} \left(C + \hat A \right)^{-1} e^{- \hat{A} \left(T-t\right)} \in \mathcal S^{++}_d(\mathbb R)$, so $-\xi(t) \in \mathcal S^{++}_d(\mathbb R)$ and $\xi(t)$ is invertible. \\

We also note that

$$a'(t)  = -\xi(t)^{-1} \xi'(t) \xi(t)^{-1} = -\xi(t)^{-1} \hat A - \hat A \xi(t)^{-1} - \xi(t)^{-2} = \hat A^2 - \left(\hat A + \xi(t)^{-1} \right)^2 = \hat A^2 - a(t)^2$$
and $a(T) = -C$, hence the result.
\end{proof}

We deduce the following corollary:
\begin{crl}
$$\forall t \in [0,T], \quad A(t) = \eta^{\frac12} \left(\hat A - \left( \frac{\hat A^{-1}}{2}  \left( I - e^{-2\hat A(T-t)} \right)    + e^{-\hat A (T-t)} \left(C + \hat A \right)^{-1} e^{-\hat A (T-t)}\right)^{-1} \right) \eta^{\frac12}. $$
\end{crl}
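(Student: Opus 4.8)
The plan is to obtain the formula for $A$ simply by unwinding the change of variables and inserting the closed-form solution for $a$ provided by Proposition \ref{propAnnexA}. All the substantive work (invertibility of $\xi$, verification that $\hat A + \xi^{-1}$ solves \eqref{ODEsPetitAAnnexA}) has already been carried out there, so what remains is essentially algebraic bookkeeping.

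First I would invert the change of variables $a(t) = \eta^{-\frac12} A(t) \eta^{-\frac12}$. Since $\eta \in \mathcal S_d^{++}(\mathbb R)$, its symmetric square root $\eta^{\frac12}$ is invertible, and conjugating gives $A(t) = \eta^{\frac12} a(t) \eta^{\frac12}$ for all $t \in [0,T]$. This identity also records that $A(t) \in \mathcal S_d(\mathbb R)$, because conjugation by the symmetric matrix $\eta^{\frac12}$ preserves symmetry, and that $A$ inherits the $C^1$ regularity of $a$.

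Next I would substitute the explicit solution $a(t) = \hat A + \xi(t)^{-1}$ from Proposition \ref{propAnnexA}. Abbreviating $G(t) = \frac{\hat A^{-1}}{2}\left(I - e^{-2\hat A(T-t)}\right) + e^{-\hat A(T-t)}\left(C + \hat A\right)^{-1} e^{-\hat A(T-t)}$, the definition \eqref{defXiAnnexA} reads $\xi(t) = -G(t)$, so $\xi(t)^{-1} = -G(t)^{-1}$; the invertibility of $G(t)$ is exactly the statement $-\xi(t) = G(t) \in \mathcal S_d^{++}(\mathbb R)$ proved in Proposition \ref{propAnnexA}. Hence $a(t) = \hat A - G(t)^{-1}$, and combining with the previous step yields $A(t) = \eta^{\frac12}\left(\hat A - G(t)^{-1}\right)\eta^{\frac12}$, which is precisely the claimed formula once $G(t)$ is written out. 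There is no genuine obstacle here beyond keeping the symmetric square roots and the nested inverses in the correct order; the only point deserving an explicit line is the invertibility of $G(t)$, which is already available from Proposition \ref{propAnnexA}.
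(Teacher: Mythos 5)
Your proposal is correct and matches the paper's (implicit) deduction exactly: the corollary is obtained by inverting the change of variables $A(t) = \eta^{\frac12} a(t) \eta^{\frac12}$ and substituting $a(t) = \hat A + \xi(t)^{-1}$ with $\xi(t) = -G(t)$, whose invertibility is supplied by Proposition \ref{propAnnexA}. Nothing is missing.
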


\section{Appendix -- Merton portfolio optimization problem under Ornstein-Uhlenbeck dynamics and exponential utility} \label{annexMerton}

\subsection{Modelling framework}

We study in this appendix a Merton problem where prices have multivariate Ornstein-Uhlenbeck dynamics. It is closely related to our problem and can be seen as some form of limit case corresponding to no execution costs (i.e. $L=0$) and no terminal penalty (i.e. $\ell = 0$).\\

The results obtained in this appendix are essential in our proof of existence of a solution to the system of ODEs \eqref{ODEsystem} on $[0,T]$ with terminal condition \eqref{termcondABCDEF} (see Theorem \ref{existence}).\\

As in the body of the paper,\footnote{We consider no permanent market impact in this Appendix.} we consider a model with $d$ assets, whose prices are modelled by a $d$-dimensional stochastic process $(S_t)_{t \in [0,T]} = \left(S^1_t, \ldots, S^d_t \right)_{t \in [0,T]} ^\intercal$ with dynamics
$$dS_t = R(\overline S - S_t) dt + VdW_t,$$
where $\overline S \in \mathbb R^d$, $R \in \mathcal M_d(\mathbb R)$, $V \in \mathcal M_{d,k}(\mathbb R)$, and $(W_t)_{t \in [0,T]} = \left(W^1_t, \ldots, W^k_t \right)^\intercal_{t \in [0,T]}$ is a $k$-dimensional standard Brownian motion (with independent coordinates), for some $k \in \mathbb{N}^*$. As before, we write $\Sigma = VV^\intercal$.\\

We consider a trader optimizing her portfolio over the period $[0,T]$ by controlling at each time the number of each asset in her portfolio, i.e. she controls the $d$-dimensional process $(q_t)_{t \in [0,T]} = \left(q^1_t, \ldots, q^d_t \right)_{t \in [0,T]}^\intercal$, where $q^i_t$ denotes the number of assets $i$ in the portfolio at time $t$, for each $i \in \{1,\ldots, d\}$ ($t \in [0,T]$).\footnote{Unlike what happens in the model of Section 2, our control variable is here the number of assets and not the volume traded. It is only when execution costs are incurred by the trader that trading rates / trading volumes are indeed the relevant control variables.} The process $(q_t)_{t \in [0,T]}$ lies in the space of admissible controls $\mathcal A^{Merton}_0$, where for $t \in [0,T]$, the set $\mathcal A^{Merton}_t$ is defined as 
\begin{align}
    \label{Amerton}
    \mathcal A^{Merton}_t := \left\{ (q_s)_{s \in [t,T]},\ \mathbb R^d\textrm{-valued},\ \mathbb F\textrm{-adapted, satisfying a linear growth condition with respect to } (S_s)_{s \in [t,T]}  \right\}.
\end{align}
We introduce the process $(\mathcal V_t)_{t \in [0,T]}$ modelling the MtM value of the trader's portfolio, i.e.
$$\forall t \in [0,T],\quad \mathcal V_t = \mathcal V_0 + \int_0^t q_s^\intercal dS_s, \qquad \mathcal V_0 \in \mathbb R \ \text{given}.$$

For a given $\gamma>0$, the trader aims at maximizing the following objective function:
\begin{align}
\label{objMerton}
\mathbb E \left[ - e^{-\gamma \mathcal V_T} \right],
\end{align}
over the set of admissible controls $(q_t)_{t \in [0,T]} \in \mathcal A^{Merton}_0.$ We define her value function $\hat u: [0,T] \times \mathbb R \times \mathbb R^d \rightarrow \mathbb R$ as
\begin{align*}
    \hat u(t,\mathcal V,S) = \underset{(q_s)_{s \in [t,T]} \in \mathcal A^{Merton}_t}{\sup} \mathbb E \left[ - e^{-\gamma \mathcal V^{t,\mathcal V,S,q}_T}\right] \qquad \forall (t,\mathcal V,S) \in [0,T] \times \mathbb R \times \mathbb R^d,
\end{align*}
where $(\mathcal V_s^{t,\mathcal V,S,q})_{s \in [t,T]}$ denotes the process defined by
$$d\mathcal V^{t,\mathcal V,S,q}_s = q_s^\intercal dS^{t,S}_s, \quad \mathcal V^{t,\mathcal V,S,q}_t = \mathcal V$$
with $$dS^{t,S}_s = R(\overline S - S^{t,S}_s) ds + VdW_s, \quad S^{t,S}_t = S.$$

\subsection{HJB equation}

The HJB equation associated with Problem \eqref{objMerton} is given by
\begin{align} 
\label{HJBMerton}
0 =  & \, \partial_t \hat w(t,\mathcal V,S) + \nabla_S \hat w(t,\mathcal V,S)^\intercal  R(\overline S - S) + \frac 12 \text{Tr} \left(\Sigma D^2_{SS} \hat w(t,\mathcal V,S) \right)\\
& +\underset{q \in \mathbb R^d}{\sup} \left\{ \partial_{\mathcal V} \hat w(t,{\mathcal V},S) q^\intercal R(\overline S-S) + \frac 12 \partial^2_{{\mathcal V}{\mathcal V}} \hat w(t,{\mathcal V},S) q^\intercal \Sigma q + \partial_{\mathcal V} \nabla_S \hat w(t,{\mathcal V},S)^\intercal \Sigma q \right\} \nonumber
\end{align}
for all $(t,{\mathcal V},S) \in [0,T) \times \mathbb R \times \mathbb R^d$, with terminal condition 
\begin{align}
\label{termcondMerton}
\hat w(T,{\mathcal V},S) = -e^{-\gamma {\mathcal V}} \quad \forall ({\mathcal V},S) \in \mathbb R \times \mathbb R^d.
\end{align}
To solve the above HJB equation, we use the ansatz
\begin{align}
\label{ansatz1Merton}
\hat w(t,{\mathcal V},S) = -e^{-\gamma \left({\mathcal V} + \hat \theta(t,S)\right)}.
\end{align}
Indeed, we have the following proposition:
\begin{prop}
If there exists $\hat \theta \in C^{1,2}([0 ,T] \times \mathbb R^d, \mathbb R)$ solution to
\begin{align}
\label{thetaMerton}
 0 =\ &\partial_t \hat \theta(t,S)  + \frac 12 \text{Tr} \left( \Sigma D^2_{SS} \hat \theta(t,S)  \right)  + \frac{1}{2\gamma } (\overline S - S)^\intercal R^\intercal \Sigma^{-1} R(\overline S-S)
\end{align}
on $[0 ,T)  \times \mathbb R^d$, with terminal condition
\begin{align}
\label{termcondthetaMerton}
\hat \theta(T,S) = 0 \qquad \forall S \in \mathbb R^d,
\end{align}
then the function $\hat w:[0,T] \times \mathbb R \times \mathbb R^d \rightarrow \mathbb R$ defined by 
$$\hat w(t,{\mathcal V},S) = -e^{-\gamma \left({\mathcal V} + \hat \theta(t,S)\right)} \quad \forall (t,{\mathcal V},S) \in [0,T] \times \mathbb R \times \mathbb R^d$$
is a solution to \eqref{HJBMerton} on $[0,T)\times \mathbb R \times \mathbb R^d $ with terminal condition \eqref{termcondMerton}.
\end{prop}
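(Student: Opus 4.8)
The plan is to verify the claim by direct substitution of the ansatz \eqref{ansatz1Merton} into the HJB equation \eqref{HJBMerton}, exactly in the spirit of the first proposition of the paper. First I would express the partial derivatives of $\hat w$ in terms of those of $\hat\theta$. Writing $\hat w(t,\mathcal V,S) = -e^{-\gamma(\mathcal V + \hat\theta(t,S))}$, differentiation gives $\partial_t\hat w = -\gamma(\partial_t\hat\theta)\hat w$, $\partial_{\mathcal V}\hat w = -\gamma\hat w$, $\partial^2_{\mathcal V\mathcal V}\hat w = \gamma^2\hat w$, $\nabla_S\hat w = -\gamma(\nabla_S\hat\theta)\hat w$, $\partial_{\mathcal V}\nabla_S\hat w = \gamma^2(\nabla_S\hat\theta)\hat w$, and $D^2_{SS}\hat w = \hat w\bigl(-\gamma D^2_{SS}\hat\theta + \gamma^2\nabla_S\hat\theta\,\nabla_S\hat\theta^\intercal\bigr)$, so that $\text{Tr}(\Sigma D^2_{SS}\hat w) = \hat w\bigl(-\gamma\,\text{Tr}(\Sigma D^2_{SS}\hat\theta) + \gamma^2\nabla_S\hat\theta^\intercal\Sigma\nabla_S\hat\theta\bigr)$.

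Substituting these and factoring out $\hat w$, the bracket inside the supremum over $q\in\mathbb R^d$ becomes $\hat w\, f(q)$ with $f(q) = -\gamma q^\intercal R(\overline S - S) + \tfrac{\gamma^2}{2}q^\intercal\Sigma q + \gamma^2\nabla_S\hat\theta^\intercal\Sigma q$. The one point requiring genuine care is that $\hat w<0$, so the supremum over $q$ equals $\hat w$ times the \emph{infimum} of $f$. Since $\Sigma\in\mathcal S^{++}_d(\mathbb R)$ (implicit from the presence of $\Sigma^{-1}$ in \eqref{thetaMerton}), $f$ is strictly convex with unique minimizer $q^* = \tfrac1\gamma\Sigma^{-1}R(\overline S - S) - \nabla_S\hat\theta$; plugging this back yields the minimal value $-\tfrac12(\overline S - S)^\intercal R^\intercal\Sigma^{-1}R(\overline S - S) + \gamma\nabla_S\hat\theta^\intercal R(\overline S - S) - \tfrac{\gamma^2}{2}\nabla_S\hat\theta^\intercal\Sigma\nabla_S\hat\theta$.

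Then I would collect all the terms. The cross term $\gamma\nabla_S\hat\theta^\intercal R(\overline S - S)$ produced by the infimum cancels the drift contribution $\nabla_S\hat w^\intercal R(\overline S - S) = -\gamma\hat w\,\nabla_S\hat\theta^\intercal R(\overline S - S)$, and the two occurrences of $\tfrac{\gamma^2}{2}\nabla_S\hat\theta^\intercal\Sigma\nabla_S\hat\theta$ (one from $\text{Tr}(\Sigma D^2_{SS}\hat w)$, one from the infimum) cancel as well. What survives is exactly $-\gamma\hat w$ multiplied by $\partial_t\hat\theta + \tfrac12\text{Tr}(\Sigma D^2_{SS}\hat\theta) + \tfrac{1}{2\gamma}(\overline S - S)^\intercal R^\intercal\Sigma^{-1}R(\overline S - S)$, which vanishes precisely because $\hat\theta$ solves \eqref{thetaMerton}. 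This establishes that $\hat w$ solves \eqref{HJBMerton} on $[0,T)\times\mathbb R\times\mathbb R^d$.

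Finally, the terminal condition is immediate: $\hat w(T,\mathcal V,S) = -e^{-\gamma(\mathcal V + \hat\theta(T,S))} = -e^{-\gamma\mathcal V}$ by \eqref{termcondthetaMerton}, which matches \eqref{termcondMerton}. I do not expect any real obstacle: the argument is a routine verification. The only step deserving attention is the sign of $\hat w$, which converts the supremum in the HJB equation into an infimum of the quadratic in $q$; overlooking it would flip the sign of the cross and variance terms and destroy the cancellation that reduces the equation to \eqref{thetaMerton}.
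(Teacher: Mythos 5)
Your proof is correct and follows essentially the same route as the paper: direct substitution of the ansatz into \eqref{HJBMerton}, optimization of the quadratic in $q$ at $q^* = \frac{1}{\gamma}\Sigma^{-1}R(\overline S - S) - \nabla_S\hat\theta$, and cancellation of the cross and variance terms, reducing the equation to \eqref{thetaMerton}. The only cosmetic difference is bookkeeping of the sign: the paper factors out the positive quantity $-\gamma\hat w$ so the supremum remains a supremum of a concave quadratic, whereas you factor out the negative $\hat w$ and convert the supremum into an infimum of a convex quadratic --- the two computations are identical.
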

\begin{proof}
Let $\hat \theta:[0 ,T] \times \mathbb R^d  \rightarrow \mathbb R$ be a solution to \eqref{thetaMerton} on $[0,T) \times \mathbb R^d$ with terminal condition \eqref{termcondthetaMerton}, then we have for all $(t,{\mathcal V},S) \in [0,T) \times \mathbb R \times \mathbb R^d $:
\begin{align*}
& \partial_t \hat w(t,{\mathcal V},S) + \nabla_S \hat w(t,{\mathcal V},S)^\intercal  R(\overline S - S) + \frac 12 \text{Tr} \left(\Sigma D^2_{SS} \hat w(t,{\mathcal V},S)  \right)\nonumber\\
& +\underset{q \in \mathbb R^d}{\sup} \left\{ \partial_{\mathcal V} \hat w(t,{\mathcal V},S) q^\intercal R(\overline S-S) + \frac 12 \partial^2_{{\mathcal V}{\mathcal V}} \hat w(t,{\mathcal V},S) q^\intercal \Sigma q + \partial_{\mathcal V} \nabla_S \hat w(t,{\mathcal V},S)^\intercal \Sigma q \right\}  \\  
=\ & -\gamma \partial_t \hat \theta(t,S)\hat w(t,{\mathcal V},S) -\gamma\nabla_S \hat \theta (t,S)R(\overline S - S) \hat w(t,{\mathcal V},S) - \gamma \hat w(t,{\mathcal V},S) \frac 12 \text{Tr} \left(\Sigma D^2_{SS} \hat \theta(t,S)  \right)  \nonumber\\
& + \frac{\gamma^2}{2} \hat w(t,{\mathcal V},S) \nabla_S \hat \theta (t,S)^\intercal \Sigma \nabla_S \hat \theta (t,S)\\
& +\underset{q \in \mathbb R^d}{\sup} \left\{ -\gamma \hat w(t,{\mathcal V},S) q^\intercal R(\overline S-S) + \frac {\gamma^2}2 \hat w(t,{\mathcal V},S) q^\intercal \Sigma q + \gamma^2 \hat w(t,{\mathcal V},S) \nabla_S \hat \theta(t,{\mathcal V},S)^\intercal \Sigma q \right\}   \\
=\ & -\gamma \hat w(t,{\mathcal V},S) \Bigg( \partial_t \hat \theta(t,S)+\nabla_S \hat \theta (t,S)R(\overline S - S)+  \frac 12 \text{Tr} \left(\Sigma D^2_{SS} \hat \theta(t,S) \right) - \frac{\gamma}{2} \nabla_S \hat \theta (t,S)^\intercal \Sigma \nabla_S \hat \theta (t,S)\\
& +\underset{q \in \mathbb R^d}{\sup} \left\{  q^\intercal \left( R(\overline S-S) - \gamma \Sigma \nabla_S \hat \theta(t,S) \right) - \frac {\gamma}2  q^\intercal \Sigma q \right\} \Bigg).
\end{align*}
The supremum in the above equation is reached for $q = q^*(t,S) = \frac 1{\gamma} \Sigma^{-1}R(\overline S-S) - \nabla_S \hat \theta(t,S)$,
and we obtain therefore after simplifications:
\begin{align*}
& \partial_t \hat w(t,{\mathcal V},S) + \nabla_S \hat w(t,{\mathcal V},S)^\intercal  R(\overline S - S) + \frac 12 \text{Tr} \left(\Sigma D^2_{SS} \hat w(t,{\mathcal V},S)\right)\nonumber\\
& +\underset{q \in \mathbb R^d}{\sup} \left\{ \partial_{\mathcal V} \hat w(t,{\mathcal V},S) q^\intercal R(\overline S-S) + \frac 12 \partial^2_{{\mathcal V}{\mathcal V}} \hat w(t,{\mathcal V},S) q^\intercal \Sigma q + \partial_{\mathcal V} \nabla_S \hat w(t,{\mathcal V},S)^\intercal \Sigma q \right\}  \\  
=\ & -\gamma \hat w(t,{\mathcal V},S) \Bigg( \partial_t \hat \theta(t,S)+  \frac 12 \text{Tr} \left(\Sigma D^2_{SS} \hat \theta(t,S) \right) + \frac{1}{2\gamma } (\overline S - S)^\intercal R^\intercal \Sigma^{-1} R(\overline S-S) \Bigg) \\
=\ & 0.
\end{align*}
As $\hat w$ satisfies the terminal condition \eqref{termcondMerton}, the result is proved.
\end{proof}

We now use a second ansatz and look for a function $\hat \theta$ solution to \eqref{thetaMerton} on $[0,T) \times \mathbb R^d$ with terminal condition \eqref{termcondthetaMerton} of the following form:
\begin{align}
\label{ansatz2Merton}
\hat \theta(t,S) = S^\intercal \hat C(t) S + \hat E(t)^\intercal S + \hat F(t).
\end{align}
We have indeed the following proposition:
\begin{prop}
Assume there exists $\hat C \in C^1 \left([0,T], \mathcal S_d(\mathbb R) \right)$, $\hat E \in C^1 \left([0,T], \mathbb R^d \right)$, $\hat F \in C^1 \left([0,T], \mathbb R \right)$ satisfying the system of ODEs 
\begin{align}
\label{ODEMerton}
\begin{cases}
\hat C'(t) &= - \frac 1{2\gamma} R^\intercal \Sigma^{-1} R \\
\hat E'(t) &= \frac 1{\gamma} R^\intercal \Sigma^{-1} R \overline S \\
\hat F'(t) &= - \text{Tr} \left( \hat C(t) \Sigma \right) - \frac{1}{2\gamma} \overline S^\intercal R^\intercal \Sigma^{-1} R \overline S,
\end{cases}
\end{align}
with terminal condition
\begin{align}
\label{termcondCEFMerton}
\hat C(T) = \hat E(T) = \hat F(T) = 0.
\end{align}
Then the function $\hat \theta$ defined by \eqref{ansatz2Merton} satisfies \eqref{thetaMerton} on $[0,T) \times \mathbb R^d $ with terminal condition \eqref{termcondthetaMerton}.
\end{prop}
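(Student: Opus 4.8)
The plan is to proceed exactly as in the proof of Proposition \ref{prop1}: substitute the quadratic ansatz \eqref{ansatz2Merton} into the left-hand side of \eqref{thetaMerton} and show that it collapses to zero precisely when the coefficient functions $\hat C$, $\hat E$, $\hat F$ solve \eqref{ODEMerton}. Since $\hat \theta$ is a polynomial of degree at most two in $S$, each term appearing in \eqref{thetaMerton} is again such a polynomial, so the PDE reduces to a single polynomial identity in $S$ that must hold for every $S \in \mathbb R^d$; this is equivalent to the simultaneous vanishing of the quadratic, linear, and constant coefficients.

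First I would compute the three building blocks from the ansatz. Using the symmetry of $\hat C(t)$, one has $\partial_t \hat \theta(t,S) = S^\intercal \hat C'(t) S + \hat E'(t)^\intercal S + \hat F'(t)$, $\nabla_S \hat \theta(t,S) = 2\hat C(t)S + \hat E(t)$, and $D^2_{SS}\hat \theta(t,S) = 2\hat C(t)$, so that $\frac12 \text{Tr}(\Sigma D^2_{SS}\hat \theta(t,S)) = \text{Tr}(\hat C(t)\Sigma)$. Next I would expand the running term, writing $M := R^\intercal \Sigma^{-1} R \in \mathcal S_d(\mathbb R)$ and using its symmetry to obtain $\frac{1}{2\gamma}(\overline S - S)^\intercal M (\overline S - S) = \frac{1}{2\gamma} S^\intercal M S - \frac1\gamma (M\overline S)^\intercal S + \frac{1}{2\gamma}\overline S^\intercal M \overline S$.

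Collecting all contributions and grouping by the degree in $S$, the quadratic part reads $S^\intercal\!\left(\hat C'(t) + \frac{1}{2\gamma} M\right)\!S$, the linear part reads $\left(\hat E'(t) - \frac1\gamma M\overline S\right)^\intercal S$, and the constant part reads $\hat F'(t) + \text{Tr}(\hat C(t)\Sigma) + \frac{1}{2\gamma}\overline S^\intercal M \overline S$. Each of these vanishes identically if and only if the corresponding line of the system \eqref{ODEMerton} holds, which is exactly our assumption; the terminal condition \eqref{termcondthetaMerton} then follows immediately from \eqref{termcondCEFMerton}, since $\hat C(T) = \hat E(T) = \hat F(T) = 0$ gives $\hat \theta(T,S) = 0$ for all $S$. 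There is no genuine obstacle here, as the argument is a routine verification; the only points requiring care are the use of the symmetry of both $\hat C(t)$ and $M$ when combining the cross terms, and the implicit invertibility of $\Sigma$ needed to make sense of $\Sigma^{-1}$ in \eqref{thetaMerton}.
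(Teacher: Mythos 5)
Your proposal is correct and follows essentially the same route as the paper: substitute the quadratic ansatz \eqref{ansatz2Merton} into \eqref{thetaMerton}, use the ODE system \eqref{ODEMerton} to make the resulting polynomial in $S$ vanish, and read off the terminal condition from \eqref{termcondCEFMerton}. Your version is merely a bit more explicit than the paper's (grouping by degree in $S$ and noting the symmetry of $\hat C(t)$ and $R^\intercal \Sigma^{-1} R$), but the argument is the same routine verification.
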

\begin{proof}
Let us consider $\hat C \in C^1 \left([0,T], \mathcal S_d(\mathbb R) \right)$, $\hat E \in C^1 \left([0,T], \mathbb R^d \right)$, $\hat F \in C^1 \left([0,T], \mathbb R \right)$ verifying \eqref{ODEMerton} on $[0,T)$ with terminal condition \eqref{termcondCEFMerton}. Let us consider $\hat \theta: [0,T] \times \mathbb R^d \rightarrow \mathbb R$ defined by \eqref{ansatz2Merton}. Then we obtain for all $(t,S) \in [0,T) \times \mathbb R^d$:
\begin{align*}
& \partial_t \hat \theta(t,S)  + \frac 12 \text{Tr} \left(\Sigma D^2_{SS} \hat \theta(t,S) \right)  + \frac{1}{2\gamma } (\overline S - S)^\intercal R^\intercal \Sigma^{-1} R(\overline S-S)\\
= \quad & S^\intercal \hat C'(t) S + {\hat E'(t)}^\intercal S + \hat F'(t) + \text{Tr} \left(\hat C(t) \Sigma \right)+ \frac{1}{2\gamma } (\overline S - S)^\intercal R^\intercal \Sigma^{-1} R(\overline S-S)\\
= \quad & 0.
\end{align*}
As it is straightforward to verify that $\hat \theta$ satisfies the terminal condition \eqref{termcondthetaMerton}, the result is proved.
\end{proof}

It is straightforward to see that there exists a unique solution $\hat C \in C^1 \left([0,T], \mathcal S_d(\mathbb R) \right)$, $\hat E \in C^1 \left([0,T], \mathbb R^d \right)$, $\hat F \in C^1 \left([0,T], \mathbb R \right)$ to \eqref{ODEMerton} with terminal condition \eqref{termcondCEFMerton}. We can then prove the following verification theorem.

\begin{thm}
We consider the functions $\hat C \in C^1 \left([0,T], \mathcal S_d(\mathbb R) \right)$, $\hat E \in C^1 \left([0,T], \mathbb R^d \right)$, $\hat F \in C^1 \left([0,T], \mathbb R \right)$ solutions to \eqref{ODEMerton} with terminal condition 
$$\hat C(T) = \hat E(T) = \hat F(T) = 0,$$
i.e. for all $t \in [0,T]$,
$$
\begin{cases}
\hat C(t) = \frac 1{2\gamma} (T-t)  R^\intercal \Sigma^{-1} R,\\
\hat E(t) = - \frac 1{\gamma}(T-t)  R^\intercal \Sigma^{-1} R \overline S,\\
\hat F(t) = \frac 1{4\gamma} (T-t)^2 \text{Tr}\left( R^\intercal \Sigma^{-1} R \Sigma \right) + \frac{1}{2\gamma} (T-t)  \overline S^\intercal R^\intercal \Sigma^{-1} R.
\end{cases}
$$
We consider the function $\hat \theta$ defined by
$$\hat \theta(t,S) = S^\intercal \hat C(t) S + \hat E(t)^\intercal S + \hat F(t),$$ and the associated function $\hat w$ defined by 
$$\hat w(t,\mathcal V,S) = -e^{-\gamma \left(\mathcal V + \hat \theta(t,S)\right)}.$$
For all $(t,\mathcal V,S) \in [0,T] \times \mathbb R \times \mathbb R^d $ and $q = (q_s)_{s \in [t,T]} \in \mathcal A^{Merton}_t$, we have 
\begin{align}
\label{suboptMerton}
\mathbb E \left[ -e^{-\gamma \mathcal V^{t,\mathcal V,S,q}_T} \right] \le \hat w(t,\mathcal V,S).
\end{align}
Moreover, equality is obtained in \eqref{suboptMerton} by taking the optimal control $(q^*_s)_{s \in [t,T]} \in \mathcal A^{Merton}_t$ given by the closed-loop feedback formula
\begin{align}
\label{optcontrolMerto}
q^*_s = \frac 1{\gamma } (I_d + (T-s)R^\intercal) \Sigma^{-1}R(\overline S-S_s^{t,S}).
\end{align} 
In particular, $\hat w=\hat u.$
\end{thm}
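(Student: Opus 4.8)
The plan is to establish the statement by a verification argument that mirrors, step by step, the proof of Theorem~\ref{verification}, with $\mathcal V$ now playing the role of the wealth process and the control $q$ entering the diffusion of the state rather than only its drift. The starting point is that, by the two preceding propositions, the function $\hat w$ built from the explicit $\hat C, \hat E, \hat F$ solves the HJB equation \eqref{HJBMerton} on $[0,T) \times \mathbb R \times \mathbb R^d$ together with the terminal condition \eqref{termcondMerton}.

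First I would check that the candidate feedback control $(q^*_s)_{s\in[t,T]}$ defined by \eqref{optcontrolMerto} belongs to $\mathcal A^{Merton}_t$. Unlike in Theorem~\ref{verification}, where admissibility required solving a linear Cauchy problem for the inventory, here $q^*_s$ is already an explicit affine function of $S^{t,S}_s$ whose coefficients are continuous on the compact interval $[t,T]$; the linear growth condition of Definition~\ref{defiLinearGrowth} is therefore immediate.

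Next, for an arbitrary $q = (q_s)_{s\in[t,T]} \in \mathcal A^{Merton}_t$, I would apply It\^o's formula to $s \mapsto \hat w(s, \mathcal V^{t,\mathcal V,S,q}_s, S^{t,S}_s)$. Writing the drift as $\mathcal L^q \hat w$, the HJB equation gives $\mathcal L^q \hat w \le 0$ pointwise, with equality exactly when $q$ realises the supremum in \eqref{HJBMerton}, i.e.\ for $q = q^*$. Using the ansatz $\hat w = -e^{-\gamma(\mathcal V + \hat\theta)}$, the martingale part of $d\hat w$ is $\hat w\, \kappa^\intercal V\, dW$ with $\kappa_s = -\gamma\big(q_s + \nabla_S \hat\theta(s,S^{t,S}_s)\big) = -\gamma\big(q_s + 2\hat C(s)S^{t,S}_s + \hat E(s)\big)$. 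Introducing the stochastic exponential $\xi_{t,s} = \exp\!\big(\int_t^s \kappa_\varrho^\intercal V\, dW_\varrho - \tfrac12 \int_t^s \kappa_\varrho^\intercal \Sigma \kappa_\varrho\, d\varrho\big)$, one gets $d\big(\hat w\, \xi_{t,s}^{-1}\big) = \xi_{t,s}^{-1}\, \mathcal L^q \hat w\, ds \le 0$, so $s \mapsto \hat w\,\xi_{t,s}^{-1}$ is nonincreasing (constant for $q^*$). Evaluating at $s=T$, using \eqref{termcondMerton}, and taking expectations yields $\mathbb E[-e^{-\gamma \mathcal V^{t,\mathcal V,S,q}_T}] \le \hat w(t,\mathcal V,S)\, \mathbb E[\xi_{t,T}]$, with equality for $q^*$.

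The main obstacle is to upgrade the generic supermartingale bound $\mathbb E[\xi_{t,T}] \le 1$ to the equality $\mathbb E[\xi_{t,T}] = 1$, i.e.\ to prove that $(\xi_{t,s})_{s\in[t,T]}$ is a genuine martingale and not merely a local one; this is exactly where the linear growth built into $\mathcal A^{Merton}_t$ is used. Since $(S^{t,S}_s)$ is an Ornstein-Uhlenbeck process, $\sup_{\tau\in[t,s]}\|S^{t,S}_\tau\|$ is controlled by $\|S\|$ plus a term growing like $\sup_{\tau\in[t,s]}\|W_\tau - W_t\|$; combined with the linear growth of $q$, this yields a constant $C$ with $\sup_{s\in[t,T]}\|\kappa_s\|^2 \le C\big(1 + \sup_{s\in[t,T]}\|W_s - W_t\|^2\big)$ almost surely. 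Standard Gaussian estimates then provide $\epsilon>0$ for which $\mathbb E\big[\exp(\tfrac12\int_s^{(s+\epsilon)\wedge T}\kappa_\varrho^\intercal \Sigma\kappa_\varrho\, d\varrho)\big] < \infty$ uniformly in $s$, so Bene\v{s}'s criterion (as cited for Theorem~\ref{verification}) applies and gives the martingale property on $[t,T]$. Hence $\mathbb E[\xi_{t,T}] = 1$, the inequality \eqref{suboptMerton} follows, and since it is an equality for the admissible control $q^*$ we conclude $\hat w = \hat u$ on $[0,T]\times\mathbb R\times\mathbb R^d$.
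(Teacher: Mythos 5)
Your proposal is correct and follows essentially the same route as the paper's own proof: admissibility of the explicit affine feedback $q^*$, It\^o's formula combined with the HJB inequality to show that $s \mapsto \hat w\, \xi_{t,s}^{-1}$ is nonincreasing (constant along $q^*$), and the Bene\v{s} criterion, enabled by the linear-growth bound $\sup_{s}\|\kappa_s\|^2 \le C\big(1+\sup_s\|W_s-W_t\|^2\big)$, to obtain the true martingale property $\mathbb E[\xi_{t,T}]=1$. Your observation that the generic supermartingale bound $\mathbb E[\xi_{t,T}]\le 1$ is useless here (it points the wrong way since $\hat w<0$) and must be upgraded to equality is precisely the point the paper's proof handles with the same argument.
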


\begin{proof}
It is obvious that $(q^*_s)_{s \in [t,T]} \in \mathcal A^{Merton}_t$ (i.e., $(q^*_s)_{s \in [t,T]}$ is well defined and admissible):
\begin{align*}
	\exists C_{t,T} > 0, \forall s \in [t,T], \hspace{1cm} \parallel q^{*}_s \parallel \leqslant C_{t,T} \left( 1+\underset{\tau \in [t,s]}{\sup} \parallel S_\tau \parallel \right).
\end{align*}

Let us consider $(t,\mathcal V,S) \in [0,T] \times \mathbb R \times \mathbb R^d$ and $q = (q_s)_{s \in [t,T]} \in \mathcal A^{Merton}_t$. We now prove that 
\begin{align*}
    \mathbb E \left[\hat w\left(T,\mathcal V^{t,\mathcal V,S}_T,S^{t,S}_T\right) \right] \leqslant \hat  w(t,\mathcal V,S).
\end{align*}
We use the following notations for readability
\begin{align*}
    \forall s \in [t,T], \hspace{0.5cm} \hat w\left(s,\mathcal V^{t,\mathcal V,S,q}_s,S^{t,S}_s\right)=\hat w^{t,\mathcal V,S,q}_s,
\end{align*}
\begin{align*}
    \forall s \in [t,T], \hspace{0.5cm} \hat  \theta (s,S^{t,S}_s)=\hat \theta^{t,S}_s.
\end{align*}

By Itô's formula, we have $\forall s \in [0,T]$
\begin{align*}
    d\hat w^{t,\mathcal V,S,q}_s = \mathcal{L}^{q}\hat w^{t,\mathcal V,S,q}_s ds + \left(\partial_{\mathcal V}\hat w^{t,\mathcal V,S,q}_s q_s + \nabla_S \hat w^{t,\mathcal V,S,q}_s  \right)^{\intercal} VdW_s,
\end{align*}
where
\begin{eqnarray*}
    \mathcal{L}^{q}\hat w^{t,\mathcal V,S,q}_s&=&\partial_t w^{t,\mathcal V,S,q}_s + \left( \nabla_S   {\hat w}^{t,\mathcal V,S,q}_s\right)^\intercal R(\overline S-S) + \partial_{\mathcal V} \hat w^{t,\mathcal V,S,q}_s q^\intercal_s R(\overline S-S) + \frac 12 \text{Tr} \left( \Sigma D^2_{SS} \hat w^{t,\mathcal V,S,q}_s \right)\\
    && + \frac 12 \partial^2_{\mathcal V\mathcal V} \hat w^{t,\mathcal V,S,q}_s q_s^\intercal \Sigma q_s + \left(  \partial_{\mathcal V} \nabla_S   {\hat w}^{t,\mathcal V,S,q}_s\right)^\intercal \Sigma q_s.
\end{eqnarray*}
We have
\begin{align*}
    \nabla_S \hat w^{t,\mathcal V,S,q}_s&=-\gamma \hat w^{t,\mathcal V,S,q}_s  \nabla_S \theta ^{t,S}_s  \nonumber\\
    & = -\gamma \hat w^{t,\mathcal V,S,q}_s  \left(2 \hat C(s) S^{t,S}_s + \hat E(s) \right),
\end{align*}
and
\begin{align*}
    \partial_{\mathcal V} \hat w^{t,\mathcal V,S,q}_s&=-\gamma \hat w^{t,\mathcal V,S,q}_s.
\end{align*}
We define for all $s \in [t, T]$
\begin{align*}
    \kappa^{q}_s &= -\gamma \left(q_s + 2 \hat C(s) S^{t,S}_s + \hat E(s) \right), \\
    \xi^{q}_{t,s} &= \exp \left( \int_{t}^{s} {\kappa^{q}_\varrho}^\intercal VdW_\varrho - \frac12 \int_{t}^{s} {\kappa^{q}_\varrho}^\intercal \Sigma \kappa^{q}_\varrho d\varrho \right).
\end{align*}
We then have 
\begin{align*}
    d \left( \hat w^{t,\mathcal V,S,q}_s \left( \xi^{q}_{t,s} \right)^{-1} \right) = \left( \xi^{q}_{t,s} \right)^{-1} \mathcal{L}^{q} \hat w^{t,\mathcal V,S,q}_s ds.
\end{align*}
By definition of $\hat w$, $\mathcal{L}^{q}\hat w^{t,\mathcal V,S,q}_s \leqslant 0$. \\
\\
Moreover, equality holds for the control reaching the supremum in \eqref{HJBMerton}. It is easy to see that the supremum is reached for the unique value
\begin{align*}
    q_s &= \frac1{\gamma} \Sigma^{-1}R(\overline S-S^{t,S}_s) - \nabla_S \hat \theta(t,S^{t,S}_s)   \\
    &=\frac1{\gamma} \Sigma^{-1}R(\overline S-S^{t,S}_s) - 2 \hat C(s)S^{t,S}_s - \hat E(s)\\
    &= \frac 1{\gamma } (I_d + (T-s)R^\intercal) \Sigma^{-1}R(\overline S-S_s^{t,S})
\end{align*}
which corresponds to $(q_s)_{s \in [t,T]} = (q^*_s)_{s \in [t,T]}$.\\
\\ \\
As a consequence, $\left(\hat w^{t,\mathcal V,S,q}_s \left( \xi^{q}_{t,s} \right)^{-1}\right)_{s \in [t,T]}$ is nonincreasing and therefore
\begin{align*}
   \hat  w\left(T,\mathcal V^{t,\mathcal V,S,q}_T,S^{t,S}_T\right) \leqslant \hat w(t,\mathcal V,S) \xi^{q}_{t,T},
\end{align*}
with equality when $(q_s)_{s \in [t,T]} = (q^*_s)_{s \in [t,T]}$.
\\
\\
Taking expectation we get 
\begin{align*}
    \mathbb E \left[ \hat  w\left(T,\mathcal V^{t,\mathcal V,S,q}_T,S^{t,S}_T\right) \right] \leqslant \hat w(t,\mathcal V,S)   \mathbb E \left[\xi^{q}_{t,T}\right].
\end{align*}
We proceed to prove that $\mathbb E \left[\xi^{q}_{t,T} \right]$ is equal to 1. To do so, we use that $\xi^{q}_{t,t} = 1$ and prove that $(\xi^{q}_{t,s})_{s \in [t,T]}$ is a martingale under $\left(\mathbb P; \mathbb F = (\mathcal F_s)_{s \in [t,T]} \right)$. \\

We know that $(q^{t,q}_s)_{s \in [t,T]}$ satisfies a linear growth condition with respect to $(S^{t,S}_s)_{s \in [t,T]}$. Given the form of $\kappa$ one can easily show that there exists a constant $C$ such that
\begin{align*}
    \underset{s \in [t,T]}{\sup} {\parallel \kappa^{q}_s \parallel}^2 \leqslant C \left(1+   \underset{s \in [t,T]}{\sup} {\parallel W_s - W_t \parallel}^2  \right).
\end{align*}
By using classical properties of the Brownian motion, we prove that
\begin{align*}
    \exists \epsilon >0, \forall s \in [t,T], \hspace{0.3cm} \mathbb E \left[ \exp \left( \frac12 \int_{s}^{ \left(s+\epsilon \right) \wedge T}  {\kappa^{q}_\varrho}^\intercal \Sigma \kappa^{q}_\varrho d\varrho \right) \right] < +\infty.
\end{align*}
Using a classical trick due to Bene\v{s} (see \cite{karatzas2014brownian}, Chapter 5), we see that $(\xi^{q}_{t,s})_{s \in [t,T]}$ is a martingale under $\left(\mathbb P; \mathbb F = (\mathcal F_s)_{s \in [t,T]} \right)$. 
\\
\\
We obtain 
\begin{align*}
    \mathbb E \left[\hat  w\left(T,\mathcal V^{t,\mathcal V,S,q}_T,S^{t,S}_T\right)\right] \leqslant  \hat w(t,\mathcal V,S),
\end{align*}
with equality when $(q_s)_{s \in [t,T]} = (q^*_s)_{s \in [t,T]}$.
\\
\\
We conclude that 
\begin{align*}
    \hat u\left( t,\mathcal V,S\right) &= \underset{(q_s)_{s \in [t,T]} \in \mathcal A^{Merton}_t}{\sup} \mathbb E \left[- \exp \left(-\gamma V_T^{t,\mathcal V,S,q} \right) \right] \\
    &=\mathbb E \left[- \exp \left(-\gamma V_T^{t,\mathcal V,S,q^*} \right) \right]\\
    &=\hat w(t,\mathcal V,S).
\end{align*}

\end{proof}

\bibliographystyle{plain}

\end{document}